\documentclass[a4paper]{article}
\usepackage{lmodern}
\usepackage[top=1in, bottom=1in, left=1in, right=1in]{geometry}

\usepackage{amsmath,amssymb,amsthm}
\usepackage{wrapfig}
\usepackage{tikz}
\usetikzlibrary{positioning}

\newtheorem{theorem}{Theorem}
\newtheorem{conjecture}{Conjecture}
\newtheorem{definition}{Definition}
\newtheorem{remark}{Remark}

\newtheorem{lemma}{Lemma}
\newtheorem{fact}[lemma]{Fact}
\newtheorem{corollary}{Corollary}

\newtheorem*{lemma*}{Lemma}
\newtheorem*{corollary*}{Corollary}
\newtheorem*{theorem*}{Theorem}

\usepackage{xcolor}
\usepackage[normalem]{ulem}

\newcommand{\apndeg}{\mathsf{N}}

\usepackage{hyperref}

\author{%
\begin{tabular}{c@{\hspace{2.2em}}c}
Samruddhi Pednekar & Supartha Podder
\end{tabular}\\[0.9em]
{\small Stony Brook University, New York, USA}\\[0.4em]
{\small \texttt{\char`\{spednekar, supartha\char`\}@cs.stonybrook.edu}}
}

\title{On the Approximate Non-Deterministic Degree of Total Boolean Functions}

\begin{document}

\maketitle

\begin{abstract}
The approximate non-deterministic degree of a Boolean function $f$,  denoted $\mathsf{ndeg}_\epsilon(f)$ (written $\mathsf{N}_\epsilon(f)$ for brevity), is the minimum degree of a real polynomial $p$ such  that $0 \le |p(x)| \le \epsilon$ whenever $f(x) = 0$, and $|p(x)| \ge 1$  whenever $f(x) = 1$. Unlike exact non-deterministic degree, which only  requires the polynomial to be nonzero on $1$-inputs, this measure  enforces a uniform gap: the polynomial must stay close to zero on all  $0$-inputs and bounded away from zero on all $1$-inputs.

The rational degree conjecture, open for over three decades, was recently resolved by Kothari, Kovacs-Deak, Wang, and Yang~\cite{kovacsdeak2026rationaldegreepolynomiallyrelated}, who showed that for every total Boolean function $f$,
\[ \deg(f) \le \widetilde O\!\left(\operatorname{rdeg}(f)^3\right). \]

In their paper, they explicitly propose a stronger conjecture: that approximate degree is polynomially bounded by $\mathsf{N}_\epsilon(f)$ and $\mathsf{N}_\epsilon(\overline{f})$ jointly, i.e., for every total Boolean function $f$ and every constant
$0<\epsilon<1$,
\[ \widetilde{\deg}(f) \le \operatorname{poly}(\mathsf N_{\epsilon}(f), \mathsf N_{\epsilon}(\overline f)). \]

This conjecture, if true,would imply a polynomial version of the rational degree result and bring us closer to resolving de Wolf's longstanding non-deterministic degree conjecture~\cite{dewolf2004nondeterministicquantumqueryquantum}.

In this work, we make the first systematic progress on this problem, establishing the conjecture for several broad and natural function classes: monotone and unate functions, functions of bounded alternation number, symmetric functions, $k$-uniform hypergraph properties, and read-$k$ Disjunctive Normal Form (DNF) formulas.
\end{abstract}

\newpage

\section{Introduction}
\label{sec:typesetting-summary}

A central goal in the analysis of Boolean functions is to understand how
different complexity measures relate to one another.  Many of these measures come from very different models: decision trees, certificates, block sensitivity, quantum query algorithms, and real polynomial representations. Nevertheless, for total Boolean functions, many of them are known to be polynomially related.  The foundational work of Nisan and
Szegedy~\cite{nisan_szegedy} showed that the degree and approximate degree of a Boolean function are polynomially related to several standard complexity measures, including block sensitivity, certificate complexity, and deterministic query complexity.  This circle of equivalences was later extended to other measures, and one of the main remaining gaps, the sensitivity conjecture, was resolved by Huang~\cite{huang2019induced}.

Another major open problem from the same line of work concerned rational
representations of Boolean functions. This question, attributed to Fortnow and recorded by Nisan and Szegedy~\cite{nisan_szegedy}, asked whether rational degree is polynomially related to ordinary degree.  The rational degree of a Boolean function $f:\{0,1\}^n\to\{0,1\}$, denoted $\operatorname{rdeg}(f)$, is the least degree needed to represent $f$ exactly on the Boolean cube as a quotient $p/q$ of real polynomials, where $q$ is nonzero on every input.  Since every polynomial representation is also a rational representation with denominator $1$, we always have
\[ \operatorname{rdeg}(f)\le \deg(f). \]
The rational degree conjecture asked for a converse up to polynomial loss:
does there exist a constant $c$ such that
\[ \deg(f)\le O(\operatorname{rdeg}(f)^c) \]
for every total Boolean function $f$?

This conjecture remained open for more than three decades.  It was recently proved by Kothari, Kovacs-Deak, Wang, and Yang~\cite{kovacsdeak2026rationaldegreepolynomiallyrelated}, who showed that, for every total Boolean function $f$,
\[ \deg(f) \le \widetilde O\!\left(\operatorname{rdeg}(f)^3\right). \]
Thus rational degree also belongs to the family of Boolean function measures that are polynomially related for total functions
\cite{kovacsdeak2026rationaldegreepolynomiallyrelated}.  The restriction to total functions is important: for partial functions, the analogous statement is false, as shown by Iyer et al.~\cite{iyer2023rationaldegreebooleanfunctions}.

A key reason rational degree is natural is its close connection to
nondeterminism.  A non-deterministic polynomial for $f$ is a real polynomial
$p$ such that
\[ p(x)\neq 0 \quad\Longleftrightarrow\quad f(x)=1 \qquad \text{for all } x\in\{0,1\}^n . \]
The non-deterministic degree of $f$, denoted $\operatorname{ndeg}(f)$, is
the minimum degree of such a polynomial.  Rational degree can be described
exactly in terms of non-deterministic degree:
\[ \operatorname{rdeg}(f) = \max\{ \operatorname{ndeg}(f), \operatorname{ndeg}(\overline f) \}. \]

In other words, rational degree measures the larger of the two
non-deterministic degrees associated with $f$ and its complement. 
This connection also brings quantum query complexity into the picture.
de Wolf showed that non-deterministic degree is equal, up to constant factors,
to non-deterministic quantum query complexity
\cite{dewolf2004nondeterministicquantumqueryquantum}.  Motivated by the
classical relation
\[ D(f)\le C^{(0)}(f)C^{(1)}(f), \]
or equivalently by the product of non-deterministic query complexities for $f$ and $\overline f$, de Wolf conjectured that
\begin{conjecture}[\cite{dewolf2004nondeterministicquantumqueryquantum}]\label{conjecture1}
    For every total Boolean function $f$,
    \[ D(f) \le O\!\left(\operatorname{ndeg}(f)\operatorname{ndeg}(\overline f) \right) \]
\end{conjecture}
 The dependence on both $f$ and $\overline f$ is necessary: for example, $\operatorname{OR}_n$ has very small non-deterministic degree, while its deterministic query complexity is large, and the same obstruction appears for $\operatorname{AND}_n$ after
taking complements.

de Wolf's conjecture is stronger than the rational degree conjecture.  Indeed,
since $\operatorname{rdeg}(f) = \max\{\operatorname{ndeg}(f), \operatorname{ndeg}(\overline f) \}$,
the conjecture would imply a quadratic upper bound of ordinary degree in terms of rational degree.  Moreover, since $\operatorname{ndeg}(f) \le \deg(f)$ and $\operatorname{ndeg}(\overline{f}) \le \deg(\overline{f})$ and $\deg(f)=\deg(\overline{f})$, the conjecture would also give $D(f) \le O(\deg(f)^2)$. \cite{kovacsdeak2026rationaldegreepolynomiallyrelated} gave a weaker polynomial form.  In particular, they show
\[ D(f) \le O\!\left( \operatorname{ndeg}(f)^2 \operatorname{ndeg}(\overline f)^2 \right),\quad \textit{and \quad}  D(f) \le O\!\left( \operatorname{ndeg}(f)^{1.5} \operatorname{ndeg}(\overline f)^{1.5} \log n \right), \]
but the original product conjecture remains open
\cite{kovacsdeak2026rationaldegreepolynomiallyrelated}.

\subsection{Approximate version of non-deterministic degree}
\cite{dewolf2004nondeterministicquantumqueryquantum, kovacsdeak2026rationaldegreepolynomiallyrelated} proposed an even stronger version of Conjecture~\ref{conjecture1} involving approximate non-deterministic degree: For any constant $0<\epsilon<1$ define the
$\epsilon$-approximate non-deterministic degree of $f$, denoted\footnote{Prior work~\cite{dewolf2004nondeterministicquantumqueryquantum, kovacsdeak2026rationaldegreepolynomiallyrelated} denotes $\epsilon$-approximate non-deterministic degree of $f$ as  $\operatorname{ndeg}_\epsilon(f)$. We denote this as  $\mathsf{N}_\epsilon(f)$ for easy readability. We also used $\mathsf{N}(f)$ for $\mathsf{N}_\epsilon(f)$ where $\epsilon =1/3$. 
These should not be confused with the notation $\mathsf{N}_q(f)$ or $\mathsf{N}_c(f)$ used in~\cite{dewolf2004nondeterministicquantumqueryquantum} for denoting non-deterministic classical query complexity and non-deterministic classical communication complexity respectively, which we do not discuss here.}
$\mathsf N_{\epsilon}(f)$, to be the minimum degree of a real polynomial
$p$ such that
\[ |p(x)|\le \epsilon \quad\text{whenever } f(x)=0, \qquad |p(x)|\ge 1 \quad\text{whenever } f(x)=1 .\]
When $\epsilon=0$, this recovers the \emph{exact} non-deterministic degree, denoted by $\operatorname{ndeg}(f)$.  For
positive $\epsilon$, however, the measure becomes more delicate.  Exact non-deterministic degree only requires the polynomial 
\begin{wrapfigure}{r}{0.4\textwidth}
    \centering
    \begin{tikzpicture}[scale=0.88, 
        every node/.style={font=\footnotesize},
        >={latex}
    ]
        \node (deg)      at (1.5,    4)   {$\deg(f)$};
        \node (adeg)     at (-0.5, 2) {$\widetilde{\deg}(f)$};
        \node (rdeg)     at (3,    2.5) {$\operatorname{rdeg}(f) = \max\{\operatorname{ndeg}(f),\operatorname{ndeg}(\overline{f})\}$};
        \node (ndeg)     at (3.1,    1.2) {$\operatorname{ndeg}(f)$};
        \node (odeg)     at (2.75,    -0.5)   {$odeg_\epsilon(f) = \Theta( \operatorname{ndeg}\epsilon(f))$};
        \node (ardeg)    at (-0.5,-1.2) {$\widetilde{\operatorname{rdeg}}(f)$};

        \draw[<-] (deg)  -- (rdeg);
        \draw[<-] (deg)  -- (adeg);
        \draw[<-] (rdeg) -- (ndeg);
        \draw[->] (odeg) -- (ndeg);
        \draw[->] (odeg) -- (adeg);
        \draw[<-] (adeg) -- (ardeg);
    \end{tikzpicture}
    \caption{Relations between different degrees for total Boolean functions. $A \rightarrow B$ denotes $A \leq B$.}
    \label{relations}
\end{wrapfigure}
to be nonzero on $1$ inputs.  Approximate non-deterministic degree asks for a uniform gap:
the polynomial must be small on all $0$ inputs and bounded away from zero on all $1$ inputs.  Thus it is a one-sided bounded version of non-deterministic degree.  This notion is closely related, up to changes in constants, to one-sided approximate degree, which has been studied for example in~\cite{SHE18, BT22}.

The natural two-sided analogue is the widely studied measure approximate degree.  Recall that
$\widetilde{\deg}(f)$ is the least degree of a real polynomial that
approximates $f$ pointwise within constant error on the Boolean cube.  The main question we consider is whether low-degree one-sided approximations for both $f$ and $\overline f$ force $f$ to have low approximate degree.

\begin{conjecture}[Approximate non-deterministic degree conjecture -- exact form]
\label{conj:approx-ndeg-D}
For every total Boolean function $f$ and every constant
$0<\epsilon<1$,
\[ \widetilde{\deg}(f) \le \max\{\mathsf N_{\epsilon}(f), \mathsf N_{\epsilon}(\overline f)\}. \]
\end{conjecture}

In fact, a weaker polynomial bound is also unknown~\cite{dewolf2004nondeterministicquantumqueryquantum}.

\begin{conjecture}[Approximate non-deterministic degree conjecture -- polynomial form]
\label{conj:approx-ndeg-Dpoly}\label{approx_ndeg_conjecture}
For every total Boolean function $f$ and every constant
$0<\epsilon<1$,
\[ \widetilde{\deg}(f) \le \operatorname{poly}(\mathsf N_{\epsilon}(f), \mathsf N_{\epsilon}(\overline f)). \]
\end{conjecture}

\subsection{Our results}
Our main positive results are summarized below: We show that  Conjecture~\ref{approx_ndeg_conjecture} is true for the following classes.

\begin{enumerate}
    \item (Corollary~\ref{lem:monotone}) For every \emph{monotone} Boolean function $f:\{0,1\}^n \rightarrow \{0,1\}$,
    \[ \widetilde{\deg}(f) \le O\!\left( \max\{\mathsf N_\epsilon(f), \mathsf N_\epsilon(\overline f)\}^4 \right).\]
    The same bound holds for \emph{unate} functions (Lemma~\ref{lem:unate-functions}).

    \item (Corollary~\ref{cor:bounded-alt-D}) For every Boolean function $f:\{0,1\}^n \rightarrow \{0,1\}$,
    \[ D(f) \le O\!\left( \operatorname{alt}(f)^3 \max\{ \mathsf N_\epsilon(f), \mathsf N_\epsilon(\overline f) \}^6 \right).\]
    where $\operatorname{alt}(f)$ is the alternation number of $f$.     In particular, this gives the desired polynomial bound for functions of bounded alternation.
    \item (Corollary~\ref{zebra_D_approxndeg}) For every \emph{zebra} function $f$,
    \[ D(f) \le O\!\left( \operatorname{alt}(f)^2 \max\{ \mathsf N_\epsilon(f), \mathsf N_\epsilon(\overline f) \}^4 \right).\]

    \item (Corollary~\ref{symmetric_relation}) For every \emph{symmetric} Boolean function $f:\{0,1\}^n \rightarrow \{0,1\}$,
    \[ D(f) \le O\!\left( \max\{ \mathsf N_\epsilon(f), \mathsf N_\epsilon(\overline f) \}^6 \right).\]
    
    For a large subclass of symmetric functions, namely those that are
    constant on a long interval of Hamming weights around the middle layer, we obtain the following stronger bound (Corollary~\ref{cor:symmetric-stronger}),
    \[\widetilde{\deg}(f) \le O\!\left( \max\{ \mathsf N_\epsilon(f), \mathsf N_\epsilon(\overline f) \}^2 \right).\]

    \item (Theorem~\ref{thm:readk-dnf-main}) Let $f$ be any Boolean function that is computed by a read-$k$ Disjunctive Normal Form (DNF) formulas, then
    \[ \widetilde{\deg}(f) \le O\!\left( k(k+1)^2(2k+1)^2 \bigl(\mathsf N_\epsilon(f) \mathsf N_\epsilon(\overline f)\bigr)^6\right).\]
    In particular, for constant $k$, this gives the desired polynomial bound.

    \item (Theorem~\ref{thm:hypergraph}) For any constant $k\geq 1$, and for every Boolean function $f:\{0,1\}^{\binom{n}{k}} \rightarrow \{0,1\}$ representing a \emph{$k$-uniform hypergraph property},
    \[ D(f) \le O\!\left( \max\{ \mathsf N_\epsilon(f), \mathsf N_\epsilon(\overline f) \}^{6k} \right).\]
    where $n$ represents the number of vertices. 
    Consequently, \emph{graph properties} satisfy the conjecture up to a polynomial loss (Corollary~\ref{graph_property_result}).
    
\end{enumerate}

Most of our proofs are written for $\epsilon=1/3$, but the same arguments extend to any fixed constant $0<\epsilon<1$, changing the bounds only by constant factors.

\subsection{Motivation for the function classes}

The classes of functions considered here are well studied in the analysis of Boolean functions and query complexity.
Going back to Aanderaa--Karp--Rosenberg--conjecture and early work on evasiveness~\cite{rivest1976recognizing, yao1987lower, chakrabarti2001improved, dietmar1992randomized, hajnal1991omega, king1988lower,kahn1984topological}, graph properties are well studied. More recently, graph properties have also been studied in the context of quantum query complexity~\cite{aaronson2021degree, kulkarni2016quantum, mariella2023quantum, ben2024symmetries, childs2012quantum, balaji2016graph}.  Hypergraph properties are a natural generalization of graph properties which have also been studied in various previous work~\cite{alon1985hypergraphs, kulkarni2015any, black2015monotone, li2021sensitivity}. 
Functions of bounded alternation were studied in~\cite{lin2016sensitivityconjecturelogrankconjecture, blais2014learning, canonne2019testing, dinesh2019alternation, chugh2022decisiontreecomplexityversus}. Zebra functions were defined as a natural restriction of alternation and as a class of functions that contains symmetric and monotone functions as its
subclasses~\cite{chugh2022decisiontreecomplexityversus}. 
Read-$k$ formulas and DNF formulas have also been studied extensively in
connection with sensitivity and block sensitivity~\cite{bafna2016sensitivity,tavenas2016sensitivity}.

\subsection*{Acknowledgments} We thank Thomas Huffstutler and David Miloschewsky for helpful discussions and suggesting improvements to the presentation of the paper. We also thank Anna G\'al for related helpful discussions. 

\section{Preliminaries}

Throughout the paper, let $0<\epsilon<1$ be a fixed constant. 
For ease of reading, for every Boolean function $f:\{0,1\}^n\to\{0,1\}$, we use the notation $\mathsf{N}_\epsilon(f)$ to denote $\mathsf{ndeg}_{\epsilon}(f)$. When $\epsilon = 1/3$, we drop the $\epsilon$ and use $\mathsf{N}(f)$. Thus, $\mathsf{N}(f)=\mathsf{ndeg}_{1/3}(f)$.
We also define
\[M_\epsilon(f) := \max\{\mathsf{N}_\epsilon(f), \mathsf{N}_\epsilon(\overline f)\}. \]
When $\epsilon=1/3$, we simply write $M(f)$. 

We will use $[n]$ to denote the set $\{1,2,\dots,n\}$. For an input $x \in \{0,1\}^n$, and a subset $B \subseteq [n]$, $x^B$ denotes flipping the bits of $x$ at all indices $i \in B$.
We will denote $|x|$ as the Hamming weight $\sum_i x_i$ for $x \in \{0,1\}^n$. For $x,y \in \{0,1\}^n$, $x \preceq y$ if and only if $x_i \le y_i$ for all $i \in [n]$ which means that $x_i=1 \implies y_i=1$ and $y_i=0 \implies x_i=0$.

\begin{definition}[Degree of Boolean function]
    Let $f:\{0,1\}^n \rightarrow \{0,1\}$ and $p(x)$ is a real polynomial $p(x) = \sum_{S \subseteq [n] }c_s \prod_{i \in S} x_i$ with $p(x) = f(x)$ for all $x \in \{0,1\}^n$. Then, $\deg(p) = \max\left\{|S| \mid c_s \ne 0\right\}$ is called the degree of the Boolean function $f$, denoted by $\deg(f)$.
\end{definition}

\begin{definition}[Non-deterministic degree~\cite{dewolf2004nondeterministicquantumqueryquantum}]\label{ndeg_def}
    Let $f:\{0,1\}^n \rightarrow \{0,1\}$ and polynomial $p$ is called non-deterministic polynomial of $f$ if $p(x)=0$ if and only if $f(x)=0$. $\mathsf{ndeg}(f)$ is the minimum degree of such polynomial $p(x)$.
\end{definition}

\begin{definition}[Approximate non-deterministic degree~\cite{kovacsdeak2026rationaldegreepolynomiallyrelated}]\label{approx_ndeg_def}
    Let $f:\{0,1\}^n \rightarrow \{0,1\}$ and a real polynomial $p$ is called an approximate non-deterministic polynomial of $f$ if $|p(x)|\le \epsilon$ when $f(x)=0$ and $|p(x)|\ge 1$ when $f(x)=1$. $\mathsf{N}_\epsilon(f) = \mathsf{ndeg}_{\epsilon}(f)$ is the minimum degree of such polynomial $p(x)$.
\end{definition}

\begin{definition}[Approximate Rational Degree~\cite{iyer2023rationaldegreebooleanfunctions}]\label{rational_deg_def}
    Let $f:\{0,1\}^n \rightarrow \{0,1\}$ and $p: \mathbb{R}^n \rightarrow \mathbb{R}$ and $q:\mathbb{R}^n \rightarrow \mathbb{R}$ be polynomials such that $\frac{p(x)}{q(x)}$ is a rational approximation of $f(x)$. That is,
    \[ \forall x \in \{0,1\}^n, \left|f(x) - \frac{p(x)}{q(x)}\right| \le \epsilon.\]
    
     The minimum value of $\max\{\deg(p),\deg(q)\}$ where $p/q$ is a rational approximation of $f$ is called the approximate rational degree. It is denoted by $\operatorname{rdeg}_{\epsilon}(f)$. When $\epsilon = 0$, this is exact rational degree $\operatorname{rdeg}(f)$.
\end{definition}

\begin{definition}[Approximate degree]
  A real polynomial $p$ approximates a Boolean function
$f:\{0,1\}^n\to\{0,1\}$ if, for every $x\in\{0,1\}^n$,
$|f(x)-p(x)|\le \epsilon$.
The approximate degree, of $f$, denoted $\deg_\epsilon(f)$, is the
minimum degree of any polynomial approximating $f$. We use $\widetilde{\deg}(f)$ to denote $\deg_{1/3}(f)$.  
\end{definition}

\begin{definition}[Sign degree]
A real polynomial $p$ sign-represents a Boolean function
$f:\{0,1\}^n\to\{0,1\}$ if, for every $x\in\{0,1\}^n$,
$p(x)<0$ iff $f(x)=1$.
The sign degree, or threshold degree, of $f$, denoted $\deg_\pm(f)$, is the
minimum degree of any polynomial sign-representing $f$.
\end{definition}

\begin{definition}[Deterministic Decision Tree Complexity]
    For a Boolean function $f:\{0,1\}^n \rightarrow \{0,1\}$, the deterministic decision tree complexity, denoted as $D(f)$, is defined as the minimum number of queries made to the input $x\in\{0,1\}^n$ by a deterministic algorithm in the worst case to compute $f(x)$.
\end{definition}

\begin{definition}[Certificate Complexity] 
    Let $S \subseteq [n]$ and $C: S \rightarrow \{0,1\}$. $C$ is called an assignment. For an input $x \in \{0,1\}^n$, $C$ is said to be consistent, if $x_i = C(i)$ for $i \in S$. $0-$certificate (or $1-$certificate) is an assignment $C$ such that $f(x)=0$ (or $f(x)=1$) whenever $x$ is consistent with $C$. The certificate complexity $C(f,x)$ is the size of the smallest $f(x)-$ certificate consistent with $x$. $0-$certificate complexity (or $1-$certificate complexity) will be defined as $C^{(0)}(f) = \max_{\{x \in f^{-1}(0)\}}\{C(f,x)\}$ ( or $C^{(1)}(f) = \max_{\{x \in f^{-1}(1)\}}\{C(f,x)\}$). Certificate complexity of $f$ is given by $C(f) =\max\{C^{(0)}(f),C^{(1)}(f)\}$.  
\end{definition}

\begin{definition}[Sensitivity]
   For an input $x \in \{0,1\}^n$ and a Boolean function $f$, the sensitivity of $f$ on $x$ is defined as the number of indices $i \in [n]$, such that, $f(x) \ne f(x^{\{i\}})$. It is denoted as $s(f,x)$. Sensitivity of $f$ on $0-$inputs, denoted as $s_0(f)$, is the maximum sensitivity of $f$ on all $0-$inputs. Similarly, we can define for $1-$sensitivity, denoted as $s_1(f)$. Sensitivity of the function $f$ is defined as $s(f) = \max\{s_0(f),s_1(f)\}$. 
\end{definition}

\begin{definition}[Sensitive Block]
    For a Boolean function $f: \{0,1\}^n \rightarrow \{0,1\}$ and an input $x \in \{0,1\}^n$, a sensitive block is a set $B \subseteq [n]$ is such that $f(x) \ne f(x^B)$.  
\end{definition}

\begin{definition}[Block Sensitivity]
     For a Boolean function $f: \{0,1\}^n \rightarrow \{0,1\}$ and an input $x \in \{0,1\}^n$ the block sensitivity $bs(f,x)$ is the maximum number of pairwise disjoint sensitive blocks $B_i$, that is, $f(x) \ne f(x^{B_i})$. Block sensitivity $bs(f)$ is maximum of $bs(f,x)$ over all $x \in \{0,1\}^n$.
\end{definition}

\begin{definition}[Monotone function]
Let $g:\{0,1\}^n \to \{0,1\}$ be a total Boolean function.
$g$ is monotonically increasing if for all $x,y \in \{0,1\}^n$ such that $x \preceq y$, we have
\[ g(x) \le g(y). \]
$g:\{0,1\}^n \to \{0,1\}$ is monotonically decreasing if for all $x,y \in \{0,1\}^n$ such that $x \preceq y$, we have
\[ g(x) \ge g(y). \]
$g$ is monotone function if it is either monotonically increasing or decreasing.
\end{definition}

\begin{definition}[Unate function]
A Boolean function $f:\{0,1\}^n\to\{0,1\}$ is unate if there exists
$a\in\{0,1\}^n$ and a monotone Boolean function
$g:\{0,1\}^n\to\{0,1\}$ such that
\[ f(x)=g(x\oplus a), \]
where $\oplus$ denotes bitwise XOR.
\end{definition}

\begin{definition}[Monotone Path~\cite{chugh2022decisiontreecomplexityversus}]
    Let $(x^{(1)} , x^{(2)} \dots , x^{(k)})$ be a sequence such that $x^{(i)} \in \{0,1\}^n$ for $i \in [k]$.
    $(x^{(1)} , x^{(2)} \dots , x^{(k)})$ is called monotone path when for every $i \in [k-1]$, there exists some $j\in [n]$ such that 
    \begin{itemize}
        \item $x^{(i)}_j = 0$ and $x^{(i+1)}_j = 1$.
        \item $x^{(i)}_{\ell} = x^{(i+1)}_{\ell}$ for all $\ell \in [n] \backslash \{j\}$.
    \end{itemize}
\end{definition}

\begin{definition}[Alternating number~\cite{chugh2022decisiontreecomplexityversus}]
    Alternating number for a function $f:\{0,1\}^n \rightarrow \{0,1\}$ of a monotone path $P = \{x^{(1)},\dots,x^{(k)}\}$ is defined as 
    \[\left|\{i: i \in \{1,2,\dots,k-1\} \text{ and } f(x^{(i)}) \ne f(x^{(i+1)})\}\right|\]
    $\operatorname{alt}(x,y)$ for $x,y \in \{0,1\}^n$ and $x \preceq y$ is the maximum alternation number of monotone path from $x$ to $y$. 
    $\operatorname{alt}(f)$ is the maximum alternating number of any monotone path from $0^n$ to $1^n$. 
\end{definition}

\begin{definition}[Max term~\cite{lin2016sensitivityconjecturelogrankconjecture}]
    $f : \{0,1\}^n \rightarrow \{0,1\}$. $u \in \{0,1\}^n \backslash \{1^n\}$ is a max term if $x \succ u$ implies $f(x)=f(1^n)$ and $f(x) \ne f(u)$.
\end{definition}

\begin{definition}[Min Term~\cite{lin2016sensitivityconjecturelogrankconjecture}]
    $f : \{0,1\}^n \rightarrow \{0,1\}$. $d \in \{0,1\}^n\backslash \{0^n\}$ is a min term if $x \prec d$ implies $f(x)=f(0^n)$ and $f(x) \ne f(d)$.
\end{definition}

\begin{definition}[Zebra Functions~\cite{chugh2022decisiontreecomplexityversus}]
    A function $f:\{0,1\}^n \rightarrow \{0,1\}$ is called a zebra function if \emph{all} monotone paths from $0^n$ to $1^n$ have same alternation number.
\end{definition}

\begin{definition}[Symmetric Functions]
    A function $f:\{0,1\}^n \rightarrow \{0,1\}$ is called a symmetric Boolean function if for all $x$ and $x'$ such that $|x|=|x'|$, $f(x)=f(x')$.
\end{definition}

\begin{definition}[Read-$k$ DNF]
    Disjunctive Normal Form (DNF) is a formula with OR of ANDs of literals. A read-$k$ DNF is a DNF in which every variable appears at most $k$ times across the terms.
\end{definition}

\begin{definition}[Graph Property]
    Let $\mathcal{G}_n$ be the set of undirected graphs on $n$ vertices. The function $\mathcal{P}:\mathcal{G}_n \rightarrow \{0,1\}$ is called a Graph Property if for $G_1,G_2 \in \mathcal{G}_n$ and $G_1$ and $G_2$ are isomorphic, then $\mathcal{P}(G_1)=\mathcal{P}(G_2)$.
\end{definition}

\begin{definition}[\texorpdfstring{$k$}{k}-Uniform Hypergraph]
For any constant $k$, a $k$-uniform hypergraph $\mathcal{H} = (V,E)$ consists of $V$ vertices and  $E \subseteq 2^{V}$ edges such that for any edge $e \in E$, $|e|=k$. Here $2^{V}$ denotes all subsets of $V$. \end{definition}

\begin{definition}[\texorpdfstring{$k$}{k}-Uniform Hypergraph Property]
    Let $\mathcal{H}_n^{(k)}$ denote the set of all $k$-uniform 
    hypergraphs on vertex set $[n]$, identified with 
    $\{0,1\}^{\binom{n}{k}}$. A function 
    $\mathcal{P}:\mathcal{H}_n^{(k)} \rightarrow \{0,1\}$ is called 
    a $k$-uniform hypergraph property if for any two $k$-uniform 
    hypergraphs $H_1, H_2 \in \mathcal{H}_n^{(k)}$ that are 
    isomorphic (i.e., related by a permutation of $[n]$), we have 
    $\mathcal{P}(H_1) = \mathcal{P}(H_2)$.
\end{definition}
Following Lemma~\ref{nor_lower_bound} gives lower bound on $\mathsf{NOR}$ using Corollary 1 in~\cite{kovacsdeak2026rationaldegreepolynomiallyrelated} which is given below:
\begin{lemma}[Corollary 1 from~\cite{kovacsdeak2026rationaldegreepolynomiallyrelated}]\label{markov_derivative_special_case}
    If $p \in \mathbb{R}[X_1,\dots,X_n]$ and $h>0$ and $p$ has following properties
    \begin{itemize}
        \item $|p(x)|\le h$ for all $x \in \{0,1\}^n$
        \item $p(0^n)=h$
        \item $p(x).p(0^n) \le 0$ for all $x \in \{0,1\}^n$ with $|x|=1$
    \end{itemize}
    Then,
    \[ \deg(p) \ge \sqrt{n/2}. \]
\end{lemma}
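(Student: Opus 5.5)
The plan is to reduce to a univariate polynomial by symmetrization and then apply a Markov-type argument in the style of Ehlich--Zeller and Rivlin--Cheney. This is the route behind the classical $\sqrt{n}$ lower bounds of Nisan and Szegedy.

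\textbf{Step 1: symmetrize.} Since the hypotheses are invariant under scaling, we may assume $h=1$. Let $d=\deg(p)$ and define
\[ p^{\mathrm{sym}}(x)=\frac{1}{n!}\sum_{\sigma\in S_n} p(x_{\sigma(1)},\dots,x_{\sigma(n)}). \]
By Minsky--Papert, there is a univariate polynomial $q$ with $\deg(q)\le d$ and $q(|x|)=p^{\mathrm{sym}}(x)$ for every $x\in\{0,1\}^n$. Averaging transfers each hypothesis to $q$:
\begin{itemize}
\item $|q(k)|\le 1$ for every integer $k\in\{0,\dots,n\}$, since each value is an average of values of absolute value at most $1$;
\item $q(0)=p(0^n)=1$;
\item $q(1)=\frac1n\sum_{|x|=1}p(x)\le 0$, because each $p(x)$ with $|x|=1$ is nonpositive when $p(0^n)=1>0$.
\end{itemize}
By the mean value theorem there is $\xi\in[0,1]$ with $|q'(\xi)|\ge q(0)-q(1)\ge 1$.

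\textbf{Step 2: Markov on the real interval.} Let $H=\max_{t\in[0,n]}|q(t)|$. Markov's inequality on an interval of length $n$ gives
\[ |q'(t)|\le \frac{2d^2}{n}H \quad\text{for all } t\in[0,n]. \]
Next, bound $H$ by its values on the integers. If $H$ is attained at $t^*$ and $k$ is the nearest integer, then $|t^*-k|\le 1/2$. Integrating the derivative bound over this short segment gives
\[ H\le 1+\tfrac12\cdot\frac{2d^2}{n}H, \]
so $H\le \frac{1}{1-d^2/n}$ whenever $d^2<n$. Combining this with $1\le |q'(\xi)|\le 2d^2H/n$ rearranges to a bound of the form $d^2\ge cn$.

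\textbf{Main obstacle: the constant.} The crude combination above only yields $d^2\ge n/3$. Reaching exactly $\sqrt{n/2}$ requires tightening the bound on $H$. I would first try a sharper estimate at the maximizer. If $t^*$ is interior, then $q'(t^*)=0$, so the deviation from the nearest integer is second order and can be controlled via Markov's inequality for $q''$; this removes most of the $d^2/n$ loss in $H$. If $t^*$ is an endpoint, it is itself an integer, so $H\le 1$ directly. A second option is to work on $[0,1]$, where the drop occurs near the endpoint, and use the endpoint form of Markov's inequality. In that case one compares with the Chebyshev polynomial, whose endpoint derivative is exactly $d^2$ relative to the half-length $n/2$. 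That comparison is what produces the factor $n/2$, giving $\deg(p)\ge d\ge\sqrt{n/2}$.
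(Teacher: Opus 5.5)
\textbf{Gap: your argument proves only $\deg(p)\ge\sqrt{n/3}$, not the stated $\sqrt{n/2}$.} The paper gives no proof of this lemma; it imports it as Corollary~1 of Kothari, Kovacs-Deak, Wang and Yang, so your argument has to stand on its own. Your Steps~1 and~2 are correct, but as you note they yield only the $\sqrt{n/3}$ bound. This is exactly the Ehlich--Zeller/Nisan--Szegedy bound $d\ge\sqrt{cn/(c+b_2-b_1)}$ with $c=1$ and $b_2-b_1=2$. Neither proposed repair reaches $\sqrt{n/2}$.

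\textbf{Why the repairs fail.} All of the loss comes from $H=\max_{[0,n]}|q|$ possibly exceeding $1$. This happens because Minsky--Papert symmetrization controls $q$ only at the integers. Your second-order fix uses Taylor expansion at an interior maximizer plus V.~Markov, $\|q''\|_{[0,n]}\le\frac{4}{n^2}\cdot\frac{d^2(d^2-1)}{3}H$. Writing $c=d^2/n$, this gives $H\le 1/(1-c^2/6)$. Combined with $1\le 2cH$, it only yields $c\ge\sqrt{42}-6\approx 0.48$ for large $n$. The Chebyshev/endpoint-Markov comparison adds nothing. The bound $\|q'\|_{[0,n]}\le\frac{2d^2}{n}H$ that you already used \emph{is} the endpoint bound. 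It gives the factor $n/2$ only if $H\le 1$, which is exactly what you cannot guarantee. So the last step of your plan is a heuristic, not a proof.

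\textbf{The missing idea.} Replace the Hamming-weight symmetrization by the restriction to the diagonal of the cube, which is bounded on the whole real interval. Multilinearize $p$; this keeps the values on $\{0,1\}^n$ and does not increase the degree. Write $p=\sum_S c_S\prod_{i\in S}x_i$ and set $r(y)=p(y,\dots,y)=\sum_S c_S\,y^{|S|}$. For $y\in[0,1]$ we have $r(y)=\mathbb{E}_{x\sim\mu_y}[p(x)]$, where $\mu_y$ is the product distribution with $\Pr[x_i=1]=y$. Hence $|r(y)|\le h$ on all of $[0,1]$, not just at grid points.

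The hypotheses give
\[
r'(0)=\sum_i c_{\{i\}}=\sum_i\bigl(p(e_i)-p(0^n)\bigr)\le -nh,
\]
since each $p(e_i)\le 0$ and $p(0^n)=h$. Markov's inequality on $[0,1]$ gives $|r'(0)|\le 2d^2\max_{[0,1]}|r|\le 2d^2h$. Therefore $nh\le 2d^2h$, and $d\ge\sqrt{n/2}$ follows with no discretization loss.

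Your rigorous $\sqrt{n/3}$ would in fact suffice for the paper's only application, Lemma~\ref{nor_lower_bound}, which needs just $\Omega(\sqrt n)$. Only its explicit constant $\sqrt{n/8}$ would become $\sqrt{n/12}$.
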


\begin{lemma}\label{nor_lower_bound}
    If $f$ is $\mathsf{NOR}: \{0,1\}^n \rightarrow \{0,1\}$, then $\mathsf{N}(\mathsf{NOR}) = \Theta(\sqrt{n})$.
    \begin{proof}
        The upper bound comes from the approximate degree of $\mathsf{NOR}$ which is $O(\sqrt{n})$. We will show the lower bound is $\sqrt{n/2}$. Let $q(x)$ be the approximate non-deterministic polynomial for $\mathsf{NOR}$ with $\deg(q) = \mathsf{N}(\mathsf{NOR})$. Therefore, $-1/3 \le q(x) \le 1/3$ for $x \in \{0,1\}^n \backslash \{0^n\} $ and $|q(0^n)| \ge 1$. Let $p \in \mathbb{R}[X_1,\dots,X_n]$ be such that $p(x)=q(x)^2-1/3$. Then $p(0^n)=h$ for some $h\ge 2/3$ and $-1/3 \le p(x) \le 0$ for all $x \in \{0,1\}^n \backslash \{0^n\}$.  Since, $p(x)$ follows all the properties given in Lemma~\ref{markov_derivative_special_case}, $\deg(p) \ge \sqrt{n/2}$ and thus we get, $\mathsf{N}(\mathsf{NOR})=\deg(q) \ge \frac{\deg(p)}{2}  \ge \sqrt{\frac{n}{8}}$.
    \end{proof} 
\end{lemma}

\section{Main Results}

\subsection{Basic results}

\begin{fact}[Sign degree lower bound]
\label{lem:sign-degree-lower-bound}
For every non-constant Boolean function $f$ and every $0<\epsilon<1$,
\[ M_\epsilon(f)\ge \frac{1}{2}\deg_\pm(f). \]
\end{fact}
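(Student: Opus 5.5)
Let $p$ be an optimal approximate non-deterministic polynomial for $f$, so $\deg(p)=\mathsf N_\epsilon(f)\le M_\epsilon(f)$. By Definition~\ref{approx_ndeg_def}, $|p(x)|\le\epsilon$ when $f(x)=0$ and $|p(x)|\ge 1$ when $f(x)=1$. The plan is to turn $p$ into a sign-representation of $f$ of degree at most $2\deg(p)$; this gives $\deg_\pm(f)\le 2\mathsf N_\epsilon(f)\le 2M_\epsilon(f)$, which is the claim.

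Squaring removes the sign ambiguity. I would set
\[ q(x) := \frac{1+\epsilon^2}{2} - p(x)^2 , \]
which has degree $2\deg(p)$. On a $1$-input, $p(x)^2\ge 1$, so $q(x)\le \frac{1+\epsilon^2}{2}-1=-\frac{1-\epsilon^2}{2}<0$; here we use $\epsilon<1$. On a $0$-input, $p(x)^2\le\epsilon^2$, so $q(x)\ge \frac{1-\epsilon^2}{2}>0$.

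Hence $q(x)<0$ if and only if $f(x)=1$. This is exactly the sign-representation convention in the definition of sign degree, so $\deg_\pm(f)\le 2\,\mathsf N_\epsilon(f)$.

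The same argument applied to $\overline f$ gives $\deg_\pm(\overline f)\le 2\mathsf N_\epsilon(\overline f)$, and $\deg_\pm(\overline f)=\deg_\pm(f)$ since negating a sign-representing polynomial swaps the two sides. So either half of $M_\epsilon$ already suffices. Non-constancy of $f$ is only needed so that $\deg_\pm(f)$ is a meaningful positive quantity; the inequality holds regardless.

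There is no real obstacle here. The only care needed is to pick a threshold strictly between $\epsilon^2$ and $1$, so that the strict inequalities hold, and to match the paper's sign convention ($q<0$ on $1$-inputs).
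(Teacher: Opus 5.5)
Your proof is correct. It differs from the paper's in a small but real way, and it gives a slightly stronger conclusion.

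The paper takes optimal polynomials $p$ for $f$ and $q$ for $\overline f$ and uses $r=q^2-p^2$. On $1$-inputs $r\le \epsilon^2-1<0$, and on $0$-inputs $r\ge 1-\epsilon^2>0$. This gives only $\deg_\pm(f)\le 2\max\{\deg p,\deg q\}=2M_\epsilon(f)$.

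You replace $q^2$ by the constant threshold $\frac{1+\epsilon^2}{2}$, so you need only $p$. That yields the one-sided bound $\deg_\pm(f)\le 2\,\mathsf N_\epsilon(f)$. With the same argument for $\overline f$, you get $\deg_\pm(f)\le 2\min\{\mathsf N_\epsilon(f),\mathsf N_\epsilon(\overline f)\}$.

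The paper's two-polynomial construction mirrors the rational approximation $p^2/(p^2+q^2)$ used in the next subsection, but for this fact it buys nothing extra. Your version shows that high sign degree forces \emph{each} one-sided degree to be large. For example, Lemma~\ref{lem:high-sign-degree} would then hold with $\min\{\mathsf N_\epsilon(f),\mathsf N_\epsilon(\overline f)\}$ in place of $M_\epsilon(f)$.

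One minor point concerns your aside that $\deg_\pm(\overline f)=\deg_\pm(f)$ because negation swaps sides. Under the paper's convention ($p(x)<0$ iff $f(x)=1$), a sign-representing polynomial may vanish on $0$-inputs, so plain negation need not sign-represent $\overline f$. You first need to add a small positive constant. Alternatively, note that the polynomial your construction produces is bounded away from $0$ on the whole cube, so negating it works directly. Either way this only affects the aside, not your main inequality.
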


\begin{proof}
Let $p$ be an $\epsilon$-approximate non-deterministic polynomial for $f$, and
let $q$ be an $\epsilon$-approximate non-deterministic polynomial for
$\overline f$, with $\deg(p)=\mathsf{N}_\epsilon(f)$ and $\deg(q)=\mathsf{N}_\epsilon(\overline f)$.

Consider the polynomial
\[ r(x):=q(x)^2-p(x)^2. \]
If $f(x)=1$, then $|p(x)|\ge 1$ and $|q(x)|\le\epsilon$, so $r(x)\le \epsilon^2-1<0$. If $f(x)=0$, then $|q(x)|\ge 1$ and $|p(x)|\le\epsilon$, so $r(x)\ge 1-\epsilon^2>0$.
Therefore $r$ sign-represents $f$. Moreover,
\[ \deg(r) \le 2\max\{\deg(p),\deg(q)\} = 2M_\epsilon(f). \]
Hence $\deg_\pm(f)\le 2M_\epsilon(f)$.
\end{proof}

\begin{lemma}[Functions of high sign degree]
\label{lem:high-sign-degree}
Let $\mathcal C$ be a family of Boolean functions such that every
$f:\{0,1\}^n\to\{0,1\}$ in $\mathcal C$ satisfies
$\deg_\pm(f)\ge c n^\alpha$ 
for some constants $c>0$ and $\alpha>0$. Then, for every
$f\in\mathcal C$,
\[ D(f)\le O(M_\epsilon(f)^{1/\alpha}). \]
Thus Conjecture~\ref{approx_ndeg_conjecture} holds for every such
class $\mathcal C$.
\end{lemma}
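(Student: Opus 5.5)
The plan is to combine Fact~\ref{lem:sign-degree-lower-bound} with the trivial upper bound $D(f)\le n$. Fix $f\in\mathcal C$ on $n$ variables, and first dispose of the degenerate cases.

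\emph{Constant $f$.} Here $D(f)=0$ and there is nothing to prove.

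\emph{Small $n$.} If $n$ is bounded by a constant, then $D(f)\le n=O(1)$. For non-constant $f$ we have $M_\epsilon(f)\ge 1$: a degree-$0$ polynomial is a constant, and it cannot satisfy both $|p|\le\epsilon$ on $0$-inputs and $|p|\ge 1$ on $1$-inputs when both kinds of inputs exist. So the bound holds in this case too.

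\emph{Main case.} For non-constant $f\in\mathcal C$, Fact~\ref{lem:sign-degree-lower-bound} together with the hypothesis on $\mathcal C$ gives
\[ c\,n^\alpha \le \deg_\pm(f) \le 2M_\epsilon(f). \]
Solving for $n$ yields
\[ n \le \left(\tfrac{2}{c}\right)^{1/\alpha} M_\epsilon(f)^{1/\alpha}. \]
Since any function can be computed by querying all $n$ variables, $D(f)\le n$. Because $c$ and $\alpha$ are constants, this gives $D(f)\le O(M_\epsilon(f)^{1/\alpha})$.

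\emph{Consequence for Conjecture~\ref{approx_ndeg_conjecture}.} For every total function, $\widetilde{\deg}(f)\le \deg(f)\le D(f)$. Also $M_\epsilon(f)=\max\{\mathsf N_\epsilon(f),\mathsf N_\epsilon(\overline f)\}$, so $M_\epsilon(f)^{1/\alpha}$ is polynomial in $\mathsf N_\epsilon(f)$ and $\mathsf N_\epsilon(\overline f)$ for fixed $\alpha$. Hence the conjecture holds on $\mathcal C$.

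No step is a real obstacle. The only point needing care is the constant-function case, where $\deg_\pm$ behaves degenerately and Fact~\ref{lem:sign-degree-lower-bound} does not apply. The hypothesis on $\mathcal C$ should be read as applying to non-constant members, or the constant case handled separately as above.
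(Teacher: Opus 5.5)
Your proof is correct and follows the paper's argument exactly: Fact~\ref{lem:sign-degree-lower-bound} together with the hypothesis gives $c\,n^\alpha \le 2M_\epsilon(f)$, and $D(f)\le n$ finishes it. Your separate treatment of constant functions is extra care the paper omits. The small-$n$ case is harmless but unnecessary, since the main case already covers every non-constant $f$.
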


\begin{proof}
By Fact~\ref{lem:sign-degree-lower-bound},
$M_\epsilon(f) \ge \frac{1}{2}\deg_\pm(f) \ge \frac{c}{2}n^\alpha.$
Therefore
$n \le \left(\frac{2}{c}\right)^{1/\alpha} M_\epsilon(f)^{1/\alpha}.$
Since a deterministic decision tree can always query all $n$ variables, $D(f)\le n \le \left(\frac{2}{c}\right)^{1/\alpha} M_\epsilon(f)^{1/\alpha}$.
\end{proof}

\begin{corollary}[Almost all Boolean functions satisfy Conjecture~\ref{approx_ndeg_conjecture}]
\label{cor:almost-all-functions}
For almost all Boolean functions $f:\{0,1\}^n\to\{0,1\}$, $D(f) \le O(M_\epsilon(f))$.
\end{corollary}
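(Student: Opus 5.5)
The plan is to reduce the corollary to Lemma~\ref{lem:high-sign-degree} with $\alpha=1$. For this it suffices to show that almost all Boolean functions $f:\{0,1\}^n\to\{0,1\}$ satisfy $\deg_\pm(f)\ge c n$ for some absolute constant $c>0$. Here ``almost all'' means that the fraction of functions with this property tends to $1$ as $n\to\infty$. Given this, Fact~\ref{lem:sign-degree-lower-bound} yields $M_\epsilon(f)\ge \frac{c}{2}n$. Combined with the trivial bound $D(f)\le n$, this gives $D(f)\le \frac{2}{c}M_\epsilon(f)=O(M_\epsilon(f))$ for all such $f$, which is exactly the claim.

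The core step is a counting bound on the number of Boolean functions of sign degree at most $d$. I would invoke the classical count of linear threshold functions over a point set, due to Cover and Schläfli. The number of dichotomies of $N$ points in $\mathbb{R}^m$ realizable by homogeneous linear threshold functions is at most $2\sum_{i=0}^{m-1}\binom{N-1}{i}\le 2N^{m}$.

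A degree-$d$ polynomial sign representation is a linear threshold function in the feature space of multilinear monomials of degree $\le d$, which has dimension $m=\sum_{i\le d}\binom{n}{i}$. Apply this to the $N=2^n$ points of the cube. The number of functions with $\deg_\pm(f)\le d$ is then at most $2\cdot 2^{n m}$. One should check that the strict-sign convention of the paper (with $p(x)\neq 0$ on the cube) matches the threshold count; strictness only helps, since any strict sign pattern is in particular a dichotomy.

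Take $d=cn$ with $c$ a small constant, say $c=1/10$. Then $m\le 2^{H(c)n}$, so the count is at most $2^{1+n2^{H(c)n}}$. Since $H(1/10)<1$, this is $2^{o(2^n)}$, a vanishing fraction of the $2^{2^n}$ functions. Hence almost every $f$ has $\deg_\pm(f)>n/10$; such $f$ are non-constant, so Fact~\ref{lem:sign-degree-lower-bound} applies.

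The main obstacle is justifying the counting step cleanly. Concretely, one must cite or reprove the bound on realizable dichotomies, and make sure the binomial-tail estimate $\sum_{i\le cn}\binom{n}{i}\le 2^{H(c)n}$ beats $2^n/n$ asymptotically. If a sharper statement is desired, one can cite the known result that almost all functions have sign degree at least $n/2-O(\sqrt{n\log n})$ (O'Donnell--Servedio / Anthony). That gives $c$ close to $1/2$, but any linear bound suffices for the corollary.
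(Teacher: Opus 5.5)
Your proposal is correct and takes the same route as the paper. Both reduce to Lemma~\ref{lem:high-sign-degree} with $\alpha=1$ using a linear sign-degree lower bound for almost all $f$, then conclude via Fact~\ref{lem:sign-degree-lower-bound} and $D(f)\le n$. The difference is that the paper cites the sharper bound $\deg_\pm(f)\ge\lfloor n/2\rfloor$ (Anthony; Saks), while you re-derive a weaker $\Omega(n)$ bound from the Cover--Schl\"afli dichotomy count, which is the standard proof of that fact and is enough here. Your strictness check is moot anyway: the paper's convention actually allows $p(x)=0$ on $0$-inputs, but the polynomial $q^2-p^2$ built in Fact~\ref{lem:sign-degree-lower-bound} is strictly nonzero on the cube, so only strict representations need to be counted.
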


\begin{proof}
It is known that almost all Boolean functions have sign degree at least
$\lfloor n/2\rfloor$~\cite{ANTHONY199591,Saks_1993}. Since $D(f)\le n$, using Lemma~\ref{lem:high-sign-degree} we get the desired bound.
\end{proof}

Next we prove that if a function $g$ is obtained from $f$ by restrictions,
identifications of variables, permutations of variables, and/or negations of input
literals then $M_\epsilon()$ does not increase.

\begin{lemma}[Restrictions and projections]
\label{lem:restrictions-projections}
Let $m\leq n$, and $g:\{0,1\}^m\to\{0,1\}$ be obtained from
$f:\{0,1\}^n\to\{0,1\}$ by substituting each variable $x_i$ by one of
$\{ 0, 1, y_j, 1-y_j\}$ for some $j\in[m]$.  Then
\[ \mathsf{N}_\epsilon(g) \le \mathsf{N}_\epsilon(f) \textit{\quad and thus \quad } M_\epsilon(g)\le M_\epsilon(f). \]
\end{lemma}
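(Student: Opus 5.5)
The plan is to take an optimal approximate non-deterministic polynomial for $f$ and apply the same substitution to its variables. This gives a candidate polynomial for $g$; I then check its degree and its values on the two classes of inputs.

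Let $p\in\mathbb{R}[x_1,\dots,x_n]$ be an $\epsilon$-approximate non-deterministic polynomial for $f$ with $\deg(p)=\mathsf{N}_\epsilon(f)$. Write $\sigma$ for the substitution: each $x_i$ is replaced by an affine function $\ell_i(y)\in\{0,1,y_j,1-y_j\}$. Define
\[ q(y):=p(\ell_1(y),\dots,\ell_n(y)). \]

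\textbf{Step 1 (degree).} Each $\ell_i$ has degree at most $1$. Substituting into a monomial $\prod_{i\in S}x_i$ gives a product of at most $|S|$ affine factors, which has degree at most $|S|$. Hence $\deg(q)\le\deg(p)$. If one wants $q$ multilinear, reduce it via $y_j^2=y_j$; this does not raise the degree and does not change its values on $\{0,1\}^m$.

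\textbf{Step 2 (values).} Fix $y\in\{0,1\}^m$. Then $x:=(\ell_1(y),\dots,\ell_n(y))$ lies in $\{0,1\}^n$, since each of $0,1,y_j,1-y_j$ is Boolean on Boolean $y$. By the definition of $g$ we have $g(y)=f(x)$, and by construction $q(y)=p(x)$.
If $g(y)=0$, then $f(x)=0$, so $|q(y)|=|p(x)|\le\epsilon$.
If $g(y)=1$, then $f(x)=1$, so $|q(y)|=|p(x)|\ge1$.
Thus $q$ is an $\epsilon$-approximate non-deterministic polynomial for $g$, and $\mathsf{N}_\epsilon(g)\le\deg(q)\le\mathsf{N}_\epsilon(f)$.

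\textbf{Step 3 (complement).} Applying the same substitution to $\overline f$ yields exactly $\overline g$, because complementation commutes with composition. The argument above therefore gives $\mathsf{N}_\epsilon(\overline g)\le\mathsf{N}_\epsilon(\overline f)$. Taking the maximum of the two inequalities gives $M_\epsilon(g)\le M_\epsilon(f)$.

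There is no real obstacle here. The only point needing care is that the substitution sends Boolean points to Boolean points, so the guarantees on $p$, which hold only on $\{0,1\}^n$, transfer to $q$. Identifications of variables (several $x_i$ mapped to the same $y_j$) and permutations are covered automatically by this same argument.
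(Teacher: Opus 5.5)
Your proof is correct and follows the paper's proof: you compose an optimal approximate non-deterministic polynomial $p$ for $f$ with the substitution map, note that the degree does not increase and that Boolean inputs go to Boolean inputs (so the bounds $|q(y)|\le\epsilon$ and $|q(y)|\ge 1$ carry over), and then repeat the argument for $\overline f$ and $\overline g$. Your remarks on multilinearization and on the substitution preserving the Boolean cube spell out points the paper leaves implicit.
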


\begin{proof}
    Let $p(x_1,\dots,x_n)$ be an $\epsilon$-approximate non-deterministic polynomial
for $f$. Let $\pi:\{0,1\}^m\to\{0,1\}^n$ denote the corresponding substitution
map, so that
$g(y)=f(\pi(y))$.
Define a new polynomial $q(y)=p(\pi(y))$. 
As the substitution cannot increase degree, $\deg(q)\le \deg(p)$. 

Moreover, if $g(y)=0$, then $f(\pi(y))=0$, and hence $|q(y)|=|p(\pi(y))|\le \epsilon$. And if $g(y)=1$, then $f(\pi(y))=1$, and hence $|q(y)|=|p(\pi(y))|\ge 1$. 
Thus $q$ is an $\epsilon$-approximate non-deterministic polynomial for $g$.
This proves
\[ \mathsf{N}_\epsilon(g) \le \mathsf{N}_\epsilon(f). \]
Applying the same argument to $\overline f$ and $\overline g$ gives
\[ \mathsf{N}_\epsilon(\overline g) \le \mathsf{N}_\epsilon(\overline f), \]
and therefore $M_\epsilon(g)\le M_\epsilon(f)$.
\end{proof}

\subsection{Relation with approximate rational degree}
\begin{lemma}
    Let $f : \{0,1\}^n \rightarrow \{0,1\}$ be a Boolean function. Let $\operatorname{rdeg}_{\epsilon}(f)$ be the approximate rational degree of $f$ and $\mathsf{N}_{\epsilon}(f)$ and $\mathsf{N}_{\epsilon}(\overline{f})$ be the approximate non-deterministic degrees of $f$ and $\overline{f}$ respectively. Then,
    \begin{equation}
        \operatorname{rdeg}_{\epsilon}(f) \le 2\max\{\mathsf{N}_{\epsilon}(f),\mathsf{N}_{\epsilon}(\overline{f})\}
    \end{equation}
\end{lemma}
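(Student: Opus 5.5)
Take $p$ to be an $\epsilon$-approximate non-deterministic polynomial for $f$ and $q$ one for $\overline f$, with $\deg(p)=\mathsf{N}_\epsilon(f)$ and $\deg(q)=\mathsf{N}_\epsilon(\overline f)$, exactly as in the proof of Fact~\ref{lem:sign-degree-lower-bound}. We will approximate $f$ by the rational function
\[ R(x)=\frac{p(x)^2}{p(x)^2+q(x)^2}. \]
Numerator and denominator both have degree at most $2\max\{\deg(p),\deg(q)\}=2M_\epsilon(f)$.

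\textbf{Step 1: the denominator never vanishes.} On every input $x$, either $f(x)=1$, so $|p(x)|\ge 1$, or $f(x)=0$, so $|q(x)|\ge 1$. In both cases $p(x)^2+q(x)^2\ge 1>0$.

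\textbf{Step 2: the error on $1$-inputs.} If $f(x)=1$, then $p(x)^2\ge 1$ and $q(x)^2\le\epsilon^2$. Hence
\[ |1-R(x)|=\frac{q(x)^2}{p(x)^2+q(x)^2}\le \frac{\epsilon^2}{1+\epsilon^2}\le\epsilon^2. \]
Here we use that $t\mapsto \frac{q^2}{t+q^2}$ is decreasing in $t$ and increasing in $q^2$.

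\textbf{Step 3: the error on $0$-inputs.} If $f(x)=0$, the symmetric argument gives $R(x)\le \frac{\epsilon^2}{1+\epsilon^2}\le\epsilon^2$.

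Since $\epsilon<1$, we have $\epsilon^2\le\epsilon$. So $|f(x)-R(x)|\le\epsilon$ for all $x\in\{0,1\}^n$, and Definition~\ref{rational_deg_def} gives $\operatorname{rdeg}_\epsilon(f)\le 2M_\epsilon(f)$.

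There is no real obstacle here. The only point needing care is the choice of construction: the naive ratio $p/(p+q)$ could have a vanishing or sign-changing denominator, and squaring both polynomials removes that problem. Squaring is also what produces the factor $2$ in the bound.
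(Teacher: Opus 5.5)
Your proof is correct and uses exactly the paper's construction: the rational function $p(x)^2/(p(x)^2+q(x)^2)$ built from approximate non-deterministic polynomials for $f$ and $\overline f$, with the same case analysis on $f(x)\in\{0,1\}$. Your version is slightly more careful than the paper's, since you explicitly check that the denominator is at least $1$ and use $q(x)^2\le\epsilon^2$ to get the sharper error bound $\epsilon^2/(1+\epsilon^2)$, where the paper settles for $\epsilon/(1+\epsilon)$.
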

    \begin{proof}
        Let $p(x)$ and $q(x)$ be approximate non-deterministic polynomials of $f$ and $\overline{f}$, respectively, such that $\mathsf{N}_{\epsilon}(f) = \deg(p)$ and $\mathsf{N}_{\epsilon}(\overline{f}) = \deg(q)$. It suffices to prove that $\left|\frac{p(x)^2}{p(x)^2+q(x)^2} -f(x)\right| \le \epsilon$. For case $f(x)=0$, $\left|p(x)^2\right| \le \epsilon$ and $\left|q(x)^2\right| \ge 1$ and therefore, $\left|\frac{p(x)^2}{p(x)^2+q(x)^2}\right| \le \epsilon$. For case $f(x)=1$, $p(x)^2\ge 1$ and $q(x)^2 \le \epsilon$ and so $\frac{p(x)^2}{p(x)^2+q(x)^2}$ is never greater than 1 and if we denote this as the function $h(y_1,y_2) = y_1/(y_1+y_2)$ where $y_1 \ge 1$ and $y_2 \le \epsilon$, this is a strictly increasing function in $y_1$ and decreasing in $y_2$ and $\lim_{y_1 \rightarrow \infty} h(y_1,y_2) = 1$ and $h(y_1,y_2) \in \left[\frac{1}{1+ \epsilon},1\right)$. This means $\frac{p(x)^2}{p(x)^2+q(x)^2}$ is indeed a rational approximation of $f(x)$. 
    \end{proof}

\begin{remark}
    Approximate non-deterministic degree is not polynomially bounded by approximate rational degree. For the Majority function on n bits, $\widetilde{\operatorname{rdeg}}(\mathsf{MAJ}_n) = \Theta( \log{2n})$ as given by~\cite{DBLP:journals/corr/abs-0910-1862}, and $\mathsf{N}(\mathsf{MAJ}_n) =\Omega( \sqrt{n})$.
\end{remark}

\subsection{Monotone Boolean functions}

Conjecture \ref{approx_ndeg_conjecture} is true for monotone Boolean functions. The main result that we need for establishing its proof is that approximate non-deterministic degree of AND being $\Omega(\sqrt{n})$. This can be seen using dual polynomials technique given by~\cite{BT22}. They use the functions to be $\{-1,1\}^n \rightarrow \{-1,1\}$. But we are using $\{0,1\}^n$ and hence will follow~\cite{SHE18} which gives the lower bound on one-sided approximate degree of AND and NOR
on $n$ bits to be $\Omega(\sqrt{n})$.

The following lemma follows same proof technique used to prove polynomial relationship between degree and rational degree for monotone Boolean functions from~\cite{iyer2023rationaldegreebooleanfunctions} but we reproduce the proof for simplicity while using it for our setting.

\begin{lemma}\label{monotone}
    If $f:\{0,1\}^n \rightarrow \{0,1\}$ is a monotone Boolean function, then 
    \[ \max\{\mathsf{N}(f)^2,\mathsf{N}(\overline{f})^2\} \ge \Omega(s(f)).\]
\end{lemma}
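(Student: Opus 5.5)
Without loss of generality I take $f$ monotonically increasing; the decreasing case follows by applying the substitution $x_i \mapsto 1-x_i$, which by Lemma~\ref{lem:restrictions-projections} does not change $M(f)$ and does not change $s(f)$. I then fix an input $x$ with $s(f,x)=s(f)=:s$ and split into two cases according to the value $f(x)$.

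\textbf{Case $f(x)=1$.} By monotonicity, every sensitive index $i$ at $x$ has $x_i=1$: flipping a $0$ to a $1$ cannot drop the value. Let $S$ be the set of sensitive indices, so $|S|=s$.

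I restrict $f$ by fixing every coordinate outside $S$ to its value in $x$ and keeping the $s$ variables in $S$ free. This gives $g:\{0,1\}^s\to\{0,1\}$ with $g(1^s)=1$ and $g(1^s-e_i)=0$ for every $i$. Since $g$ is monotone, $g(y)=0$ for every $y\neq 1^s$: any such $y$ lies below some $1^s-e_i$. Hence $g=\mathsf{AND}_s$.

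Lemma~\ref{lem:restrictions-projections} gives $\mathsf{N}(\mathsf{AND}_s)\le \mathsf{N}(f)$. It remains to show $\mathsf{N}(\mathsf{AND}_s)=\Omega(\sqrt{s})$. Substituting $y_i\mapsto 1-y_i$ turns $\mathsf{AND}_s$ into $\mathsf{NOR}_s$, and Lemma~\ref{lem:restrictions-projections} shows this substitution preserves $\mathsf{N}$. Lemma~\ref{nor_lower_bound} then gives $\mathsf{N}(\mathsf{NOR}_s)\ge\sqrt{s/8}$. Therefore $\mathsf{N}(f)^2\ge s/8$.

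\textbf{Case $f(x)=0$.} Here every sensitive index has $x_i=0$. Fixing the other coordinates yields a monotone $g$ with $g(0^s)=0$ and $g(e_i)=1$ for all $i$, so $g=\mathsf{OR}_s$.

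Then $\overline g=\mathsf{NOR}_s$ is a restriction of $\overline f$. Lemma~\ref{lem:restrictions-projections} and Lemma~\ref{nor_lower_bound} give $\mathsf{N}(\overline f)\ge \mathsf{N}(\mathsf{NOR}_s)\ge \sqrt{s/8}$.

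Combining the two cases, $\max\{\mathsf{N}(f)^2,\mathsf{N}(\overline f)^2\}\ge s(f)/8$, which is the claimed $\Omega(s(f))$ bound.

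The main point needing care is the lower bound $\mathsf{N}(\mathsf{NOR}_s)=\Omega(\sqrt s)$. It is already available as Lemma~\ref{nor_lower_bound}, which rests on the Markov-type Lemma~\ref{markov_derivative_special_case}. The only other place to be careful is that the variable-negation map used to pass between $\mathsf{AND}$ and $\mathsf{NOR}$ is covered by Lemma~\ref{lem:restrictions-projections}. It is, since that lemma allows substituting $1-y_j$.
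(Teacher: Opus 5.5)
Your proof is correct and is essentially the paper's argument. Both restrict $f$ to the sensitive coordinates of a maximally sensitive input, use monotonicity to identify the restriction as $\mathsf{OR}_s$ or $\mathsf{AND}_s$, and invoke the $\sqrt{s/8}$ bound of Lemma~\ref{nor_lower_bound}. The differences are presentational: the paper proves only the $s_0$ case and gets $s_1$ by passing to $\overline{f}$, whereas you treat $f(x)=1$ directly via $\mathsf{AND}_s$ and a variable negation, and you spell out the monotonicity step that the paper leaves implicit.
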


    \begin{proof}
        We will assume that $f$ is monotonically increasing without loss of generality.
        We will only prove that $\mathsf{N}(\overline{f})^2 \ge \Omega(s_0(f))$ and thus, $\mathsf{N}(f)^2 \ge \Omega(s_0(\overline{f})) = \Omega(s_1(f))$ will follow. Let $z = s_0(f)$ and $s_0(f,x) = s_0(f)$. This means that if you flip any one of these $z$ sensitive variables of $x$ which are zeros since $f$ is monotonically increasing, then $f$ will become 1. So if all the variables in $x$ are fixed except for the $z$ sensitive zeros, then we get an $OR_z$ function. Now, we know that $\mathsf{N}(\overline{OR_z}) \ge \sqrt{\frac{z}{8}}$ and therefore, $\mathsf{N}(\overline{f}) \ge \mathsf{N}(\overline{OR_z}) \ge \sqrt{\frac{z}{8}} = \sqrt{\frac{s_0(f)}{8}}$. Therefore, we get
        $\Omega(s(f)) \le \max\{\mathsf{N}(f)^2,\mathsf{N}(\overline{f})^2\}$.
    \end{proof}

\begin{corollary}[Monotone functions]\label{lem:monotone}
    If $f:\{0,1\}^n \rightarrow \{0,1\}$ is monotone Boolean function, then \[\widetilde{\deg}(f) \le O\left( \max\{\mathsf{N}(f)^4,\mathsf{N}(\overline{f})^4\}\right) \text{ and }bs(f) = C(f) \le O\left(\max\{\mathsf{N}(f)^2,\mathsf{N}(\overline{f})^2\}\right).\]
\end{corollary}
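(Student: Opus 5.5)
The plan is to combine Lemma~\ref{monotone} with the classical facts about monotone functions and the Nisan--Szegedy bounds. First, for a monotone function the standard result (Nisan) gives $s(f)=bs(f)=C(f)$. The reason: at any input $x$, take a minimal certificate. For monotone increasing $f$ with $f(x)=1$, a minimal $1$-certificate fixes only $1$-bits of $x$. Each fixed bit is sensitive: if flipping it did not change $f$, then, by monotonicity, that bit could be dropped from the certificate. The argument for $0$-inputs is symmetric, and the decreasing case is handled by negating inputs. Hence $C(f,x)\le s(f,x)$, and since $s\le bs\le C$ always, all three measures coincide. Combining this with Lemma~\ref{monotone} immediately yields
\[ bs(f)=C(f)=s(f)\le O\left(\max\{\mathsf{N}(f)^2,\mathsf{N}(\overline f)^2\}\right), \]
which is the second claim.

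For the approximate degree claim, I would not go through Nisan--Szegedy's lower bound $\widetilde{\deg}(f)\ge\Omega(\sqrt{bs(f)})$, since that points in the wrong direction. Instead I would use upper bounds on $\widetilde{\deg}(f)$ in terms of the combinatorial measures. The first option is $\widetilde{\deg}(f)\le \deg(f)\le D(f)\le C^{(0)}(f)C^{(1)}(f)\le C(f)^2$. The second, slightly better for monotone $f$, is $D(f)\le bs(f)^2$. Either way this gives
\[ \widetilde{\deg}(f)\le D(f)\le C(f)^2\le O\left(\max\{\mathsf{N}(f),\mathsf{N}(\overline f)\}^4\right). \]

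The argument has no genuine obstacle: all the weight rests on Lemma~\ref{monotone}, which is already established. The only point requiring care is the equality $s(f)=C(f)$ for monotone functions, and in particular verifying that the minimal-certificate argument works on both $0$- and $1$-inputs. The decreasing case reduces to the increasing one by the input negation allowed in Lemma~\ref{lem:restrictions-projections}, which preserves $M(f)$.
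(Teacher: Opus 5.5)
Your overall plan matches the paper's: Lemma~\ref{monotone} together with Nisan's identity $s(f)=bs(f)=C(f)$ for monotone $f$ gives the second claim, and squaring gives the first. The one difference is the last step. The paper uses $\deg(f)\le s(f)^2$ ``for any Boolean function'', which is Huang's sensitivity theorem. Your chain $\widetilde{\deg}(f)\le D(f)\le C^{(0)}(f)C^{(1)}(f)\le C(f)^2$ is elementary and loses nothing here, because $C(f)=s(f)$. Your ``second option'' $D(f)\le bs(f)^2$ is not an improvement: for monotone $f$ it is the same bound, since $bs(f)=C(f)$.

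However, the justification you sketch for $C(f)=s(f)$ is wrong as written, because the pointwise inequality $C(f,x)\le s(f,x)$ is false. Take $f=\mathrm{OR}_2$ and $x=11$: then $C(f,x)=1$ but $s(f,x)=0$. The bit fixed by the certificate $\{x_1=1\}$ is insensitive at $x$, since flipping it leaves $f=1$ because of $x_2$, which lies outside the certificate. Yet that bit cannot be dropped. For increasing $f$, a set $S$ of $1$-bits is a $1$-certificate iff $f(1_S)=1$. So whether bit $i$ can be dropped is decided by $f(1_{S\setminus\{i\}})$, not by $f(x^{\{i\}})$.

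The correct argument moves to a different input. For increasing $f$ with $f(x)=1$, choose $y\preceq x$ minimal with $f(y)=1$. The $1$-bits of $y$ form a $1$-certificate consistent with $x$, and each of them is sensitive at $y$ by minimality, so $C(f,x)\le |y|\le s(f,y)\le s_1(f)$. Dually, for $f(x)=0$ take $z\succeq x$ maximal with $f(z)=0$ to get $C(f,x)\le s_0(f)$. This gives $C(f)\le s(f)$ globally rather than pointwise, which is all the corollary needs. With this repair, or by simply citing Nisan as the paper does, your proof is complete.
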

    \begin{proof}
        For every monotone Boolean functions $f$, $C(f) = s(f) = bs(f)$~\cite{nisan} and for any Boolean function $\widetilde{\deg}(f) \le \deg(f) \le s(f)^2$.
    \end{proof}

\begin{lemma}[Unate functions]
\label{lem:unate-functions}
Let $f:\{0,1\}^n\to\{0,1\}$ be unate. Then
\[ \widetilde{\deg}(f) \le O(M_\epsilon(f)^4). \]
\end{lemma}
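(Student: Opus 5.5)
The plan is to reduce to the monotone case through Lemma~\ref{lem:restrictions-projections} and then apply Corollary~\ref{lem:monotone}, checking that both $M_\epsilon$ and approximate degree are preserved by negating input literals. Since $f$ is unate, write $f(x)=g(x\oplus a)$ with $g$ monotone. Then $g(y)=f(y\oplus a)$, so $g$ arises from $f$ by substituting each $x_i$ by $y_i$ when $a_i=0$ and by $1-y_i$ when $a_i=1$. This is exactly the kind of substitution allowed in Lemma~\ref{lem:restrictions-projections}, which gives $M_\epsilon(g)\le M_\epsilon(f)$. The same substitution applied to $g$ recovers $f$, since $f(x)=g(x\oplus a)$. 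Hence $M_\epsilon(f)=M_\epsilon(g)$, and likewise $\widetilde{\deg}(f)=\widetilde{\deg}(g)$: substituting $x_i\mapsto 1-x_i$ into an approximating polynomial neither raises its degree nor changes its pointwise error.

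Next, apply Corollary~\ref{lem:monotone} to $g$ to get $\widetilde{\deg}(g)\le O(M(g)^4)$. The only mismatch is that Corollary~\ref{lem:monotone} (through Lemma~\ref{monotone}) is stated for $\epsilon=1/3$, while the target is for general constant $\epsilon$. To close this, I would rerun Lemma~\ref{monotone} with a general $\epsilon$. The restriction argument is unchanged: fixing all variables of a maximally $0$-sensitive input except its $z=s_0(g)$ sensitive zeros gives $\overline{OR_z}=\mathsf{NOR}_z$, and Lemma~\ref{lem:restrictions-projections} bounds $\mathsf{N}_\epsilon$ of this restriction by $\mathsf{N}_\epsilon(\overline g)$. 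So what is needed is a version of Lemma~\ref{nor_lower_bound} for $\mathsf{N}_\epsilon(\mathsf{NOR}_z)$.

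For this, take the approximate non-deterministic polynomial $q$ for $\mathsf{NOR}_z$ and set $p=q^2-\epsilon^2$. Then $p(0^z)\ge 1-\epsilon^2>0$, while $-\epsilon^2\le p\le 0$ on every nonzero input. To meet the hypotheses of Lemma~\ref{markov_derivative_special_case}, which require $|p|\le h=p(0^z)$, I would symmetrize $p$ over $S_z$ so that it is a univariate polynomial in the Hamming weight. Then $p(0^z)$ is itself the maximum of $|p|$ on the cube, because the values at nonzero points lie in $[-\epsilon^2,0]$ and $p(0^z)\ge 1-\epsilon^2$. The remaining issue is the case $\epsilon^2>1-\epsilon^2$; there I would first amplify $q$ by applying a constant-degree univariate polynomial that maps $[-\epsilon,\epsilon]$ into $[-1/3,1/3]$ and keeps values outside $(-1,1)$ of absolute value at least $1$ (for instance a suitable power, $q^{2t}$). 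This gives $\mathsf{N}_\epsilon(\mathsf{NOR}_z)\ge\Omega_\epsilon(\sqrt z)$, hence $s(g)\le O(M_\epsilon(g)^2)$.

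Finally, combine: $\widetilde{\deg}(f)=\widetilde{\deg}(g)\le \deg(g)\le s(g)^2\le O(M_\epsilon(g)^4)=O(M_\epsilon(f)^4)$. I expect the reduction to be trivial. The only real obstacle is adapting the constant from $1/3$ to general $\epsilon$ in the $\mathsf{NOR}$ lower bound, which I would handle by the amplification above, at the cost of a constant factor depending on $\epsilon$.
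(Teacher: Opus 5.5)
Your proof is correct and follows the paper's route: you pass to the monotone $g$ with $f(x)=g(x\oplus a)$, use Lemma~\ref{lem:restrictions-projections} (negating literals) to get $M_\epsilon(g)=M_\epsilon(f)$ and $\widetilde{\deg}(g)=\widetilde{\deg}(f)$, and then invoke Corollary~\ref{lem:monotone}. Your extra treatment of general $\epsilon$ (amplifying by $q^{2t}$ with $\epsilon^{2t}\le 1/3$) is sound and makes explicit what the paper only covers by its blanket remark that the arguments extend to any constant $\epsilon$; the symmetrization step is unnecessary, since the amplification alone gives $\mathsf{N}_{1/3}\le 2t\,\mathsf{N}_\epsilon$ and reduces everything to the $\epsilon=1/3$ case.
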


\begin{proof}
    Since $f$ is unate, there exists a monotone function $g$ and a vector
$a\in\{0,1\}^n$ such that $f(x)=g(x\oplus a)$.

The map $x\mapsto x\oplus a$ is a bijection obtained by negating input
literals.  So by using Lemma~\ref{lem:restrictions-projections} we know that $M_\epsilon(g) \leq M_\epsilon(f)$  In fact, $M_\epsilon(g) = M_\epsilon(f)$. Also, $\widetilde{\deg}(f)=\widetilde{\deg}(g)$. Thus by using Corollary~\ref{lem:monotone} we obtain the desired result.
\end{proof}

\subsection{Functions with small alternating number}
We follow proof structure from~\cite{lin2016sensitivityconjecturelogrankconjecture} which gave a relation between sensitivity and block sensitivity in terms of alternating number of the function, but we adapt their proof to show an upper bound on certificate complexity in terms of alternating number and approximate non-deterministic degrees of the function and its negation for zebra functions. We will first look at zebra functions and then functions with bounded alternation number $\operatorname{alt}(f)$ where a monotone path $\mathcal{P}$ could have alternating number $0 \le \operatorname{alt}(\mathcal{P}) \le \operatorname{alt}(f)$. For functions of bounded alternation number, we use their technique of bounding block sensitivity of inputs on paths having no alternation.

\subsubsection{Zebra Functions}

\begin{theorem}
\label{zebra}
    Let $f$ be a non-constant zebra function with $\operatorname{alt}(f)=k$ for some $k \ge 1$. Then \[C(f) \le O\!\left(\operatorname{alt}(f)\max\{\mathsf{N}(f)^2,\mathsf{N}(\overline{f})^2\}\right)\] where $\mathsf{N}$ is the approximate non-deterministic degree.
\end{theorem}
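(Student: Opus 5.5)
The plan is to adapt the layering argument of~\cite{lin2016sensitivityconjecturelogrankconjecture, chugh2022decisiontreecomplexityversus}. The idea is to cut $f$ into at most $k+1$ monotone "slices" and bound each slice using the $\mathsf{NOR}$ lower bound (Lemma~\ref{nor_lower_bound}) together with Lemma~\ref{lem:restrictions-projections}, exactly as in Lemma~\ref{monotone}.

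\textbf{Step 1: levels.} For $x\in\{0,1\}^n$ define $\ell(x)$ as the alternation number of a monotone path from $0^n$ to $x$. Since $f$ is zebra, every monotone path from $0^n$ through $x$ to $1^n$ has exactly $k$ alternations. Replacing one segment of such a path by another keeps it a monotone path from $0^n$ to $1^n$. Hence all paths from $0^n$ to $x$ have the same alternation count, and $\ell$ is well defined. It is monotone non-decreasing along $\preceq$, takes values in $\{0,\dots,k\}$, and satisfies $f(x)=f(0^n)\oplus(\ell(x)\bmod 2)$. Moreover, for $x\preceq y$ the alternation number of every path from $x$ to $y$ equals $\ell(y)-\ell(x)$.

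\textbf{Step 2: reduction to monotone pieces.} For each $j$ let $g_j(x)=[\ell(x)\ge j]$. Each $g_j$ is monotone and $f(x)=f(0^n)\oplus\bigoplus_{j=1}^{k} g_j(x)$. A certificate for $f$ at $x$ is obtained by fixing the value of $\ell(x)$, which requires certifying $g_{\ell(x)}(x)=1$ and $g_{\ell(x)+1}(x)=0$. So
\[
C(f,x)\le C(g_{\ell(x)},x)+C(g_{\ell(x)+1},x)\le 2\max_j C(g_j).
\]
By~\cite{nisan}, $C(g_j)=s(g_j)$ for monotone $g_j$. (If this two-term version fails, I would fall back to summing over all $j$, which costs the factor $k=\operatorname{alt}(f)$ in the statement. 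The statement's factor $\operatorname{alt}(f)$ suggests the intended route is to sum over the $k$ monotone pieces.) It therefore suffices to show $s(g_j)\le O(M(f)^2)$ for every $j$.

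\textbf{Step 3 (main obstacle): sensitivity of $g_j$ in terms of $f$.} Take an input $x$ with $g_j(x)=0$ and a set $S$ of $z$ zero-bits of $x$ each of which is sensitive for $g_j$. I would first replace $x$ by a point on a lower boundary with $\ell(x)=j-1$. To do this, walk $x$ upward while keeping $\ell=j-1$ and keeping $z$ large; a maximal such point has all its upward sensitive bits jumping to level $j$. Now restrict $f$ to the subcube $\{x\vee 1_T : T\subseteq S\}$. On this subcube $f$ equals $f(x)$ at $T=\emptyset$ and $\neg f(x)$ at every singleton. Larger $T$ could, in principle, reach level $j+1$ and flip back.

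To rule this out, I would choose $x$ to be \emph{maximal} in level $j-1$ and restrict each $T$ to pass only through level $j$. The tool is the fact that any path from $x$ to $x\vee 1_T$ has exactly $\ell(x\vee 1_T)-(j-1)$ alternations. If some $T$ reaches level $\ge j+1$, I would extract a smaller witness that shrinks $S$ by at most a constant fraction. Alternatively, I would pass to a sub-family of $S$ of size $\Omega(z)$ on which the restriction is exactly $\mathsf{OR}$ (respectively $\mathsf{NOR}$ after complementing).

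Once an $\mathsf{OR}_{\Omega(z)}$ restriction of $f$ or $\overline f$ is obtained, Lemma~\ref{lem:restrictions-projections} and Lemma~\ref{nor_lower_bound} give $M(f)\ge\Omega(\sqrt z)$. The case $g_j(x)=1$ is symmetric, using downward flips and $\mathsf{AND}$. Combining this with Step 2 yields
\[
C(f)\le O\!\left(\operatorname{alt}(f)\max\{\mathsf{N}(f)^2,\mathsf{N}(\overline f)^2\}\right).
\]

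The hard part is exactly this extraction of a clean $\mathsf{OR}$ restriction from a sensitive set of a level function. The non-monotonicity of $f$ beyond level $j$ is what Lin--Zhang handle with their max-term/min-term arguments. I would first try to import their "min term" lemma, applied to $f$ restricted to the subcube above a maximal level-$(j-1)$ point.
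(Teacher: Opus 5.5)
Your Steps 1 and 2 are correct. In fact the two-term bound $C(f,x)\le C(g_{\ell(x)},x)+C(g_{\ell(x)+1},x)$ holds as stated, so no fallback is needed. The gap is Step 3: it carries the whole content of the theorem, and you do not prove it. Steps 2 and 3 together would give $C(f)\le O(M(f)^2)$ with no dependence on $\operatorname{alt}(f)$, which is stronger than the statement. Summing over all $j$ instead does not help, because it still needs $s(g_j)\le O(M(f)^2)$ for every $j$.

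Both mechanisms you suggest for Step 3 fail on $f=\mathrm{PARITY}_m\circ\mathrm{OR}_b$, a zebra function with $\operatorname{alt}(f)=m$. Its only level-$0$ point is $0^{mb}$, so there is no maximal point to choose. At $0^{mb}$, all $z=mb$ coordinates are sensitive for $g_1=\mathrm{OR}_{mb}$. However, flipping two coordinates from different blocks reaches level $2$. Hence any set of sensitive coordinates on which the restriction is $\mathrm{OR}$ lies inside one block and has size at most $z/m$. Your extraction therefore certifies only $M(f)\ge\Omega(\sqrt{z/m})$, i.e.\ $z\le O(\operatorname{alt}(f)\,M(f)^2)$. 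Reaching your target $z\le O(M(f)^2)$ would need a lower bound of order $\sqrt{mb}$ on $M(\mathrm{PARITY}_m\circ\mathrm{OR}_b)$, and nothing in your plan supplies one. The factor $\operatorname{alt}(f)$ thus enters through the sensitivity of a single slice, not through summing over slices.

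The missing idea is to work at a max term rather than at an arbitrary boundary point. If $u$ is a max term, every $z\succ u$ has $f(z)=f(1^n)\ne f(u)$. So $f$ restricted to $\{z\succeq u\}$ is exactly $\mathrm{OR}$ or $\mathrm{NOR}$ on $S_0(u)$, and Lemma~\ref{nor_lower_bound} gives $|S_0(u)|\le O(M(f)^2)$.

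For $x\preceq u$, certify $f(x)$ by fixing $S_0(u)$ to $0$ and recursing in the subcube below $u$. There $f$ is zebra with alternation $k-1$, so $C(f,x)\le |S_0(u)|+C(f'',x)$, and induction yields $O(k\,M(f)^2)$. Inputs above a min term are handled by complementing the inputs. In your level language, $x$ lies below a max term whenever $\ell(x)<k$ and above a min term whenever $\ell(x)>0$, so every $x$ is covered.

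The same peeling, carried out inside the subcube above a level-$(j-1)$ point, proves the correct form of your Step 3, namely $s(g_j)\le O(\operatorname{alt}(f)\,M(f)^2)$. Your Step 2 then completes the argument.
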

\begin{proof}
    Let us look at any $x \in \{0,1\}^n$. It will satisfy at least one of the following conditions:
    \begin{enumerate}
        \item $x \preceq u$ for some max term $u$.
        \item $x \succeq d$ for some min term $d$.
    \end{enumerate}
    Let us first look at the case where alternation number is odd which means $f(0^n) \ne f(1^n)$.
    Since $f$ is non-constant, there exists $x \ne 1^n$ such that $f(x)\ne f(1^n)$. For such $x$, there has to be a maximal $u \in \{0,1\}^n$ such that $x \preceq u$ and $f(u) \ne f(1^n)$ and this $u$ is then one of the max terms of $f$.
    Now, for the case when $f(x)=f(1^n)$, again since $f$ is non-constant, there exists a $x' \in \{0,1\}^n$ such that $f(0^n) \ne f(x')$ and $x' \preceq x$. Minimal of such $x'$ will be a min term $d \in \{0,1\}^n$. Therefore, in this case, $x$ follows at least one of the above two conditions.

    Now, let us look at the case when the alternation number is even and this means $f(0^n)=f(1^n)$. Let us first look at $x$ such that $f(x) \ne f(1^n)$. For such $x$ there has to both a maximal $u \in \{0,1\}^n$ with $x \preceq u$ and minimal $d \in \{0,1\}^n$ with $d \preceq x$ such that $u$ and $d$ are the max term and min term respectively. Now, for $x$ such that $f(x) = f(0^n) = f(1^n)$, since every monotone path has same alternation number and $\operatorname{alt}(f) > 1$ in this case, either $\operatorname{alt}(0^n,x) > 1$ or $\operatorname{alt}(x,1^n) > 1$. In case of $\operatorname{alt}(0^n,x) > 1$, there has to be a min term $d \preceq x$ and in case of $\operatorname{alt}(x,1^n) > 1$, there has to be a max term $x \preceq u$. So again, $x$ has to follow one of the two above conditions.

    For the rest of the proof, we will be using induction on $\operatorname{alt}(f)$. We know that for $\operatorname{alt}(f)=1$, $f$ is monotone and therefore, from Lemma~\ref{monotone} we have the result. For induction hypothesis, let us assume the result is true for every $\operatorname{alt}(f)<k$. We will now prove $C(f,x) \le O\left( \operatorname{alt}(f) \max\{\mathsf{N}(f)^2,\mathsf{N}(\overline{f})^2\}\right)$ for $\operatorname{alt}(f)=k$.
    
    Let us first look at the case when $x \preceq u$ for some max term $u$. We know that for every $z \succ u$, $f(z)\ne f(u)$. That means every $0$ in $u$ is sensitive. Let $S_0(u) =\{i \in [n] \mid u_i=0\}$. Therefore,

    \[ |S_0(u)| \le s(f,u). \]
    Let us now look at restriction of $f$ to the sub-cube $\{z \succeq u\}$ by keeping the variables at $S_0(u)$ indices free. This restriction $f'$ is clearly obtained by fixing the $1's$ in $u$. Note that $f'$ is a monotone function. Therefore, $s(f') \le O(\max\{\mathsf{N}(f')^2,\mathsf{N}(\overline{f'})^2\})$. Also, $s(f',0^{|S_0(u)|}) = |S_0(u)| \le s(f') \le O(\max\{\mathsf{N}(f')^2,\mathsf{N}(\overline{f'})^2\})$. So we get an equation which will be used later.
    \begin{equation}\label{zero_indices}
        |S_0(u)| \le O(\max\{\mathsf{N}(f')^2,\mathsf{N}(\overline{f'})^2\}) \le O(\max\{\mathsf{N}(f)^2,\mathsf{N}(\overline{f})^2\}).
    \end{equation}

    We will now focus on the restriction of $f$ on the sub-cube $\{z \preceq u\}$. This restriction $f''$ is obtained by fixing the $S_0(u)$ indices to be zero and leaving the rest of the variables free. For every such $x$ in this sub-cube, we will have,
    \begin{equation}\label{certificate_complexity_restriction_sensitivity}
        C(f,x) \le C(f'',x^{[n]-S_0(u)}) + |S_0(u)|.
    \end{equation}
    where $x^{[n]-S_0(u)}$ denotes the bits of $x$ at all indices $[n]$ but $S_0(u)$.
    This is because let $C(f'',x^{[n]-S_0(u)})$ be the smallest size of an $f''(x^{[n]-S_0(u)})$-certificate consistent with $x^{[n]-S_0(u)}$ and so it is an assignment $C'' : S \subseteq \{[n]-S_0(u)\} \rightarrow \{0,1\}$. Since $f''$ is a restriction of $f$ over subcube $\{x \preceq u\}$, this means that $f''(x^{[n]-S_0(u)})=f(x)$ for $x \preceq u$ and the coordinates $x_i=0$ when $i\in S_0(u)$. Now, if we take an assignment $C'$ such that
    \begin{equation*}
        C'_i = \begin{cases}
        C''_i  & \text{for } i \in {[n]-S_0(u)}, \\
        0 & \text{for }i \in S_0(u).
        \end{cases}
    \end{equation*}
    
    Then $C'$ will be an $f(x)$-certificate, consistent with $x$. 
    We also know that $\operatorname{alt}(f'') =  \operatorname{alt}(f)-1$.  
    This is because every path from $u$ to $1^n$ has to have alternation $1$ since $f(u) \ne f(1^n)$. Therefore, the path from $0^n$ to $u$ will have alternation $\operatorname{alt}(f)-1$ to make sure the alternation from $0^n$ to $1^n$ is $\operatorname{alt}(f)$.
    Also note that $f''$ is a zebra function. Let us assume that it was not a zebra function. Then there would exist paths $P$ and $P'$ from $0^n$ to $u$ with $\operatorname{alt}(P) \ne \operatorname{alt}(P')$. But since we know that the path from $u$ to $1^n$ has alternation 1, there would exist two paths from $0^n$ to $1^n$ with alternations $\operatorname{alt}(P)+1$ and $\operatorname{alt}(P')+1$ which would contradict the fact that $f$ is a zebra function. Hence $f''$ has to be a zebra function.
    So, we can use the induction hypothesis on $f''$ to get $C(f'') \le O(\operatorname{alt}(f'')(\max\{\mathsf{N}(f'')^2,\mathsf{N}(\overline{f''})^2\}))$.
    Also, since $f''$ is a restriction of $f$, $\max\{\mathsf{N}(f'')^2,\mathsf{N}(\overline{f''})^2\}) \le \max\{\mathsf{N}(f)^2,\mathsf{N}(\overline{f})^2\})$ and we get,
    \begin{equation} \label{lesser_subcube_induction}
        C(f'',x^{[n]-S_0(u)}) \le  O((\operatorname{alt}(f)-1)(\max\{\mathsf{N}(f)^2,\mathsf{N}(\overline{f})^2\})).
    \end{equation}
    Putting equations \eqref{zero_indices} and \eqref{lesser_subcube_induction} into equation \eqref{certificate_complexity_restriction_sensitivity}, we get for $x \preceq u$,
    \begin{equation}
        C(f,x) \le O(\operatorname{alt}(f) (\max\{\mathsf{N}(f)^2,\mathsf{N}(\overline{f})^2\})).
    \end{equation}

To prove the case when $x \succeq d$, we will look at the function $g$ such that $f(x) = g(\overline{x})$. We clearly have $C(g,\overline{x}) = C(f,x)$, $\operatorname{alt}(f)=\operatorname{alt}(g)$ and $(\max\{\mathsf{N}(f)^2,\mathsf{N}(\overline{f})^2\}) = (\max\{\mathsf{N}(g)^2,\mathsf{N}(\overline{g})^2\})$. $d \preceq x$ means $d_i=1$ implies $x_i=1$. This translates to $\overline{d_i}=0$ implies $\overline{x_i}=0$. Therefore, $\overline{d} \succeq \overline{x}$. This proves the existence of a max term $u' \succeq \overline{d}$ for $\overline{x}$ for function $g$. Therefore, we can use the previous case to get $C(g,\overline{x}) \le O(\operatorname{alt}(g) \max\{\mathsf{N}(g)^2,\mathsf{N}(\overline{g})^2\})$. Hence, when $d \preceq x$, finally we get
\[C(f,x) \le O(\operatorname{alt}(f)(\max\{\mathsf{N}(f)^2,\mathsf{N}(\overline{f})^2\})).\]
Both these cases covers all $x \in \{0,1\}^n$ and hence we have 
\[ C(f) \le O(\operatorname{alt}(f)\max\{\mathsf{N}(f)^2,\mathsf{N}(\overline{f})^2\}).\]
\end{proof}

\begin{corollary}\label{zebra_D_approxndeg}
    If $f$ is a non-constant zebra Boolean function with $\operatorname{alt}(f)=k$, then \[D(f) \le O(k^2 \max\{\mathsf{N}(f)^4,\mathsf{N}(\overline{f})^4\}).\]
\end{corollary}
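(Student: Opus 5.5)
The plan is to combine Theorem~\ref{zebra} with the classical bound $D(f)\le C^{(0)}(f)\,C^{(1)}(f)\le C(f)^2$, which holds for every total Boolean function. Theorem~\ref{zebra} gives, for a non-constant zebra function with $\operatorname{alt}(f)=k$,
\[ C(f)\le O\!\left(k\max\{\mathsf{N}(f)^2,\mathsf{N}(\overline f)^2\}\right). \]
Squaring then yields $D(f)\le C(f)^2\le O\!\left(k^2\max\{\mathsf{N}(f)^2,\mathsf{N}(\overline f)^2\}^2\right)=O\!\left(k^2\max\{\mathsf{N}(f)^4,\mathsf{N}(\overline f)^4\}\right)$, which is the claimed bound.

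Concretely, I will proceed in three steps.

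\begin{enumerate}
\item Recall the standard fact $D(f)\le C^{(0)}(f)C^{(1)}(f)$. The decision tree repeatedly picks a $1$-certificate consistent with the answers so far and queries its unqueried variables. Every $0$-certificate must conflict with every $1$-certificate on some variable, so each round fixes at least one new variable of every remaining $0$-certificate. Hence at most $C^{(0)}(f)$ rounds are needed, each costing at most $C^{(1)}(f)$ queries.
\item Bound both $C^{(0)}(f)$ and $C^{(1)}(f)$ by $C(f)$.
\item Substitute the estimate from Theorem~\ref{zebra}, using $\max\{a^2,b^2\}^2=\max\{a^4,b^4\}$.
\end{enumerate}

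There is no real obstacle here; the substantive work is already done in Theorem~\ref{zebra}. The only point to check is that the constants hidden in the $O(\cdot)$ remain absolute constants after squaring, which is immediate. The non-constancy hypothesis is needed only so that Theorem~\ref{zebra} applies.
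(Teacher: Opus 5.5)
Your proposal is correct and follows essentially the same route as the paper: it applies $D(f)\le C^{(0)}(f)C^{(1)}(f)\le C(f)^2$ and substitutes the bound $C(f)\le O\bigl(k\max\{\mathsf{N}(f)^2,\mathsf{N}(\overline f)^2\}\bigr)$ from Theorem~\ref{zebra}. The only difference is that you sketch the standard proof of $D(f)\le C^{(0)}(f)C^{(1)}(f)$, which the paper simply cites.
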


    \begin{proof}
        We know that $D(f) \le C^{(0)}(f)C^{(1)}(f)$~\cite{survey_complexity_measures}. Using the previous Theorem~\ref{zebra} and using the definition $C^{(0)}(f) \le C(f)$ and $C^{(1)}(f) \le C(f)$, we have 
        \begin{equation}
            D(f) \le O(k^2 \max\{\mathsf{N}(f)^4,\mathsf{N}(\overline{f})^4\}).
        \end{equation}
    \end{proof}
    
The corollary below states that the Conjecture~\ref{approx_ndeg_conjecture} is true for total zebra functions having small alternating number.
\begin{corollary}
    Let $f: \{0,1\}^n \rightarrow \{0,1\}$ have $\operatorname{alt}(f) \le O(poly \log{n})$. Then, $D(f) \le \widetilde{O} (\mathsf{N}(f)^4\mathsf{N}(\overline{f})^4)$
\end{corollary}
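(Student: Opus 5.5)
The plan is to deduce this corollary directly from Corollary~\ref{zebra_D_approxndeg}. The statement sits in the zebra subsection and the preceding sentence describes it as being about ``total zebra functions having small alternating number'', so I read $f$ as a non-constant zebra function. For a constant $f$ we have $D(f)=0$ and there is nothing to prove. Corollary~\ref{zebra_D_approxndeg} gives
\[ D(f) \le O\!\left(\operatorname{alt}(f)^2 \max\{\mathsf{N}(f)^4,\mathsf{N}(\overline{f})^4\}\right). \]

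The first step is to bound the maximum by the product. Since $f$ is non-constant, neither $f$ nor $\overline f$ is identically zero. A degree-$0$ approximate non-deterministic polynomial is a constant $c$, and it cannot satisfy both $|c|\ge 1$ on $1$-inputs and $|c|\le 1/3$ on $0$-inputs. Hence $\mathsf{N}(f)\ge 1$ and $\mathsf{N}(\overline f)\ge 1$. It follows that
\[ \max\{\mathsf{N}(f)^4,\mathsf{N}(\overline f)^4\}\le \mathsf{N}(f)^4\mathsf{N}(\overline f)^4. \]

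The second step absorbs the alternation factor. By hypothesis $\operatorname{alt}(f)\le \operatorname{poly}\log n$, so $\operatorname{alt}(f)^2$ is also polylogarithmic in $n$. This is precisely the kind of factor that $\widetilde O$ suppresses, interpreting $\widetilde O$ as hiding $\operatorname{poly}\log n$ factors. Combining the two steps yields $D(f)\le \widetilde O(\mathsf{N}(f)^4\mathsf{N}(\overline f)^4)$.

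The main obstacle is interpretive rather than technical. First, one must be sure that $\widetilde O$ hides $\log n$ factors rather than only logarithms of the displayed quantity; the argument above uses the former. Second, the zebra hypothesis is needed. If the corollary were meant for arbitrary $f$, I would instead invoke the bounded-alternation bound (Corollary~\ref{cor:bounded-alt-D}), giving $D(f)\le O(\operatorname{alt}(f)^3 M(f)^6)$. Its exponent $6$ would not fit $\mathsf{N}(f)^4\mathsf{N}(\overline f)^4$ without extra work, which is why I restrict to zebra functions as the surrounding text indicates.
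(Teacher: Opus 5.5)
Your proposal is correct and follows the paper's route: the paper's proof just invokes Corollary~\ref{zebra_D_approxndeg} for (implicitly zebra) $f$ and absorbs the polylogarithmic factor $\operatorname{alt}(f)^2$ into $\widetilde O$. You also correctly supply the details the paper leaves implicit, namely the zebra hypothesis, the constant case, and the step $\max\{\mathsf{N}(f)^4,\mathsf{N}(\overline f)^4\}\le \mathsf{N}(f)^4\mathsf{N}(\overline f)^4$, which holds because $\mathsf{N}(f),\mathsf{N}(\overline f)\ge 1$ for non-constant $f$.
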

    \begin{proof}
        The proof is straight-forward from previous Corollary~\ref{zebra_D_approxndeg}.
    \end{proof}

\subsubsection{Functions with bounded alternating number}

We will now look at functions which are not necessarily zebra but have bounded alternation number $\operatorname{alt}(f)=k$. This proof is similar to~\cite{lin2016sensitivityconjecturelogrankconjecture} where they use the bound on zero block and one block parts of the sensitive blocks of input $x$ which is on monotone path with no alternations. We repeat the proof by using approximate non deterministic degree instead of sensitivity and the proof is given in the Appendix~\ref{alternating_number_appendix}. 
\begin{theorem}\label{alternating_number_proof_2}
    Let $f:\{0,1\}^n \rightarrow \{0,1\}$ be alternation number $k$. Then, $bs(f) \le O\left(k \max\{\mathsf{N}(f)^2,\mathsf{N}(\overline{f})^2\}\right)$.
\end{theorem}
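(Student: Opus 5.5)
The plan is to reduce block sensitivity to sensitivity of a small auxiliary function, obtained through identifications of variables, and then use the alternation bound to locate a point of large sensitivity. Lemma~\ref{nor_lower_bound} (through Lemma~\ref{lem:restrictions-projections}) then converts that sensitivity into approximate non-deterministic degree. Throughout I write $M:=\max\{\mathsf N(f),\mathsf N(\overline f)\}$.

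\emph{Step 1: compress the blocks into $2b$ variables.} Fix $x$ with $bs(f,x)=b$ and disjoint minimal sensitive blocks $B_1,\dots,B_b$. Split each block as $B_i=B_i^0\cup B_i^1$, where $B_i^0$ is where $x$ is $0$ and $B_i^1$ is where $x$ is $1$. Define $g:\{0,1\}^{2b}\to\{0,1\}$ in variables $y_1,\dots,y_b,z_1,\dots,z_b$ by the following substitution into $f$:
\begin{itemize}
\item every variable in $B_i^0$ is replaced by $y_i$;
\item every variable in $B_i^1$ is replaced by $z_i$;
\item every variable outside $\bigcup_i B_i$ is fixed to its value in $x$.
\end{itemize}
By Lemma~\ref{lem:restrictions-projections}, $\mathsf N(g)\le\mathsf N(f)$ and $\mathsf N(\overline g)\le\mathsf N(\overline f)$. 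A monotone path in $\{0,1\}^{2b}$ lifts to a monotone path of $f$ with the same sequence of values, so $\operatorname{alt}(g)\le k$. The point $a=(0^b,1^b)$ corresponds to $x$. For every $i$, flipping the pair $\{y_i,z_i\}$ at $a$ changes the value of $g$. Hence $g$ has $b$ disjoint sensitive blocks of size at most $2$ at $a$.

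\emph{Step 2: find a point with sensitivity $\Omega(b/k)$ in a monotone-behaving region.} I would use the chain $a^{\downarrow}=(0^b,0^b)\preceq a\preceq a^{\uparrow}=(1^b,1^b)$. I would build a monotone path from $a^{\downarrow}$ through $a$ to $a^{\uparrow}$ as follows.
\begin{itemize}
\item Going down from $a$ toward $a^{\downarrow}$, switch off the $z_i$ one at a time.
\item Going up from $a$ toward $a^{\uparrow}$, switch on the $y_i$ one at a time.
\item At each step, choose greedily an index whose flip does \emph{not} change the value, as long as such an index exists.
\end{itemize}
Along any such path the value changes at most $k$ times. Consider a maximal stretch on which no value-preserving flip is available. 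At every point of that stretch, all remaining flips of that type are sensitive. If some such stretch starts with $\Omega(b/k)$ available indices, its starting point $w$ has $s(g,w)\ge\Omega(b/k)$, with all those sensitive coordinates lying on the same side (all zeros or all ones). Fixing everything else then yields $\mathsf{NOR}$ or $\mathsf{AND}$ on $\Omega(b/k)$ variables as a restriction of $g$ or $\overline g$. So Lemma~\ref{nor_lower_bound}, together with Lemma~\ref{lem:restrictions-projections}, gives $M^2\ge\Omega(b/k)$, i.e.\ $b\le O(kM^2)$.

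\emph{Step 3: the main obstacle.} The hard case is when no such stretch exists, so that most of the $b$ pair-flips are absorbed by value-preserving single flips. Here I would use the pair-sensitivity at $a$. Suppose both $a\oplus e_{y_i}$ and $a\oplus e_{z_i}$ agree with $g(a)$ for many $i$ (minimality of $B_i$ gives exactly this whenever both parts are nonempty). Then the $2$-dimensional faces $\{y_i,z_i\}$ at $a$ each carry a value change across the diagonal. Chaining many of these faces along one monotone path, by alternately raising some $y_i$ and later lowering the matching $z_i$, should force more than $k$ alternations unless the faces cluster into at most $k$ groups. Each group should then behave monotonically and contain a point sensitive to $\Omega(b/k)$ coordinates.

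Making this grouping precise is where I expect the real work to be; this is the step borrowed from~\cite{lin2016sensitivityconjecturelogrankconjecture}. Sensitivity is replaced by approximate non-deterministic degree only at the final $\mathsf{NOR}$/$\mathsf{AND}$ restriction. If the grouping costs an extra factor of $k$, the fallback target is $b\le O(k^2M^2)$.
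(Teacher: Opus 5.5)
There is a genuine gap, in two places.

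First, the conversion at the end of Step~2 fails. Knowing that every remaining $y_i$ is sensitive at $w$ only determines the restriction $h$ (all other coordinates fixed as in $w$) on Hamming weights $0$ and $1$. Since $g$ is not monotone, $h$ is unconstrained at weight $\ge 2$, so it need not be $\mathsf{NOR}$ or $\mathsf{AND}$. For example, it could be $\mathrm{EXACT}_1$, which has alternation only $2$.

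Nor can you feed sensitivity alone into Lemma~\ref{markov_derivative_special_case}. That lemma needs $|p|\le h$ on the whole cube, and an approximate non-deterministic polynomial need not be bounded on its $1$-inputs. In Lemma~\ref{nor_lower_bound} this bound holds (after squaring) only because $0^n$ is the unique $1$-input of $\mathsf{NOR}$.

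The paper gets an exact $\mathsf{OR}/\mathsf{NOR}$ from max terms instead. If $u$ is a max term, then $f(z)=f(1^n)\ne f(u)$ for \emph{every} $z\succ u$. So fixing the ones of $u$ yields exactly $\mathsf{OR}$ (or its negation) on $S_0(u)$, which gives $|S_0(u)|=O(M^2)$. Meanwhile the restriction to $\{z\preceq u\}$ has alternation at most $k-1$. Induction on alternation then gives the $O(kM^2)$ bound for every $x$ below a max term, and dually for every $x$ above a min term.

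Second, Step~3 is a hope rather than an argument, and you rightly identify it as the heart of the matter. Your fallback $O(k^2M^2)$ would not prove the stated bound. The paper does not build a single path here; it uses the following argument.

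\begin{itemize}
\item If $x$ is in neither of the above cases, $f$ is constant on $\{z\preceq x\}$ and on $\{z\succeq x\}$. So each minimal block $B_i$ splits into nonempty $U_i,D_i$ with $f(x^{U_i})=f(x^{D_i})=f(x)$.
\item Fix $j$. The sets $D_i$ with $f(x^{D_i\cup B_j})\ne f(x^{B_j})$ are disjoint sensitive blocks of $x^{B_j}$ for the restriction of $f$ to $\{z\succeq x^{B_j}\}$. Similarly, the corresponding $U_i$ are disjoint sensitive blocks for the restriction to $\{z\preceq x^{B_j}\}$.
\item The alternations of these two restrictions sum to at most $k$, so the number of such bad $i$ per column is controlled inductively.
\item Averaging over the resulting $2\ell\times\ell$ matrix yields some $T_i\in\{U_i,D_i\}$ such that $x^{T_i}$ keeps $\ell-O(kM^2)$ disjoint sensitive blocks.
\item We have $x^{U_i}\preceq x^{B_i}\preceq x^{D_i}$, and $x^{B_i}$ lies below a max term and above a min term. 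Hence $x^{T_i}$ falls into the max-term/min-term case, which forces $\ell=O(kM^2)$.
\end{itemize}

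Your Step~1 compression is valid but does not make this step any easier. The missing ingredients are the max-term/min-term induction and this double-counting step.
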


\begin{corollary}\label{cor:bounded-alt-D}
For every Boolean function $f$,
\[ D(f) \le O(\operatorname{alt}(f)^3\max\{\mathsf{N}(f)^6,\mathsf{N}(\overline{f})^6\}). \]
\end{corollary}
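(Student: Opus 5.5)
We combine Theorem~\ref{alternating_number_proof_2} with the classical polynomial relations between block sensitivity, certificate complexity and decision tree complexity. Concretely, we use the standard facts (Nisan; see the survey~\cite{survey_complexity_measures}) that for every total Boolean function $f$,
\[ C(f) \le bs(f)\, s(f) \le bs(f)^2 \qquad\text{and}\qquad D(f) \le C^{(1)}(f)\, bs(f) \le C(f)\, bs(f). \]
Chaining these gives $D(f) \le bs(f)^3$. This is exactly the shape of the claimed bound, with exponent $3$ on $\operatorname{alt}(f)$ and exponent $6$ on $\mathsf{N}$.

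The steps are as follows. First, if $f$ is constant then $D(f)=0$ and there is nothing to prove, so we may assume $f$ is non-constant, and hence $\operatorname{alt}(f)\ge 1$. Second, we invoke Theorem~\ref{alternating_number_proof_2} with $k=\operatorname{alt}(f)$ to get
\[ bs(f) \le O\!\left(\operatorname{alt}(f)\max\{\mathsf{N}(f)^2,\mathsf{N}(\overline f)^2\}\right). \]
Third, we bound $C(f) \le bs(f)^2$, using $s(f)\le bs(f)$. Then $D(f)\le C(f)\,bs(f)\le bs(f)^3$. Cubing the bound from the second step yields
\[ D(f) \le O\!\left(\operatorname{alt}(f)^3\max\{\mathsf{N}(f)^6,\mathsf{N}(\overline f)^6\}\right), \]
which is the statement.

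The only step with any real content is the appeal to Theorem~\ref{alternating_number_proof_2}, whose proof is deferred to the appendix. Everything else is routine bookkeeping with standard inequalities. The one point to check is that we use the version $D(f)\le C^{(1)}(f)\,bs(f)$ (Beals et al./Nisan) and not $D(f)\le C^{(0)}C^{(1)}$. The latter would cost a factor of $bs^4$, giving exponents $4$ and $8$, which is worse than claimed.
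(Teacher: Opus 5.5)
Your proof is correct and matches the paper's: it also applies Theorem~\ref{alternating_number_proof_2} and then the standard bound $D(f)\le O(bs(f)^3)$. You have simply unpacked that bound via Nisan's $C(f)\le s(f)\,bs(f)$ and the Beals et al.\ inequality $D(f)\le C^{(1)}(f)\,bs(f)$, and you are right that going through $D(f)\le C^{(0)}(f)C^{(1)}(f)$ would only yield $bs(f)^4$.
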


\begin{proof}
By Theorem~\ref{alternating_number_proof_2} we have  $bs(f)\le O(\operatorname{alt}(f)\max\{\mathsf{N}(f)^2,\mathsf{N}(\overline{f})^2\})$. Using the standard bound $D(f)\le O(bs(f)^3)$, the result follows.
\end{proof}

\subsection{Symmetric functions}

\begin{lemma}\label{symm_alt}
    Let $f: \{0,1\}^n \rightarrow \{0,1\}$ be an arbitrary symmetric Boolean function and let $\operatorname{alt}(f)$ be its alternating number. Then, 
    \[ \mathsf{N}(f) \ge \frac{\operatorname{alt}(f)}{2}.\]
\end{lemma}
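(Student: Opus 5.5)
The plan is to symmetrize and reduce to a univariate polynomial on Hamming weights. Let $p$ be an approximate non-deterministic polynomial for $f$ (with $\epsilon=1/3$) of degree $d=\mathsf{N}(f)$. Averaging $p$ over all permutations of the variables (Minsky--Papert symmetrization) preserves the degree bound. It also preserves the pointwise constraints only if we first square: set $r(x)=p(x)^2$. Then $r(x)\le 1/9$ when $f(x)=0$ and $r(x)\ge 1$ when $f(x)=1$. Both conditions survive averaging over permutations, because $f$ is symmetric and so is constant on each orbit. The symmetrized polynomial equals $P(|x|)$ for a univariate polynomial $P$ of degree at most $2d$. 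Moreover $P(w)\le 1/9$ for every weight $w$ with $f=0$ on layer $w$, and $P(w)\ge 1$ for every weight with $f=1$ on layer $w$.

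Next I count alternations. Since $f$ is symmetric, $f(x)=F(|x|)$ for some $F:\{0,\dots,n\}\to\{0,1\}$. Every monotone path from $0^n$ to $1^n$ visits the weights $0,1,\dots,n$ in order, so $\operatorname{alt}(f)$ equals the number of indices $w$ with $F(w)\ne F(w+1)$. Call this number $k$. Now consider $Q(t)=P(t)-1/2$. At each such index, $Q(w)$ and $Q(w+1)$ have strictly opposite signs, since one value is at most $1/9-1/2<0$ and the other is at least $1/2>0$. By the intermediate value theorem, $Q$ has a root in each open interval $(w,w+1)$. These $k$ intervals are disjoint, so $Q$ has at least $k$ distinct real roots.

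Finally, $Q$ is a nonzero polynomial whenever $k\ge1$: it takes both signs, so it is not constant. Hence $2d\ge \deg Q\ge k$, which gives $\mathsf{N}(f)\ge \operatorname{alt}(f)/2$. When $k=0$ the bound is trivial.

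The only delicate point is the symmetrization step. Averaging $p$ itself could create cancellations, because $p$ may take large values of both signs on $1$-inputs. Squaring first removes this problem, and it is exactly where the factor $1/2$ in the statement comes from.
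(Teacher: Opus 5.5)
Your proof is correct and follows the paper's route: square $p$, symmetrize to a univariate polynomial of degree at most $2\mathsf{N}(f)$, shift by a threshold separating the $0$- and $1$-values, and count one real root per alternation (the paper shifts by $2/3$, you by $1/2$). Your write-up is slightly more careful than the paper's, since it makes the intermediate value argument, the $1/9$ bound, and the fact that the shifted polynomial is nonzero explicit.
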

    \begin{proof}
        Let $p(x)$ be approximate non-deterministic polynomial for $f(x)$ with $ \mathsf{N}(f) = \deg(p)=d$ and polynomial $p': \mathbb{R} \rightarrow \mathbb{R}$ be such that $p'(k) = \mathbb{E}_{|x|=k}\left[p(x)^2\right]$ and so $\deg(p') \le 2d$. Therefore, when $f(x)=0$, $0 \le p'(|x|) \le 1/3$ and $f(x)=1$ implies $p'(|x|) \ge 1$. $\operatorname{alt}(f)$ in case of symmetric Boolean functions will be 
        \[ \operatorname{alt}(f) = |K_1 \cup K_2|\]
        where \[K_1 = \left\{k \in \{0,\dots,n-1\} \mid p'(k) \le 1/3 \, , p'(k+1) \ge 1\right\} \text{ and }\] \[ K_2 = \{k \in \{0,\dots,n-1\} \mid p'(k) \ge 1 \, ,  p'(k+1) \le 1/3\}.\]
        Note that  $p''(k) = p'(k)-2/3$ have number of roots at least equal to or more than $\operatorname{alt}(f)$. Therefore, $\mathsf{N}(f) = \deg(p) \ge \frac{\deg(p')}{2}= \frac{\deg(p'')}{2} \ge \frac{\operatorname{alt}(f)}{2}$.  
    \end{proof}

\begin{corollary}\label{symmetric_relation}
     Let $f: \{0,1\}^n \rightarrow \{0,1\}$ be an arbitrary symmetric Boolean function. Then, 
    \[ D(f) \le O(\max\{\mathsf{N}(f)^6,\mathsf{N}(\overline{f})^6\}) .\]
\end{corollary}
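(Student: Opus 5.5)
The plan is to combine Lemma~\ref{symm_alt} with Corollary~\ref{cor:bounded-alt-D}. Corollary~\ref{cor:bounded-alt-D} gives, for every Boolean function,
\[ D(f) \le O\!\left(\operatorname{alt}(f)^3\max\{\mathsf{N}(f)^6,\mathsf{N}(\overline{f})^6\}\right). \]
This already has the right form except for the factor $\operatorname{alt}(f)^3$, which is not a constant for general symmetric $f$. So the task reduces to bounding $\operatorname{alt}(f)$ by $O(M(f))$, where $M(f)=\max\{\mathsf{N}(f),\mathsf{N}(\overline f)\}$. Lemma~\ref{symm_alt} gives exactly $\operatorname{alt}(f)\le 2\mathsf{N}(f)\le 2M(f)$.

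Substituting this directly would give $D(f)\le O(M(f)^{9})$, which is weaker than the claimed exponent $6$. To reach the exponent $6$, I would not use the general Corollary~\ref{cor:bounded-alt-D}. Instead I would bound $bs(f)$ directly for symmetric $f$, aiming for $bs(f)\le O(M(f)^2)$, and then apply $D(f)\le O(bs(f)^3)$.

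For symmetric functions, $bs(f)=\Theta(s(f))$, and the sensitivity is governed by the layer boundaries. Suppose $x$ has weight $w$ and $f$ changes value between layers $w$ and $w+1$. Then $x$ is sensitive to each of its $n-w$ zeros. If $f$ changes between layers $w-1$ and $w$, then $x$ is sensitive to each of its $w$ ones. So $s(f)\le \max$ over such boundaries $w$ of $\max(w,n-w)$, up to a factor $2$.

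Now fix a boundary between layers $w$ and $w+1$ at which $x$ is sensitive on its $n-w$ zeros. Fix the ones of $x$ and leave the $n-w$ zero coordinates free. On the first two levels of this subcube, $f$ restricts to either $\mathsf{OR}_{n-w}$ or its negation. A single boundary does not by itself make the restriction of $f$ exactly $\mathsf{OR}$, since further layers may alternate. To handle this, I would use the symmetric-polynomial approach of Lemma~\ref{symm_alt}. Symmetrize the square of the approximate non-deterministic polynomial (of $f$ or $\overline f$, whichever has value $1$ on layer $w$) over the free coordinates. This gives a univariate polynomial $P$ of degree at most $2M(f)$ such that
\[ P(0)\ge 1, \qquad P(1)\le 1/3, \qquad P(j)\ge 0 \text{ for all integers } 0\le j\le n-w. \]

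The main obstacle is turning these conditions into a lower bound on the degree of $P$, because $P$ is not bounded on the later integer points. My first attempt would be to restrict to a short range of layers on which the function is constant. By Lemma~\ref{symm_alt}, there are at most $2M(f)$ alternations, so some interval of at least $(n-w)/(2M(f)+1)$ consecutive layers has $f$ constant. On that interval $P$ is bounded by $1/3$ or bounded below by $1$. A Markov/Paturi-type argument, via Lemma~\ref{markov_derivative_special_case}, should then give
\[ \deg P \ge \Omega\!\left(\sqrt{(n-w)/M(f)}\right). \]
This yields $n-w\le O(M(f)^3)$, and hence $D(f)\le n$-type bounds polynomial in $M(f)$.

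If this step loses too much, I would fall back on the crude estimate $D(f)\le O(M(f)^{9})$, which still establishes Conjecture~\ref{approx_ndeg_conjecture} for symmetric functions. I would expect the intended argument to be the one giving the sixth power, with $bs(f)\le O(M(f)^2)$ obtained via the boundary-restriction argument above.
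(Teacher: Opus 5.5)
Your proposal does not reach the exponent $6$. Combining Corollary~\ref{cor:bounded-alt-D} with Lemma~\ref{symm_alt} gives only $D(f)\le O(M(f)^9)$, as you note, and your fallback concedes this. The missing observation, which is the paper's entire proof, is that every symmetric function is a zebra function. A monotone path from $0^n$ to $1^n$ visits the Hamming layers $0,1,\dots,n$ in order. So every such path has alternation number $|\{j : D(j)\ne D(j+1)\}|$, where $f(x)=D(|x|)$. Hence Corollary~\ref{zebra_D_approxndeg} applies and gives $D(f)\le O(\operatorname{alt}(f)^2M(f)^4)$. This has $\operatorname{alt}(f)^2$ in place of the $\operatorname{alt}(f)^3$ from Corollary~\ref{cor:bounded-alt-D}. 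Lemma~\ref{symm_alt} gives $\operatorname{alt}(f)\le 2\mathsf{N}(f)\le 2M(f)$, so $D(f)\le O(M(f)^6)$.

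Your alternative route is not a proof as written. The target $bs(f)\le O(M(f)^2)$ is never established. Since $bs(f)=\Theta(n)$ for nonconstant symmetric $f$, that target amounts to $M(f)=\Omega(\sqrt n)$. In the constant-interval step, the polynomial $P$ you fixed is for whichever of $f,\overline f$ equals $1$ on layer $w$. It may be $\ge 1$ on the long interval, and then it yields no degree bound. Also, going through $D(f)\le O(bs(f)^3)$ would cube whatever bound you obtain.

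The idea can be repaired, and it then gives more than the statement. Take a maximal run $[a,b]$ of layers on which $D$ is constant. There are $\operatorname{alt}(f)+1\le 2M(f)+1$ runs, so $m:=b-a+1\ge (n+1)/(2M(f)+1)$. Choose $g\in\{f,\overline f\}$ that is $0$ on the run and $1$ on an adjacent layer.
\begin{itemize}
\item If that layer is $a-1$, fix $a-1$ coordinates to $1$ and $n-b$ coordinates to $0$, leaving $m$ free. This restricts $g$ to $\mathsf{NOR}_m$.
\item If that layer is $b+1$, fix $a$ coordinates to $1$ and $n-b-1$ coordinates to $0$. This restricts $g$ to $\mathsf{AND}_m$.
\end{itemize}
By Lemma~\ref{lem:restrictions-projections} (with input negations in the $\mathsf{AND}$ case) and Lemma~\ref{nor_lower_bound}, $M(f)\ge\sqrt{m/8}$. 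Hence $n+1\le 8M(f)^2(2M(f)+1)$, and $D(f)\le n=O(M(f)^3)$ directly, without going through $bs$.
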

    \begin{proof}
        Let $\operatorname{alt}(f)$ be the alternating number of $f$. From the earlier Lemma~\ref{symm_alt}, we have $\max\{\mathsf{N}(f),\mathsf{N}(\overline{f})\} \ge \frac{\operatorname{alt}(f)}{2}=\frac{\operatorname{alt}(\overline{f})}{2}$. Symmetric functions are zebra functions and hence we can use 
        \[D(f) \le O(\operatorname{alt}(f)^2 \max\{\mathsf{N}(f)^4,\mathsf{N}(\overline{f})^4\})\] from Corollary \ref{zebra_D_approxndeg}.
        Thus, 
        \[D(f) \le O(\max\{\mathsf{N}(f)^6,\mathsf{N}(\overline{f})^6\}).\]
    \end{proof}

\subsubsection{Improved bounds for symmetric functions}

In this section, we look at those symmetric functions where the function is constant for a large interval of Hamming weights around $n/2$ Hamming weight. We refer to the technique of reduction to the $EXACT_0$ function used in~\cite{10.1109/CCC.2008.18}.
Let $EXACT_k(t)$ be a univariate function on $t \in \{0,\dots,n\}$ such that
\begin{equation*}
        EXACT_k(t) = \begin{cases}
        1  & \text{for } t=k \\
        0 & \text{otherwise}       
        \end{cases}
\end{equation*}

\begin{lemma}\label{improved_symm_lemma}
    Let $f : \{0,1\}^n \rightarrow \{0,1\}$ be a symmetric Boolean function. Let $D: \{0,1,\dots,n \} \rightarrow \{0,1\}$ be such that $f(x) = D(|x|)$. 
Let $\ell_0(D) \in \{0,1,\dots,\lfloor\frac{n}{2}\rfloor\}$ and $\ell_1(D) \in \{0,1,\dots,\lfloor \frac{n}{2}\rfloor\}$ be the minimum integers such that $D$ is constant in $[\ell_0(D),n-\ell_1(D)]$. When at least one of $\ell_0$ or $\ell_1$ is less than $n/5$, then $\max\{\mathsf{N}(f),\mathsf{N}(\overline{f})\} = \Omega(\sqrt{n})$.
\end{lemma}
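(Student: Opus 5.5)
The plan is to find, inside $f$, a restriction that is an $\mathsf{OR}$, $\mathsf{NOR}$, $\mathsf{AND}$ or $\mathsf{NAND}$ on $\Omega(n)$ variables. Then I combine Lemma~\ref{lem:restrictions-projections}, which says $M(\cdot)$ does not increase under restrictions and negations of literals, with Lemma~\ref{nor_lower_bound}, which gives $\mathsf{N}(\mathsf{NOR}_m)\ge\sqrt{m/8}$. Note that $\mathsf{OR}_m=\overline{\mathsf{NOR}_m}$, and that $\mathsf{AND}_m$ and $\mathsf{NAND}_m$ are obtained from $\mathsf{NOR}_m$ and $\mathsf{OR}_m$ by negating all inputs. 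So each of the four functions on $m$ bits has $M\ge\sqrt{m/8}$. The whole task is therefore to exhibit such an $m=\Omega(n)$.

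\textbf{Case 1: $1\le \ell_0<n/5$.} By minimality of $\ell_0$, $D(\ell_0-1)\neq D(\ell_0)=\dots=D(n-\ell_1)$. Fix $\ell_0-1$ variables to $1$ and $\ell_1$ further variables to $0$. Leave the remaining $m=n-\ell_0-\ell_1+1$ variables $y$ free. The restriction is $g(y)=D(\ell_0-1+|y|)$. Since $|y|$ ranges over $\{0,\dots,m\}$, we get $\ell_0-1+|y|\in[\ell_0-1,n-\ell_1]$. Hence $g(0^m)=a$ and $g(y)=1-a$ for every $y\neq0^m$, so $g\in\{\mathsf{OR}_m,\mathsf{NOR}_m\}$. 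As $\ell_1\le n/2$ and $\ell_0<n/5$, we have $m\ge n-n/2-n/5=3n/10$. This gives $M(f)\ge M(g)\ge\sqrt{3n/80}=\Omega(\sqrt n)$.

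\textbf{Case 2: $1\le\ell_1<n/5$.} This is symmetric. Apply Case 1 to $f'(x)=f(\overline x)$, whose weight profile is $D$ reversed, so the roles of $\ell_0$ and $\ell_1$ swap. Negating literals preserves $M$ by Lemma~\ref{lem:restrictions-projections}. Concretely, fix $\ell_1-1$ variables to $0$ and $\ell_0$ variables to $1$. The result is an $\mathsf{AND}$ or $\mathsf{NAND}$ on at least $3n/10$ bits.

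\textbf{Case 3: one of $\ell_0,\ell_1$ equals $0$.} This degenerate situation is the step needing care, and I expect it to be the main obstacle. The hypothesis ``$\ell_0<n/5$'' may hold only because $\ell_0=0$, and then there is no transition at the left end. If $\ell_0=0$ and $\ell_1\ge1$, then $D$ is constant on $[0,n-\ell_1]$ and $D(n-\ell_1+1)$ differs. Fix $\ell_1-1$ variables to $0$ and leave $m=n-\ell_1+1\ge n/2$ variables free. The restriction $D(|y|)$ is constant on all weights below $m$ and flips at weight $m$, so it is $\mathsf{AND}_m$ or $\mathsf{NAND}_m$. The case $\ell_1=0$, $\ell_0\ge1$ is symmetric and gives $\mathsf{OR}$ or $\mathsf{NOR}$ on at least $n/2$ bits. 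If $\ell_0=\ell_1=0$, then $f$ is constant. The statement must implicitly exclude this case, since there $\mathsf{N}$ is $0$ for one side, and I would state that exclusion explicitly.

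In all nonconstant cases this yields $\max\{\mathsf{N}(f),\mathsf{N}(\overline f)\}=\Omega(\sqrt n)$.
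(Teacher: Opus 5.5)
Your argument is correct. It differs from the paper's mainly in the level at which the embedding is done.

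\textbf{The paper's route.} For $1\le \ell<n/5$, the paper symmetrizes $q^2$ (for $q$ a one-sided polynomial of $f$) into a univariate $p(t)$. It notes that the shifted $p(t+\ell-1)$, rescaled by $p(\ell-1)\ge 1$, approximates $EXACT_0$ on $\{0,\dots,\lfloor n/5\rfloor\}$, and then invokes the classical univariate lower bound. Because a one-sided polynomial is large only on $1$-inputs, the paper must split on whether $D(\ell-1)=1$ or $D(\ell)=1$, and pass to $\overline f$ in the second case.

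\textbf{Your route.} You perform the corresponding restriction directly on the cube, fixing $\ell_0-1$ ones and $\ell_1$ zeros. This yields exactly $\mathsf{OR}_m$ or $\mathsf{NOR}_m$ with $m\ge 3n/10$. You then bound $M$ rather than a single $\mathsf N$, so the direction of the transition is irrelevant. As a result, the $\ell_1$ side and the degenerate case $\ell_0=0$ become one uniform argument; the paper handles the latter exactly as you do, by embedding $\mathsf{AND}$ or $\mathsf{NAND}$.

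\textbf{What each buys.} The paper's route relies only on symmetrization and the classical univariate bound. Yours rests on Lemma~\ref{nor_lower_bound}, and hence on Lemma~\ref{markov_derivative_special_case}. In exchange, yours is shorter, uses only results already proved in the paper, and gives slightly better constants.

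You are also right that the statement needs $f$ to be non-constant. When $\ell_0=\ell_1=0$ one has $M(f)=0$, and the paper leaves this exclusion implicit.
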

\begin{proof}
Let us first consider $\ell_0(D) < \frac{n}{5}$. The proof for the case when $\ell_1 
< \frac{n}{5}$ will be analogous. Let $\ell = \ell_0(D)$ and we assume $\ell \ge 1$ since if $\ell=0$, then we can choose $x$ such that $|x| = n-\ell_1(D
)+1$ and embed an AND or NAND, depending on value of $D$, on the ones of $x$ to get desired result. 
According to~\cite{10.1109/CCC.2008.18}, $EXACT_0 (t)$ on $\lfloor \frac{n}{5} \rfloor$ bits can be written as 
\[EXACT_0(t) = (1-2D(\ell))D(t+\ell-1)+D(\ell).\] 

Let $q(x)$ be an approximate non-deterministic polynomial for $f(x)$ with $\deg(q)=\mathsf{N}(f)$ and $p(t)$ be the symmetrized version of $q(x)^2$ where $p(t) = \mathbb{E}_{|x|=t}[q(x)^2]$. So, when $D(t) =1$, then $p(t) \ge 1$ and when $D(t)=0$ then $0 \le p(t) \le 1/3 $. If we show that $(1-2D(\ell))p(t+\ell-1)+D(\ell)$ is the approximate non-deterministic polynomial for $EXACT_0(t)$ and that the approximate non-deterministic degree for $EXACT_0(t)$ is lower bounded by $\Omega(\sqrt{n})$, then we will get $\mathsf N(f) = \Omega(\sqrt{n})$. $D(t)$ will satisfy exactly one of the following two cases:

\begin{enumerate}
    \item $D(\ell) =0$ and $D(\ell -1) =1$.
    \item $D(\ell) =1$ and $D(\ell -1) =0$.
\end{enumerate}

It suffices to show $\mathsf N(f) = \Omega(\sqrt{n})$ for the first case. This is because when $f$ is such that $D(\ell) =1$ and $D(\ell -1) =0$, then $\overline{D}(\ell)=0$ and $\overline{D}(\ell -1)=1$ and hence $\mathsf N(\overline{f}) = \Omega(\sqrt{n})$.

Let us now look at the case when $D(\ell)=0$ and $D(\ell-1)=1$.
We know $EXACT_0(0)=1$. Let $p(\ell -1)=h$. Since $D(\ell-1)=1$, $h\ge 1$. The value of the expression $(1-2D(\ell))p(t+\ell-1)+D(\ell)$ when $t=0$ becomes 
\[(1-2D(\ell))p(\ell-1)+D(\ell) = (1-0)h+0 = h \ge 1.\]

For $1\le t \le \lfloor n/5 \rfloor$, 
$p(\ell+t-1) = \epsilon \le 1/3$. Then the value of the required expression 
\[(1-2D(\ell))p(t+\ell-1)+D(\ell) = \epsilon.\] Also we know that $EXACT_0(t) = 0$ for $t \ge 1$. We can conclude that $(1-2D(\ell))p(t+\ell-1)+D(\ell)$ can be an approximate non-deterministic polynomial for $EXACT_0(t)$ and its degree is at most $2N(f)$. Therefore, $N(f) = \Omega(N(EXACT_0))$.

Now, let us prove the lower bound on the approximate non-deterministic degree of $EXACT_0(t)$ is $\Omega(\sqrt{n})$. Let $p''(t)$ be an approximate non-deterministic polynomial for $EXACT_0(t)$ with $\deg(p'') = N(EXACT_0)$. So we have $p''(0) =h \ge 1$ and $|p''(t)| \le 1/3$ for $t=1,\dots,\lfloor n/5 \rfloor$. Now, because for this specific function $EXACT_0(t)$, the value of the polynomial $p''(t)$ is greater than 1 only at $t=0$, we can just scale the polynomial by the value $h$ without losing the property of it being the approximate non-deterministic polynomial. So, lets say $p'(t) = p''(t)/h$, then $\deg(p'') = \deg(p')$. But now $p'$ is actually the symmetrized polynomial for the two-sided approximate degree of NOR. From the proof of lower bound of approximate degree of symmetric functions from~\cite{10.1145/129712.129758}, we already know that $ \widetilde{\deg}(\mathsf{NOR}) \ge \deg(p') \ge \Omega(\sqrt{n})$.   

So, to conclude, if $D$ is such that it follows the first case, then we have $\mathsf{N}(f) \ge \Omega( \sqrt{n}
)$. Suppose $D$ is such that it follows the second case, in that case $\overline{D}$, which is the predicate corresponding to $\overline{f}$ will follow the first case and so $\mathsf{N}(\overline{f}) = \Omega(\sqrt{n})$. Hence, $\max\{\mathsf{N}(f),\mathsf{N}(\overline{f})\} \ge \Omega(\sqrt{n})$.
\end{proof}

\begin{corollary}\label{cor:symmetric-stronger}
Let $f : \{0,1\}^n \rightarrow \{0,1\}$ be a symmetric Boolean function. Let $D: \{0,1,\dots,n \} \rightarrow \{0,1\}$ be the function such that $f(x) = D(|x|)$. 
Let $\ell_0(D)$ and $\ell_1(D)$ be as described before and when at least one of $\ell_0$ or $\ell_1$ is less than $n/5$, then $\widetilde{\deg}(f) \le O(\max\{\mathsf{N}(f)^2,\mathsf{N}(\overline{f})^2\}) $.    
\end{corollary}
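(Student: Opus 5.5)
The plan is to combine Lemma~\ref{improved_symm_lemma} with the trivial upper bound on approximate degree. Under the hypothesis that $\ell_0(D)<n/5$ or $\ell_1(D)<n/5$, Lemma~\ref{improved_symm_lemma} gives
\[ \max\{\mathsf{N}(f),\mathsf{N}(\overline f)\} \ge c\sqrt{n} \]
for some absolute constant $c>0$. Squaring this yields
\[ n \le c^{-2}\max\{\mathsf{N}(f)^2,\mathsf{N}(\overline f)^2\}. \]

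Next I would use the fact that every Boolean function on $n$ variables satisfies $\widetilde{\deg}(f)\le \deg(f)\le n$. The multilinear representation has degree at most $n$ and approximates $f$ with zero error. Chaining the two bounds gives
\[ \widetilde{\deg}(f)\le n\le O(\max\{\mathsf{N}(f)^2,\mathsf{N}(\overline f)^2\}), \]
which is the claimed inequality.

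The only point needing care is the degenerate situation in Lemma~\ref{improved_symm_lemma}. If $f$ is constant, the lemma's conclusion may fail, but then $\widetilde{\deg}(f)=0$ and the statement holds trivially. If $n$ is below a fixed constant, the implied constant in the $O(\cdot)$ absorbs the bound. There is no real obstacle, since all of the work is already in Lemma~\ref{improved_symm_lemma}.

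We could sharpen the bound via Paturi's characterization $\widetilde{\deg}(f)=\Theta(\sqrt{n(n-\Gamma(f))})$, but the claimed quadratic bound does not need it.
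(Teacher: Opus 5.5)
Your proposal is correct and matches the paper's proof: the paper also combines the $\Omega(\sqrt{n})$ lower bound of Lemma~\ref{improved_symm_lemma} with $\widetilde{\deg}(f)\le\deg(f)\le n$. Your remark about constant $f$ is a useful detail the paper omits: the lemma's conclusion fails there even though the hypothesis $\ell_0=\ell_1=0$ holds, but $\widetilde{\deg}(f)=0$, so the corollary is trivially true.
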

\begin{proof}
    Since $\widetilde{\deg}(f) \le \deg(f) \le n$, using previous Lemma~\ref{improved_symm_lemma}, we get our result.
\end{proof}

\subsection{Read-once functions}
In this section, we see that the conjecture is true for read-once functions which are functions where every variable can only appear at most once.
Following the same proof techniques of~\cite{iyer2023rationaldegreebooleanfunctions}, we get the following two lemmas:

\begin{lemma}\label{read_once_composition}
        Let $f: \{0,1\}^n \rightarrow \{0,1\}$ and $g_i : \{0,1\}^{n_i} \rightarrow \{0,1\}$ be Boolean functions such that every variable is relevant. Then $h:\{0,1\}^{\sum_i n_i} \rightarrow \{0,1\}$ be such that $h(x^1,x^2,\dots,x^n) = f(g_1(x^1),\dots,g_n(x^n))$. Then,
    \[\max\{\mathsf{N}(h),\mathsf{N}(\overline{h})\} \ge \max\{\mathsf{N}(f),\mathsf{N}(\overline{f}),\mathsf{N}(g_1),\mathsf{N}(\overline{g_1}),\dots, \mathsf{N}(g_n),\mathsf{N}(\overline{g_n})\}.\]
\end{lemma}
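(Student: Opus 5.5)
The plan is to show that both $f$ and each $g_i$ (and their complements) are obtained from $h$ or $\overline h$ by restrictions, identifications of variables and possibly negations of literals. Then Lemma~\ref{lem:restrictions-projections} gives $\mathsf{N}(\cdot)\le \mathsf{N}(h)$ or $\mathsf{N}(\cdot)\le\mathsf{N}(\overline h)$ for each term on the right-hand side, and taking maxima finishes the argument. Since the blocks $x^1,\dots,x^n$ are disjoint, we may restrict each block independently.

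\textbf{Embedding $g_i$.} Fix $i$. Because every variable of $f$ is relevant, there is an input $a\in\{0,1\}^n$ with $f(a)\ne f(a^{\{i\}})$. Each $g_j$ with $j\ne i$ is non-constant, since its variables are relevant. Hence for every $j\ne i$ we can pick a fixed assignment $c^j\in\{0,1\}^{n_j}$ with $g_j(c^j)=a_j$. Substituting these constants into the blocks $j\ne i$ and leaving $x^i$ free gives $h'(x^i)=f(a_1,\dots,g_i(x^i),\dots,a_n)$. This is either $g_i(x^i)$ or $\overline{g_i}(x^i)$, according to whether $f(a)$ with $a_i=1$ equals $1$ or $0$. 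In the first case Lemma~\ref{lem:restrictions-projections} gives $\mathsf{N}(g_i)\le\mathsf{N}(h)$ and $\mathsf{N}(\overline{g_i})\le\mathsf{N}(\overline h)$. In the second case it gives $\mathsf{N}(\overline{g_i})\le\mathsf{N}(h)$ and $\mathsf{N}(g_i)\le\mathsf{N}(\overline h)$. Either way, both $\mathsf{N}(g_i)$ and $\mathsf{N}(\overline{g_i})$ are at most $\max\{\mathsf{N}(h),\mathsf{N}(\overline h)\}$.

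\textbf{Embedding $f$.} For each $i$, since $g_i$ is non-constant, there are inputs $u^i,v^i$ with $g_i(u^i)=0$ and $g_i(v^i)=1$. Moreover they can be chosen as neighbours differing in a single coordinate $t_i$, because walking from $u^i$ to $v^i$ one bit at a time must change the value at some step. Fix all coordinates of block $i$ to those of $u^i$ except coordinate $t_i$, and substitute $y_i$ or $1-y_i$ there, according to whether $u^i_{t_i}=0$ or $1$. Then the block evaluates to $g_i = y_i$. The resulting restriction of $h$ is exactly $f(y_1,\dots,y_n)$, and it has the substitution form allowed by Lemma~\ref{lem:restrictions-projections}. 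Therefore $\mathsf{N}(f)\le\mathsf{N}(h)$ and $\mathsf{N}(\overline f)\le\mathsf{N}(\overline h)$.

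The only step requiring care is realizing each $g_i$ as a \emph{literal} substitution rather than an arbitrary function of the block. That is handled by the neighbouring-pair argument above, which is where relevance of variables, i.e.\ non-constancy of $g_i$, is used. The other subtlety is that $g_i$ may appear negated inside $f$, which is why the bound is on the pair $\{\mathsf{N}(h),\mathsf{N}(\overline h)\}$ rather than on $\mathsf{N}(h)$ alone. Combining the two embeddings yields the stated inequality.
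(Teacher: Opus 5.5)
Your proof is correct and follows essentially the same route as the paper. You restrict each block $x^i$ to a single literal so that $h$ becomes $f$ on fresh variables, and you fix all blocks but one (using relevance of the corresponding variable of $f$) so that $h$ becomes $g_k$ or $\overline{g_k}$, then invoke Lemma~\ref{lem:restrictions-projections}; you are in fact more explicit than the paper about the neighbouring-pair argument and about how a possible negation of $g_i$ is absorbed by taking the maximum over $h$ and $\overline h$.
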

\begin{proof}
    Let us first look at the function $g_i$. Since $g_i$ is relevant in each variable, there exists a restriction $g_i'$  where we can fix every variable of $g_i$ except for one variable $x^j_i$. Then, $g_i$ could either be $x^i_{j_i}$ or $\overline{x^i_{j_i}}$. We can find restrictions for each $g_i$ and since the variables are disjoint, by taking the combined restrictions, we get a restriction $h'$. We can now write $h'$ as $h' = f(x^1_{j_1},\dots,x^n_{j_n})$. Therefore, we will get $M(h) \ge M(h')$ and hence $M(h) \ge M(f)$. For showing lower bound with respect to $g_i$, we can set all $x^i$ except $x^k$ such that $f(x)=g(x^k)$ or $f(x)=\overline{g}(x^k)$. From this, we get a restriction $h''$ for $h$ such that $h''=g_k(x^k)$ or $h''=1-g_k(x^k)$. Therefore, $M(h) \ge M(g_k)$. Since this $k$ was chosen arbitrarily, $M(h) \ge \max\{M(g_1),\dots,M(g_n)\}$. 
    Along with previous lower bound with respect to $f$, we finally get \[M(h) \ge \max\{M(f),M(g_1),\dots,M(g_n)\}.\] 
\end{proof}

\begin{lemma}\label{read_once}
    Let $f: \{0,1\}^n \rightarrow \{0,1\}$ be a read-once formula with symmetric gates and depth $d$. Then $\max\{\mathsf{N}(f)
    ,\mathsf{N}(\overline{f})\} \ge \Omega(\deg(f)^{1/6d})$.
\end{lemma}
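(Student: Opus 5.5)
Induction on the depth $d$, combining Lemma~\ref{read_once_composition} with a degree bound for a single symmetric gate. Write $f = G(f_1,\dots,f_m)$, where $G$ is the top symmetric gate on $m$ inputs and each $f_i$ is a read-once formula of depth at most $d-1$ on disjoint variables. We may assume every variable is relevant and that every gate depends on all its inputs; otherwise we restrict by Lemma~\ref{lem:restrictions-projections} and simplify the formula.

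\textbf{Degree step.} For read-once compositions the degree is multiplicative up to the usual bound, $\deg(f) \le \deg(G)\cdot \max_i \deg(f_i)$. Composing polynomials gives this directly. Unrolling the induction, some root-to-leaf path of gates $G^{(1)},\dots,G^{(d)}$ satisfies
\[ \deg(f) \le \prod_{j=1}^{d}\deg(G^{(j)}). \]
Hence some gate $G^{(j)}$ on that path has $\deg(G^{(j)}) \ge \deg(f)^{1/d}$.

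\textbf{Lower bound at that gate.} By Lemma~\ref{read_once_composition}, applied iteratively down the path and using that each subformula is obtained from $f$ by restriction,
\[ M(f)\ge M(G^{(j)}). \]
It then suffices to show that every non-constant symmetric $G$ satisfies $M(G) \ge \Omega(\deg(G)^{1/6})$. Corollary~\ref{symmetric_relation} gives $D(G)\le O(M(G)^6)$, and $\deg(G)\le D(G)$. So $M(G)\ge \Omega(\deg(G)^{1/6}) \ge \Omega(\deg(f)^{1/6d})$, which is the claim.

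\textbf{Main obstacle.} The main work is justifying the degree product bound along a single path, that is, selecting the path so that the maximum over children is attained at each step. This follows because $\deg(f) \le \deg(G)\max_i\deg(f_i)$ at each level: we follow the child achieving the maximum. A secondary check is that Lemma~\ref{read_once_composition} legitimately applies at every level, which needs the relevance assumptions. These hold after discarding irrelevant inputs, since restrictions do not increase $M$.
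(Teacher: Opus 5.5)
Your argument is correct, but it takes a different route from the paper's proof.

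The paper picks the gate $G$ of maximum fan-in $w$. It observes that a depth-$d$ formula on $n$ leaves forces $w\ge n^{1/d}$, and restricts $f$ down to $G$. It then combines Corollary~\ref{symmetric_relation} with the fact that a non-constant symmetric gate has $D(G)\ge\Omega(w)$, and converts to degree only at the end, via $\deg(f)\le n$.

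You instead work with degree throughout. The composition bound $\deg(G(f_1,\dots,f_m))\le\deg(G)\max_i\deg(f_i)$ lets you follow a maximizing root-to-leaf path and find a gate with $\deg(G^{(j)})\ge\deg(f)^{1/d}$. After that you need only the trivial inequality $\deg(G)\le D(G)$.

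What each route buys:
\begin{itemize}
\item The paper's route proves the stronger bound $M(f)\ge\Omega(n^{1/(6d)})$, in terms of the number of variables appearing in the formula. The price is a leaf-counting step and the linear query lower bound for non-constant symmetric functions.
\item Your route avoids both of those ingredients, and applies unchanged to any gate class satisfying $M(G)\ge\Omega(\deg(G)^{c})$. It only yields the bound in terms of $\deg(f)$, which is exactly what the statement asks for.
\end{itemize}

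You are also more careful than the paper about reducing to the case where every variable is relevant and every gate is non-degenerate. Lemma~\ref{read_once_composition} needs this at each level of the path, and the paper's proof uses it only implicitly.
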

\begin{proof}
        Let $G$ be the node that has maximum branching factor $w$ and $F$ be the sub formula of $f$ with $G$ as root node. Let $F_1,\dots,F_w$ be the sub-formulae corresponding to each of the nodes under $F$. There exists a restriction for each $F_i$ where we can fix every variable of $F_i$ except for one variable $x^i_{j_i}$ under $F_i$. Because each $F_i$ are read-once and $F$ is read once, these restrictions are on disjoint variables and we can combine them to get a restriction $F|_r = G(x^1_{j_1},\dots,x^w_{j_w})$. Using similar technique from previous lemma ~\ref{read_once_composition}, an additional restriction can be used to obtain $f|_{r'} = F$ and hence $f|_{r' \cup r} = F|_r$. Again, since we are working with read-once formulas, restriction corresponding to $r'$ will not affect the restriction corresponding to $r$. From Lemma~\ref{lem:restrictions-projections}, we get $\max{\{\mathsf{N}(f),\mathsf{N}(\overline{f})\}} \ge \max\{\mathsf{N}(f|_{r \cup r'}),\mathsf{N}(\overline{f}|_{r \cup r'})\} = \max\{\mathsf{N}(F|_{r}),\mathsf{N}(\overline{F}|_{r})\}=\max\{\mathsf{N}(G),\mathsf{N}(\overline{G})\}$. 
        Since $G$ is symmetric, using Corollary~\ref{symmetric_relation}, we get $\max\{\mathsf{N}(G),\mathsf{N}(\overline{G})\} \ge \Omega(D(G)^{1/6})$ and using lower bounds on symmetric functions $D(G) \ge \Omega(w)$~\cite{survey_complexity_measures}, we have $\max\{ \mathsf{N}(G), \mathsf{N}(\overline{G})\} \ge \Omega(D(G)^{1/6}) \ge \Omega(w^{1/6})$. We know $\deg(f) \le n$ and $w > n^{1/d}$. This is because if the maximum branching factor is less than $n^{1/d}$, then the total number of literals will be less than $n$. Hence, we finally get $\max\{\mathsf{N}(f)
    ,\mathsf{N}(\overline{f})\} \ge \Omega(\deg(f)^{1/6d})$.
    \end{proof}

A similar result was shown for one-sided approximate degree of AND-OR functions by~\cite{SHE18}.

\subsection{Read-\texorpdfstring{$k$}{k} DNF formulas}

 We adapt the proof for non-deterministic degree from~\cite{dewolf2004nondeterministicquantumqueryquantum} for approximate non-deterministic degree lower bound in terms of minimal sensitive block size.  
\begin{lemma}\label{minimal_sensitive_lower_bound}
    Let $f: \{0,1\}^n \rightarrow \{0,1\}$ and $x \in \{0,1\}^n$ be such that $f(x)=0$. Let $B$ be the minimal sensitive block of $x$. Then $ \mathsf{N}(f) \ge \Omega(\sqrt{|B|})$.
\end{lemma}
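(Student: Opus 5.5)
Fix $x$ with $f(x)=0$ and a minimal sensitive block $B$, so $f(x^B)=1$. The plan is to restrict $f$ to the subcube spanned by $B$ around $x$ and recognise a NOR-type function there, then invoke Lemma~\ref{nor_lower_bound}. Let $b=|B|$ and introduce variables $y\in\{0,1\}^{b}$. Substitute $x_i$ for $i\notin B$. For $i\in B$, substitute $x_i\oplus(1-y_i)$, i.e.\ $1-y_i$ if $x_i=0$ and $y_i$ if $x_i=1$. With this choice $y=0^b$ corresponds to $x^B$ and $y=1^b$ corresponds to $x$. By Lemma~\ref{lem:restrictions-projections} the resulting function $g:\{0,1\}^b\to\{0,1\}$ satisfies $\mathsf N(g)\le\mathsf N(f)$.

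Next I use minimality of $B$. For every proper nonempty subset $B'\subsetneq B$, we have $f(x^{B'})=f(x)=0$, since otherwise $B'$ would be a smaller sensitive block. In the $y$-coordinates, the point $x^{B'}$ corresponds to $y$ having zeros exactly on $B'$. So $g(0^b)=1$, $g(1^b)=0$ (the point $x$ itself), and $g(y)=0$ for every $y\neq 0^b$. Hence $g=\mathsf{NOR}_b$.

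Lemma~\ref{nor_lower_bound} then gives $\mathsf N(f)\ge\mathsf N(\mathsf{NOR}_b)\ge\sqrt{b/8}=\Omega(\sqrt{|B|})$. The case $|B|=1$ is trivial, because a non-constant $f$ has $\mathsf N(f)\ge1$.

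The only delicate point is reading ``minimal'' as inclusion-minimal, so that every proper sub-block is non-sensitive. If it instead means minimum size, the same conclusion follows, because a minimum-size sensitive block is also inclusion-minimal. I would state this explicitly. The remaining step, the lower bound for $\mathsf{NOR}$, is already available from Lemma~\ref{nor_lower_bound}.
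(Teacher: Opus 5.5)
Your proof is correct and is essentially the paper's argument. Both restrict $f$ to the subcube on $B$ around $x$ and use minimality of $B$ to identify the restriction as a single-point indicator. The paper assumes WLOG $x=0^n$ and gets $\mathsf{AND}_{|B|}$; your orientation gives $\mathsf{NOR}_{|B|}$, which differs only by negating inputs. Both then apply the $\Omega(\sqrt{m})$ lower bound. Landing on $\mathsf{NOR}$ lets you cite the paper's own Lemma~\ref{nor_lower_bound} directly rather than the external $\mathsf{AND}$ bound.
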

    \begin{proof}
        We can assume for $x=0^n$, $f(x)=0$. Let $B$ be minimal sensitive block of $x$ and therefore $f(x^B)=1$ but for every $B' \subset B$, $f(x^{B'})=0$. Fixing all the variables outside of $B$ to zero, we get $f'$ as AND function on $|B|$ variables, and therefore, $\mathsf{N}(f) \ge \mathsf{N}(f') \ge \Omega(\sqrt{|B|})$.
    \end{proof}

For Lemma~\ref{lem:ndeg-alpha}, we use the number of disjoint terms argument from~\cite{bafna2016sensitivity} and then use Lemma~\ref{minimal_sensitive_lower_bound} to give lower bound on approximate non-deterministic degree of $\overline{f}$.

\begin{lemma}\label{lem:ndeg-alpha}
    Let $f: \{0,1\}^n \rightarrow \{0,1\}$ be a Boolean function computed by a minimal read-$k$ DNF formula with size $\alpha$ and maximum width $\beta$. Then, 
    \[ \mathsf{N}(\overline{f}) \ge \Omega\Bigl(\sqrt{\frac{\alpha}{k \beta}}\Bigr).\]
\end{lemma}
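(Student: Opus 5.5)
The approach is to find many pairwise disjoint terms in the DNF, build a $0$-input of $f$ next to them, and then embed a large $\mathsf{NOR}$ into $\overline{f}$ (equivalently, an $\mathsf{OR}$ into $f$). Lemma~\ref{nor_lower_bound} together with Lemma~\ref{lem:restrictions-projections} then gives the bound. We use the disjoint-terms argument of~\cite{bafna2016sensitivity}.

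\textbf{Step 1: disjoint terms.} Build a set of terms greedily: pick a term $T$, then discard every term that shares a variable with $T$. Since $T$ has at most $\beta$ variables and each variable lies in at most $k$ terms, at most $k\beta$ terms are discarded per step. This yields $m \ge \alpha/(k\beta)$ pairwise variable-disjoint terms $T_1,\dots,T_m$.

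\textbf{Step 2: a $0$-input with many sensitive blocks.} We use the minimality of the DNF: no term is implied by the others, so each term $T_i$ has a point satisfying $T_i$ and no other term. Now restrict $f$ so that it becomes an $\mathsf{OR}$ of indicators of disjoint blocks. For each $i$, pick a variable $y_i$ in $T_i$. Set all variables of $T_i$ other than $y_i$ so that they satisfy $T_i$, and set $y_i$ to the value that falsifies $T_i$. The remaining variables, outside $\bigcup_i T_i$, must be fixed so that no other term is satisfied when several of the $T_i$ are switched on at once. This is the main obstacle: an arbitrary other term could become satisfied through a combination of the $T_i$'s literals.

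To handle it, we set each variable outside $\bigcup_i T_i$ to a value that falsifies some term containing it. Then we must ensure that every term $T'\notin\{T_i\}$ contains either a literal set to false outside the union, or a literal conflicting with some $T_i$'s assignment. The fallback is to refine the greedy choice so that the chosen terms are pairwise "non-interfering". Each variable lies in at most $k$ terms, so each $T_i$ interferes with at most $k\beta$ other terms. Selecting an independent set in the conflict graph still leaves $\Omega(\alpha/(k\beta))$ terms, and for a term $T'$ that meets no chosen term we can falsify one of its variables that lies outside the union. For the terms $T'$ that do meet the chosen ones but were discarded, we would want a single conflicting literal to kill them. If this fails, we instead take the $0$-input $x$ that falsifies all terms and argue via minimal sensitive blocks: flipping the falsified literals of $T_i$ gives a sensitive block, and we pass to a minimal sub-block. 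This yields $\Omega(\alpha/(k\beta))$ disjoint sensitive blocks at a $0$-input, which can be shrunk so the restriction is an $\mathsf{OR}$ on $m$ fresh variables $y_1,\dots,y_m$.

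\textbf{Step 3: conclude.} Identifying the variables in each block appropriately (Lemma~\ref{lem:restrictions-projections}), $f$ restricts to $\mathsf{OR}_m$, so $\overline{f}$ restricts to $\mathsf{NOR}_m$. By Lemma~\ref{nor_lower_bound}, $\mathsf{N}(\overline{f}) \ge \mathsf{N}(\mathsf{NOR}_m) \ge \sqrt{m/8} = \Omega\bigl(\sqrt{\alpha/(k\beta)}\bigr)$. The delicate part is Step 2, namely making the restriction a clean $\mathsf{OR}$ while keeping $\Omega(\alpha/(k\beta))$ blocks; the first thing I would try there is the conflict-graph independent-set refinement of the greedy selection.
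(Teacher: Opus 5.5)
Steps~1 and~3 are sound. Greedy selection gives $m\ge\alpha/(k\beta)$ pairwise variable-disjoint terms $T_1,\dots,T_m$. An embedding of $\mathsf{OR}_m$ into $f$ by restrictions and identifications would then give $\mathsf{N}(\overline f)\ge\mathsf{N}(\mathsf{NOR}_m)\ge\sqrt{m/8}$ by Lemmas~\ref{lem:restrictions-projections} and~\ref{nor_lower_bound}.

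The gap is Step~2: you never actually produce the embedding. The obstacle you name is not the real one. $\mathsf{OR}$ is $1$ on every nonzero input, so other terms becoming true on combinations of blocks does no harm. What your first construction cannot control is whether the base point is a $0$-input at all: a term built from satisfied literals of two different $T_i$'s is already true there. The conflict-graph refinement is left unspecified. Your final fallback is actually wrong at the step ``pass to a minimal sub-block''. A minimal $B_i'\subsetneq B_i$ may be sensitive only through some other term, and flipping a second block can falsify that term. For example, $f=ab\vee cd\vee a\overline c\vee\overline a c$ is a minimal read-$3$ DNF with disjoint terms $T_1=ab$, $T_2=cd$. At $x=0000$ the minimal sub-blocks are $\{a\}$ and $\{c\}$, but flipping both gives $f=0$, so identifying yields $\mathrm{XOR}_2$, not $\mathsf{OR}_2$. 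Disjoint sensitive blocks at a $0$-input do not suffice by themselves either: the function $f(y)=1$ iff $|y|=1$ has $s_0(f)=m$, yet $\mathsf{N}(\overline f)\le 2$ via $(\sum_i y_i-1)^2$.

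The missing idea is simply not to shrink. Take any $x\in f^{-1}(0)$, so every $T_i$ is false at $x$. Let $B_i\neq\emptyset$ be the set of variables of $T_i$ whose literal in $T_i$ is false at $x$. Substitute $y_i$ or $1-y_i$ (according to $x_j$) for every $j\in B_i$, and fix all remaining variables as in $x$. If $y\neq 0^m$ and $y_i=1$, every literal of $T_i$ is true. This is because the variables of $T_i$ outside $B_i$ lie in no other $B_{i'}$, by disjointness, so they keep their values from $x$. Hence the substituted function is exactly $\mathsf{OR}_m$, with no need for minimality of the DNF or a conflict graph.

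The paper argues from the opposite endpoint. At the $1$-input $x'$ satisfying all $m$ disjoint terms, every sensitive block must meet each $T_i$, so an inclusion-minimal sensitive block $B$ has $|B|\ge m$. Minimality of $B$ makes $\overline f$ restricted to $B$ an $\mathsf{AND}_{|B|}$ up to input negations, and Lemma~\ref{minimal_sensitive_lower_bound} then gives the bound.
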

\begin{proof}
        Let $f$ be computed by a $read-k$ $DNF$ with maximum width $\beta$ and size $\alpha$. This means that the OR will have a fan-in of $\alpha$ and each bottom AND will have a fan-in of $\beta_i$ with $\max_i\{\beta_i\}=\beta$. We can now select the terms that have disjoint variables. We will be left with $\frac{\alpha}{k \beta}$ terms. Now, since these have disjoint variables, when each of the positive literals in these terms take value $1$ and each of the negative literals take value $0$, each of these terms evaluates to $1$ and the function value is $1$ for such input lets call it $x'$. It should not matter what other terms evaluate to, $f(x') = 1$ and therefore, $\overline{f}(x')=0$. Minimum sensitive block for $x'$ will have at least one literal from each of these terms. Therefore, minimum sensitive block size for $x'$ will be $\Omega\Bigl(\frac{\alpha}{k \beta}\Bigr)$.
        Using Lemma~\ref{minimal_sensitive_lower_bound}, we obtain, $\mathsf{N}(\overline{f}) = \Omega\Bigl(\sqrt{\frac{\alpha}{k \beta}}\Bigr)$.
    \end{proof}

\begin{lemma}\label{lem:read-k}
Let $f$ be computed by a minimal read-$k$ DNF. Then
\[
\mathsf{N}(\overline f)
\ge \Omega\left(\sqrt\frac{s_0(f)}{2k+1}\right) \textit{\quad and \quad }
\mathsf{N}(f)
\ge \Omega\left(\sqrt\frac{s_1(f)}{k+1}\right).
\]
\end{lemma}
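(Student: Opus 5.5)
The plan is to reduce both inequalities, by restriction (Lemma~\ref{lem:restrictions-projections}), to an $\mathsf{AND}$ or $\mathsf{NOR}$ on a large subset of the sensitive coordinates. Then the bounds $\mathsf{N}(\mathsf{NOR})=\Omega(\sqrt{m})$ (Lemma~\ref{nor_lower_bound}) and the one-sided bound $\mathsf{N}(\mathsf{AND}_m)=\Omega(\sqrt m)$ (used in Lemma~\ref{minimal_sensitive_lower_bound}) finish the job. The read-$k$ hypothesis is used only to find a large sub-block on which the restricted function is \emph{exactly} an $\mathsf{AND}$ or $\mathsf{OR}$.

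I would not try to apply Lemma~\ref{markov_derivative_special_case} directly to the sensitive neighbourhood of an input. That lemma needs $|p|\le p(0^n)$ on the whole cube, and other $1$-inputs of the restriction may carry larger values. Hence the exact embedding is needed.

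\textbf{The $s_1$ bound.} Take $x$ with $f(x)=1$ and let $S$ be its sensitive set, $|S|=s_1(f)$. If $x$ satisfies terms $T$ and $T'$, each $i\in S$ must falsify both, so every $i\in S$ occurs in every satisfied term. In particular $S\subseteq T$ for a satisfied term $T$, and all satisfied terms restrict to the same $\mathsf{AND}$ of literals on $S$.

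Every other term $T'$ meeting $S$ is falsified by $x$. Choose for it a representative $r(T')$ among its literals disagreeing with $x$, preferring one outside $S$ if possible. Build a directed conflict graph on $S$ with an edge $i\to r(T')$ whenever $i\in T'$ and $r(T')\in S$. Read-$k$ gives out-degree at most $k-1$, since each $i$ lies in at most $k-1$ terms besides $T$. Hence the total number of edges is at most $(k-1)|S|$ and the average undirected degree is at most $2(k-1)$. I would then pick an independent set $S'$ of size $\ge |S|/(2k-1)$, and with a more careful orientation count aim for $|S|/(k+1)$.

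Fix all coordinates outside $S'$ to their values in $x$. Every unsatisfied term touching $S'$ has its representative fixed to a disagreeing value, so it dies. Terms not touching $S'$ are constant $0$, since their satisfaction would contradict sensitivity. The restriction is therefore exactly an $\mathsf{AND}$ of literals on $S'$, which gives $\mathsf{N}(f)\ge\Omega(\sqrt{|S'|})=\Omega(\sqrt{s_1(f)/(k+1)})$.

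\textbf{The $s_0$ bound.} Take $x$ with $f(x)=0$ and sensitive set $S$. For each $i\in S$ pick a term $T_i$ that $x^{\{i\}}$ satisfies, so $T_i$ disagrees with $x$ only at $i$. We want $S'\subseteq S$ such that, after fixing everything outside $S'$ to $x$, the restriction is exactly $\mathsf{OR}$ of the literals on $S'$. That requires two things:
\begin{enumerate}
\item Each $T_i$ with $i\in S'$ contains no other variable of $S'$.
\item Every other term touching $S'$ has a disagreeing literal outside $S'$, which kills it, or else reduces to a single literal already covered by the $\mathsf{OR}$.
\end{enumerate}
Conflicts now come from two sources, the $T_i$'s and the other terms, each contributing at most $k$ per variable by the read-$k$ property. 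This yields an average conflict degree of about $2k$ and an independent set of size $\ge s_0(f)/(2k+1)$, matching the claimed denominator. Complementing, $\overline f$ restricts to $\mathsf{NOR}$ on $S'$, so Lemma~\ref{nor_lower_bound} gives $\mathsf{N}(\overline f)\ge\Omega(\sqrt{s_0(f)/(2k+1)})$.

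The main obstacle is the conflict-graph bookkeeping. One must verify that after restriction no stray term survives, in particular terms whose only disagreeing literals lie inside $S'$, and that the degree count really gives the constants $k+1$ and $2k+1$. I would first handle terms with disagreeing literals outside $S$ by choosing representatives there, so that only terms entirely within $S$ plus agreeing literals generate edges. Minimality of the DNF can then be used to rule out terms that are redundant after restriction.
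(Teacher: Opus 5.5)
Your strategy matches the paper's: find an independent set in a read-$k$ conflict graph on the sensitive coordinates, restrict so that $f$ becomes an exact $\mathsf{AND}$/$\mathsf{OR}$ there, and apply the one-sided $\Omega(\sqrt m)$ bound. However, the $s_1$ verification has a hole. You claim that every unsatisfied term meeting $S'$ ``has its representative fixed to a disagreeing value.'' This is false when $r(T')\in S'$. Your edges all point \emph{into} $r(T')$, so nothing stops $T'$ from meeting $S'$ exactly at its own representative, and then you need another reason for $T'$ to die.

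The missing fact is the paper's $|\mathcal{B}(T')|\ge 2$, and it comes from sensitivity, not minimality. If all disagreeing literals of an unsatisfied $T'$ lie in $S$, there are at least two of them. Otherwise $x^{\{r(T')\}}$ would satisfy $T'$, contradicting $f(x^{\{r(T')\}})=0$. A second disagreeing variable $j\in S$ carries the edge $j\to r(T')$, so $j\notin S'$. Since $j$ is fixed to $x_j$, the literal of $T'$ on $j$ is false and $T'$ is killed. With this patch your count already suffices, since $|S|/(2k-1)\ge |S|/(2(k+1))$; no sharper orientation is needed for the $\Omega$-bound. The paper instead adds one undirected edge per such term between two of its disagreeing variables. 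That gives maximum degree at most $k$ and a greedy independent set of size $|S|/(k+1)$.

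For $s_0$, your condition~2 is not needed, and you never say how it would be enforced. The target $\mathsf{OR}$ is $0$ only at the point $x$, where $f(x)=0$ already holds. So extra terms becoming true under nonempty flips cannot hurt. All that is required is your condition~1, i.e.\ $T_j\cap S'=\{j\}$ for every $j\in S'$. The conflict graph is then $i\sim j$ iff $j\in T_i$ or $i\in T_j$. The $T_i$ are distinct, since each disagrees with $x$ only at $i$. Hence each $j$ lies in at most $k-1$ of the $T_i$ with $i\neq j$, giving at most $kt$ edges. Tur\'an's bound then gives an independent set of size at least $t/(2k+1)$. This is exactly the paper's argument, so drop condition~2 and the ``two sources'' count.
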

\begin{proof}
    Let $x\in f^{-1}(0)$ be the input that satisfies $s_0(f,x)=s_0(f)=t$. Let $S \subseteq [n]$ be the set of sensitive coordinates of $x$. Thus $|S|=t$, and for every $i\in S$,
$f(x^{\{i\}})=1$.
For each $i\in S$, choose a DNF term $T_i$ that is satisfied by
$x^{\{i\}}$. As $i$ is sensitive, such $T_i$ exists.  Moreover, the term $T_i$ is not satisfied by $x$ as $f(x)=0$. 
Since $T_i$ is not satisfied by $x$, the term $T_i$
contains the variable $i$ with the literal opposite to its value in $x$. Also, every other literal of $T_i$ is satisfied by $x$.

Now consider a graph on these $S$ vertices.  For every $i \neq j$, put the edge $(i,j)$ if $T_i$ contains variable $j$ or $T_j$ contains variable $i$. But because the DNF formula is read-$k$ each variable $i$ can at most be in $k$ many terms and there are $t$ variables in $S$.
Hence there are at most $kt$ incidences of variables from $S$ inside the terms $\{T_i : i \in S \}$. 
As each edge comes from at least one such incidence, there are at most $kt$ edges in total in the graph. 
Hence the graph has an independent set $I$ of size at least $\frac{t}{(2k+1)}$. 
This means that for any distinct $i, j \in I$, term $T_i$ does not contain variable $j$. 

Now set all variables to their values in $x$. As $f(x)=0$, the DNF formula evaluates to 0. 
We now consider flipping subsets of the $I$ variables and argue that the DNF will always evaluate to 1: For any nonempty subset
$A \subseteq I$, consider any $j \in A$. Since $I$ is an independent set, the term $T_j$
contains no other variable $\ell \neq j$ from $I$. The literal of $T_j$ on variable $j$ is
opposite to $x_j$, so flipping $j$ satisfies it, and all other literals of
$T_j$ are already satisfied by $x$. Hence $T_j$ evaluates to $1$, and
therefore $f = 1$. The only assignment to the $I$-coordinates that keeps
$f = 0$ is the original $x$-assignment. Thus the restricted function on $I$
is $\mathrm{OR}_{|I|}$, up to negations of input variables.

Hence $\overline{f}$ restricted to $I$ is $\mathrm{AND}_{|I|}$, up to
negations. Since restrictions do not increase approximate non-deterministic
degree (Lemma~\ref{lem:restrictions-projections}), and $\mathsf{N}(\mathrm{AND}_m) = \Omega(\sqrt{m})$, we obtain
\[
\mathsf{N}(\overline{f})
= \Omega(\sqrt{|I|})
= \Omega\!\left(\sqrt{\frac{s_0(f)}{2k+1}}\right).
\]

For the second inequality, let $x\in f^{-1}(1)$ satisfy $s_1(f,x)=s_1(f)=t$, and let $S \subseteq [n]$ be the set of sensitive coordinates of $x$. Thus $f(x)=1,$ and $f(x^{\{i\}})=0$ for every $i \in S$. Choose a DNF term $T^*$ that is satisfied by the $x$. 
Since flipping any $i \in S$ changes the value of the function from 1 to 0, the term $T^*$ must contain every variable $i \in S$. Otherwise $T^*$ would remain satisfied by $x^{\{i\}}$. 

Now set all variable according to $x$. Our goal is to embed AND in some of the variables in $S$. The problem is, if we flip several variables in $S$, the term $T^*$ becomes false, but some other term might become true. We remove this by again considering an independent set, but in a slightly different way. 

For any term $T' \neq T^*$, define a set of potentially problematic  variables:

\[ \mathcal{B}(T') = \{ i \in S: T' \textit{ contains  variable } i \textit{ with literals opposite to its value in } x \}.\]

For example, for any $i \in S$, $i \in \mathcal{B}(T')$ if $ x_i=1$ and $\overline{x_i} \in T'$. 

If $T'$ becomes true after flipping some variables in $S$, then $\mathcal{B}(T') \neq \emptyset$. However, as for all $i \in S$, $f(x^{\{i\}}) =0$, $|\mathcal{B}(T')|$ cannot be 1 and we need to flip more than one variables in $S$ to get such potential problematic cases. Thus $|\mathcal{B}(T')|\geq 2$. 

Now let $\mathcal{T}'$ be the set of all such $T'$ with $|\mathcal{B}(T')|\geq 2$. We call these set of terms potentially problematic terms. 
For each $T'\in \mathcal{T}'$, choose exactly one pair of variables from $\mathcal{B}(T')$, and put a single edge between them. This gives a graph $G$ on vertex set $S$.

Since the DNF is read-$k$, each variable appears in at most $k$ terms. Moreover, each term contributes at most one chosen edge. Therefore every vertex
of $G$ is incident to at most $k$ edges, and hence
the maximum degree of $G$, $\Delta(G)\le k$.
Thus $G$ has an independent set $I\subseteq S$ of size
\[
|I|\ge \frac{|S|}{k+1}
=
\frac{s_1(f)}{k+1}.
\]

Now if we fix all variables in $[n]\setminus I$ according to the input $x$, the resulting function on the variables in $I$ is
$\mathrm{AND}_{|I|}$, up to negations of input variables. To see this notice that, $T^*$ contains every variable $i\in I$ with the literal satisfied by $x_i$. Therefore $T^*$ is satisfied exactly when all variables in $I$ remain equal to their values in $x$.

Crucially, no $T' \ne T^*$ can become true after flipping any nonempty subset of variables from $I$. We are only concerned with the terms that are false for $x$ but can become true by flipping subsets of $I$. Suppose the term $T''$ is false because of variables outside of $I$. Then $T''$ will continue to remain false even after we flip variables in any subset of $I$. For the remaining term $T' \ne T^*$, if $\mathcal{B}(T') = \emptyset$, then there are two possibilities for $T'$, either $T'$ does not contain any $x_i$ with $i \in S$ or if it contains variables from $S$, then those are same literals as those in $x$. In the first case where it does not contain any $x_i$ with $i \in S$, flipping subsets of $I$ will not affect the term, so again the term remains to be false. Let us now look at the second case where if it contains variables from $S$, then those are same literals as $x$. We know the value of this term $T'$ is originally false for $x$ and hence there exists a literal $x_i \in T'$ such that $x_i=0$. It could also be the case that $\overline{x_i}$ is contained in the term $T'$ in which case $x_i=1$. Because $T^*$ contains all variables corresponding to $S$, all variables in $S$ contribute to term $T^*$ being $1$, but since the term $T'$ is false, that means $i \notin S$ and this variable will not be flipped and the term continues to be false.

Now, let us look at the case where $T' \in \mathcal{T'}$. Since $\mathcal{B}(T')\ge 2$, choose $i'$ and $j'$ that corresponds to the chosen edge. Since $I$ is an independent set, at least one of these two endpoints $i'$ and $j'$ lies outside $I$. Without loss of generality let this endpoint be $j'$. But since variables outside $I$ are
fixed according to $x$, so the opposite literal of $T'$ on $j'$ is false.
Thus $T'$ is permanently false under the restriction.

Therefore the restriction computes $\mathrm{AND}_{|I|}$, up to input
negations. Approximate non-deterministic degree does not increase under restrictions and input negations. Therefore, we get,
\[
\mathsf{N}(f) \ge \mathsf{N}(\mathrm{AND}_{|I|}) \ge \Omega(\sqrt{|I|}) = \Omega\!\left(\sqrt{\frac{s_1(f)}{k+1}}\right).
\]
\end{proof}

\begin{lemma}[\cite{bafna2016sensitivity}]\label{lem:sens-width-lb}
Let $f$ be a Boolean function computed by a minimal read-$k$ DNF formula $C$ of maximum width $\beta$. Then,
\[ s_1(f) + (1+k)\,s_0(f) \ge \beta. \]
\end{lemma}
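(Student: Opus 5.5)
The plan is to fix a term $T$ of $C$ of width exactly $\beta$ and use the two kinds of minimality of $C$. No term can be deleted, and no literal can be deleted from a term. From these I will produce one $1$-input and one $0$-input whose sensitive coordinates together cover the $\beta$ variables of $T$. The charging is such that each $1$-sensitive coordinate accounts for one variable and each $0$-sensitive coordinate accounts for at most $k+1$.

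\emph{Step 1 (the $1$-input).} Since $T$ cannot be deleted without changing $f$, there is an input $x$ with $T(x)=1$ and every other term false on $x$, so $f(x)=1$. For each variable $i$ of $T$, flipping $i$ falsifies $T$, which gives two cases.
\begin{itemize}
\item Either $f(x^{\{i\}})=0$, so $i$ is $1$-sensitive at $x$. Let $A$ be the set of such $i$, so $|A|\le s_1(f,x)\le s_1(f)$.
\item Or some term $T_i\neq T$ becomes true at $x^{\{i\}}$. Then $T_i$ contains variable $i$ with the literal opposite to $x_i$, and all other literals of $T_i$ are satisfied by $x$. Let $B$ be the set of such $i$, so $|A|+|B|=\beta$.
\end{itemize}
It then suffices to show $|B|\le (k+1)s_0(f)$.

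\emph{Step 2 (the $0$-input).} I will build a $0$-input $w$ greedily, starting from a $0$-input near $x$. Pick some $i_0\in B$ and flip it, which kills $T$. Then repeatedly kill any term that is true by flipping one variable of it lying outside $T$. Variables of $T$ other than $i_0$ stay at their $x$-values, so $T$ has exactly one false literal in $w$ and $i_0$ is sensitive at $w$. More generally, for $i\in B$ I want to argue the following: either flipping $i$ (or $i_0$ back) at $w$ re-satisfies some term, making $i$ a $0$-sensitive coordinate, or the term $T_i$ was killed by one of the greedy flips. A single flipped variable $v$ lies in at most $k$ terms, because $C$ is read-$k$. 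So each greedy flip can be blamed for spoiling at most $k$ of the $T_i$'s. Each greedy flip should in turn be paired with a sensitive coordinate of $w$, namely the flipped variable itself, whose reversal re-satisfies the term it killed. This pairing gives $|B|\le s_0(f,w)+k\cdot s_0(f,w)=(k+1)s_0(f)$.

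\emph{Main obstacle.} The delicate step is the second one. I must ensure that the variables flipped during the greedy process remain $0$-sensitive at the final $w$, i.e.\ that flipping one back really makes $f=1$ and does not re-kill the term. I also need that the map from indices in $B$ to (sensitive coordinate, term) pairs is at most $(k+1)$-to-one. To keep each flipped variable sensitive, I would perform the greedy kills in a canonical order and always choose the kill variable so that its term has exactly one false literal at the end. If a later flip falsifies that term a second time, I would discard the earlier flip. If this bookkeeping gets messy, the fallback is to use literal-minimality of $T$ directly. For each $i\in B$ there is a $0$-input $y_i$ on which $T$ minus the literal of $i$ is satisfied, and then $i$ is $0$-sensitive at $y_i$. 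The task then becomes merging these witnesses into a single input using the read-$k$ bound on how many $T_i$ share variables.
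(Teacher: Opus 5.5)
The paper gives no proof of this lemma; it imports it from \cite{bafna2016sensitivity}, so your argument has to stand on its own. Step~1 is correct. You should also record that the terms $T_i$ ($i\in B$) are pairwise distinct and different from $T$: $T_i$ disagrees with $x$ only on coordinate $i$. The read-$k$ charging needs this.

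The gap is Step~2, which you flag but do not close. First, nothing ensures the greedy process ends at a $0$-input. A satisfied term may have no unflipped variable outside $\mathrm{var}(T)$. This happens, for example, with a term made of the negated $T$-literal on $i_0$ plus $T$'s literal on some $j\in\mathrm{var}(T)$, or after unlucky earlier choices have flipped all of a term's outside variables. If you allow re-flips instead, the process can cycle. The only hypothesis that even guarantees a $0$-input exists on the face you are searching (with $i_0$ flipped and the rest of $\mathrm{var}(T)$ fixed) is literal-minimality of $T$ at $i_0$, and Step~2 never uses it. Second, even if you reach a $0$-input, an early flip can lose its sensitivity once a later flip falsifies another literal of the term it killed. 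Your rule of discarding that flip is not shown to keep $f(w)=0$. So the $(k+1)$-to-one charging is not justified as written.

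The missing idea is to replace the greedy walk by an inclusion-minimal flip set, using literal-minimality once.
\begin{itemize}
\item Fix $i_0\in B$. Deleting the literal on $i_0$ from $T$ changes $f$, so some $0$-input agrees with $x$ on $\mathrm{var}(T)\setminus\{i_0\}$ and differs at $i_0$.
\item Hence there are sets $D\subseteq[n]\setminus\mathrm{var}(T)$ with $f(x^{\{i_0\}\cup D})=0$. Take such a $D$ that is inclusion-minimal, and let $w=x^{\{i_0\}\cup D}$.
\item Every $v\in D$ is $0$-sensitive at $w$, since un-flipping $v$ gives a $1$-input by minimality. So is $i_0$, since flipping it back satisfies $T$. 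Thus $|D|+1\le s_0(f)$.
\item For $i\in B\setminus\{i_0\}$, $T_i$ is satisfied at $w^{\{i\}}$ unless $T_i$ contains $i_0$ or a variable of $D$. By read-$k$ and distinctness of the $T_i$, there are at most $(k-1)+k|D|$ such exceptions.
\end{itemize}
Therefore $s_0(f,w)\ge |B|-(k-1)(|D|+1)$, which gives $|B|\le s_0(f)+(k-1)s_0(f)=k\,s_0(f)$ and $\beta\le s_1(f)+k\,s_0(f)$. This is slightly stronger than the lemma. Your discard rule is exactly this minimality. Your fallback of merging one witness per $i\in B$ is unnecessary, because the single witness for $i_0$ suffices.
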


\begin{lemma}
Let $f:\{0,1\}^n\to\{0,1\}$ be computed by a minimal read-$k$ DNF
formula of maximum width $\beta$. Then
\[ (k+1)\left(\mathsf{N}(f)^2+ (2k+1)\mathsf{N}(\overline f)^2\right) \ge \Omega(\beta).\]
\end{lemma}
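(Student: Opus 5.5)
This follows by combining Lemma~\ref{lem:read-k} with Lemma~\ref{lem:sens-width-lb}. First, Lemma~\ref{lem:read-k} gives absolute constants $c_0,c_1>0$ with
\[ s_0(f) \le c_0 (2k+1)\,\mathsf{N}(\overline f)^2 \qquad\text{and}\qquad s_1(f) \le c_1 (k+1)\,\mathsf{N}(f)^2 . \]
Squaring and rearranging the $\Omega(\sqrt{\cdot})$ bounds is legitimate because both sides are nonnegative. If $s_0(f)=0$ or $s_1(f)=0$, the corresponding inequality holds trivially.

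Next, invoke Lemma~\ref{lem:sens-width-lb}, which states that $\beta \le s_1(f) + (k+1)s_0(f)$ for a minimal read-$k$ DNF of maximum width $\beta$. Substituting the two sensitivity bounds gives
\[ \beta \le c_1(k+1)\mathsf{N}(f)^2 + c_0(k+1)(2k+1)\mathsf{N}(\overline f)^2 \le \max\{c_0,c_1\}\,(k+1)\left(\mathsf{N}(f)^2 + (2k+1)\mathsf{N}(\overline f)^2\right). \]
This is exactly the claimed inequality, with the constant hidden in $\Omega(\beta)$ equal to $1/\max\{c_0,c_1\}$.

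The only point needing care is that both invoked lemmas apply to the same object. Each is stated for a minimal read-$k$ DNF computing $f$, and we use a single such formula for both, so the $\beta$ in Lemma~\ref{lem:sens-width-lb} is the width of the same formula to which Lemma~\ref{lem:read-k} is applied. Beyond this there is no real obstacle: it is pure bookkeeping of constants, with the factor $(k+1)$ pulled out of both terms. If $f$ is constant, then $\beta=0$ for the minimal (empty or trivial) DNF and the statement is vacuous.
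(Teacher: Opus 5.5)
Your proposal is correct and matches the paper's proof: the paper also just substitutes the sensitivity bounds of Lemma~\ref{lem:read-k} into the inequality $s_1(f)+(k+1)s_0(f)\ge\beta$ of Lemma~\ref{lem:sens-width-lb}. Tracking the constants explicitly is a useful addition, but it does not change the route. Those constants are absolute, independent of $k$, since they come from the $\sqrt{m/8}$ lower bound for $\mathrm{AND}_m$.
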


\begin{proof}
    By combining, Lemma~\ref{lem:read-k} with Lemma~\ref{lem:sens-width-lb} we get the desired result.
\end{proof}

\begin{theorem}\label{thm:readk-dnf-main}
Let $f:\{0,1\}^n\to\{0,1\}$ be a non-constant Boolean function computed by a minimal read-$k$ DNF formula with $\alpha$ terms and maximum width $\beta$. Then
\[ \widetilde{\deg}(f) \le O\!\left(k(k+1)^2(2k+1)^2\cdot \bigl(\mathsf{N}(f)\,\mathsf{N}(\overline f)\bigr)^6\right). \]
\end{theorem}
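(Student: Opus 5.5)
The plan is to bound $\widetilde{\deg}(f)$ by the number of relevant variables, and then to control that number through the two structural parameters of the minimal read-$k$ DNF, the size $\alpha$ and the maximum width $\beta$. Every variable on which $f$ depends appears in some term, so at most $\alpha\beta$ variables are relevant. Fixing the irrelevant variables arbitrarily does not change $f$. Hence
\[ \widetilde{\deg}(f)\le \deg(f)\le \alpha\beta . \]
So it suffices to bound $\alpha$ and $\beta$ separately in terms of $\mathsf N(f)$ and $\mathsf N(\overline f)$.

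For the width I use the preceding lemma, which combines Lemma~\ref{lem:read-k} with Lemma~\ref{lem:sens-width-lb}:
\[ \beta \le O\!\left((k+1)\bigl(\mathsf N(f)^2+(2k+1)\mathsf N(\overline f)^2\bigr)\right)\le O\!\left((k+1)(2k+1)\,M(f)^2\right). \]
For the size I rearrange Lemma~\ref{lem:ndeg-alpha}, namely $\mathsf N(\overline f)\ge\Omega(\sqrt{\alpha/(k\beta)})$, into
\[ \alpha\le O\!\left(k\beta\,\mathsf N(\overline f)^2\right). \]
Combining the two bounds gives
\[ \widetilde{\deg}(f)\le \alpha\beta\le O\!\left(k\beta^2\mathsf N(\overline f)^2\right)\le O\!\left(k(k+1)^2(2k+1)^2\,M(f)^4\,\mathsf N(\overline f)^2\right). \]

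The last step converts $M(f)^4\mathsf N(\overline f)^2$ into $(\mathsf N(f)\mathsf N(\overline f))^6$. For this I need both $\mathsf N(f)\ge 1$ and $\mathsf N(\overline f)\ge 1$. They hold because $f$ is non-constant. A degree-$0$ polynomial is a constant $c$. It cannot satisfy $|c|\le 1/3$ on some input and $|c|\ge 1$ on another. The same argument applies to $\overline f$.

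With both degrees at least $1$, we get $M(f)\le \mathsf N(f)\mathsf N(\overline f)$. We also get $\mathsf N(\overline f)\le\mathsf N(f)\mathsf N(\overline f)$. Therefore $M(f)^4\mathsf N(\overline f)^2\le(\mathsf N(f)\mathsf N(\overline f))^6$, which yields the stated bound.

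The main obstacle is the accuracy of Lemma~\ref{lem:ndeg-alpha}. That is where the real content sits: choosing $\alpha/(k\beta)$ pairwise disjoint terms and arguing that any sensitive block of the chosen input must meet each of them. I would double-check this step before relying on it. The rest is bookkeeping of constants. A tighter route would use the $O(\sqrt{\alpha}\,\beta)$ approximate-degree bound for OR-of-ANDs in place of $\alpha\beta$, but the crude bound already suffices for the stated exponent.
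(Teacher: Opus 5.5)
Your proposal is correct and follows the paper's proof essentially step for step: bound $\widetilde{\deg}(f)\le\alpha\beta$, use Lemma~\ref{lem:ndeg-alpha} to get $\alpha\le O(k\beta\,\mathsf{N}(\overline f)^2)$, bound $\beta$ via Lemmas~\ref{lem:read-k} and~\ref{lem:sens-width-lb}, and use $\mathsf{N}(f),\mathsf{N}(\overline f)\ge 1$ to absorb everything into $(\mathsf{N}(f)\mathsf{N}(\overline f))^6$. The only difference is cosmetic: you get $\alpha\beta$ by counting relevant variables, while the paper uses $D(f)\le C^{(0)}(f)C^{(1)}(f)\le\alpha\beta$. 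The step you flagged in Lemma~\ref{lem:ndeg-alpha} does hold, since any sensitive block of the chosen $1$-input must falsify, and hence meet, each of the pairwise disjoint chosen terms.
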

\begin{proof}
Notice that $C^{(1)}(f)\le \beta$ as any single satisfied term can certify $f=1$. Similarly, $C^{(0)}(f)\le \alpha$ as one falsified literal per term certifies $f=0$. 
Thus we get $\widetilde{\deg}(f)\le D(f)\le C^{(0)}(f)\,C^{(1)}(f)\le \alpha\beta$.

By Lemma~\ref{lem:sens-width-lb} (\cite{bafna2016sensitivity}),
$s_1(f)+(1+k)\,s_0(f)\ge\beta$. By Lemma~\ref{lem:read-k},
\[ s_1(f)\le O\bigl((k+1) \mathsf{N}(f)^2\bigr) \quad\text{and}\quad s_0(f)\le O\bigl((2k+1) \mathsf{N}(\overline{f})^2\bigr). \]

By Lemma~\ref{lem:ndeg-alpha} we know, 
\[ \alpha \beta \le O\bigl( k (\beta \mathsf{N}(\overline{f}))^2\bigr). \]

Also, since $f$ is non-constant, $\mathsf{N}(f),\mathsf{N}(\overline f)\ge 1$. 

Therefore, 
\begin{align*}
   \beta^2 &\le O\left((k+1)^2(N(f)^2+(2k+1)N(\overline{f})^2) ^2\right) \\ 
   &\le O\left((k+1)^2(2k+1)^2(N(f)^2N(\overline{f})^2)^2\right).
\end{align*}

So combining everything we get, 
\[ \widetilde{\deg}(f) \le O\bigl( k (\beta \mathsf{N}(\overline{f}))^2\bigr) \le O\bigl(k(k+1)^2(2k+1)^2(\mathsf{N}(f)\mathsf{N}(\overline{f}))^6\bigr). \]

\end{proof}

\subsection{Graph and \texorpdfstring{$k$}{k}-uniform hypergraph properties}
We use the construction from Lemma 1.2 from~\cite{chugh2022decisiontreecomplexityversus} to derive that approximate non-deterministic degree conjecture is true for Graph Properties. For easier readability, we repeat the proof in Appendix~\ref{Graph_property_appendix}.

\begin{corollary}[Lower bounds on degree and approximate non-deterministic degree of \texorpdfstring{$\mathcal{P}$}{P}]
\label{graph_property_result} Let $\mathcal{G}_n$ be the set of undirected graphs on $n$ vertices.
    For a graph property $\mathcal{P} :\mathcal{G}_n \rightarrow \{0,1\}$, \[\max\{\mathsf{N}(\mathcal{P})^6,\mathsf{N}(\overline{\mathcal{P}})^6\}= \Omega(n).\]
   
\end{corollary}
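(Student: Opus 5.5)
Our plan is to reduce the statement to the symmetric case, Corollary~\ref{symmetric_relation}, via restrictions and identifications, following Lemma 1.2 of~\cite{chugh2022decisiontreecomplexityversus}. Assume $\mathcal{P}$ is non-constant; otherwise the statement is vacuous, since a constant property has no nontrivial content. We will produce from $\mathcal{P}$, using only substitutions of edge variables by constants or by (possibly negated) copies of new variables, a symmetric function $g:\{0,1\}^m\to\{0,1\}$ with $m=\Omega(n)$ and $D(g)=\Omega(m)$. Lemma~\ref{lem:restrictions-projections} then gives $M(g)\le M(\mathcal{P})$. Corollary~\ref{symmetric_relation} gives $D(g)\le O(M(g)^6)$. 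Together these yield $\max\{\mathsf{N}(\mathcal{P})^6,\mathsf{N}(\overline{\mathcal{P}})^6\}\ge \Omega(D(g))=\Omega(n)$.

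For the construction, we split the vertex set and look at graphs built from a fixed "skeleton" plus a family of $m=\Theta(n)$ edge-disjoint, pairwise isomorphic gadgets. Each gadget is a star edge from a distinguished vertex, so that which gadgets are present only matters up to isomorphism, i.e.\ only the number of present gadgets matters. Concretely, one candidate is the star: fix a vertex $v$ and let $y_1,\dots,y_{n-1}$ control the edges $\{v,u_i\}$. All other edges are set to constants so that the $u_i$ are interchangeable. The induced function depends only on $|y|$, because any permutation of the $u_i$ is a graph isomorphism fixing the rest. So $g$ is symmetric.

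The main work is choosing the background so that $g$ is non-constant, since a symmetric non-constant function on $m$ bits has $D(g)=\Omega(m)$. Indeed, if $g$ changes value between weights $t$ and $t+1$, then fixing $t$ bits to $1$ and $m-t-1$ to $0$ appropriately embeds an OR or AND. A cleaner route is to use the known fact that $D(g)\ge \Omega(m)$ for non-constant symmetric $g$, which is used in Lemma~\ref{read_once}.

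To guarantee non-constancy we would walk along a chain of graphs from the empty graph to a graph $H$ on which $\mathcal{P}$ differs, adding one star at a time. Concretely, order the vertices and consider the graphs $G_j$ where vertices $1,\dots,j$ are fully joined to all previous vertices, so that $G_0$ is empty and $G_n$ is $K_n$. Since $\mathcal{P}(\emptyset)\ne\mathcal{P}(K_n)$ or some intermediate change occurs, there is a $j$ with $\mathcal{P}(G_{j-1})\ne\mathcal{P}(G_j)$. Step $j$ adds a star from vertex $j$ to the earlier $j-1$ vertices. To make those earlier vertices interchangeable, we would instead take threshold-type graphs (a clique $K$ plus independent set), where the leaves of the new star are symmetric. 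This gives a symmetric $g$ on $j-1$ bits.

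The hard part is that $j-1$ might be small. To handle it we would use the complementary construction (stars into the remaining $n-j$ vertices, or complementing the whole graph, which is a negation of all inputs and allowed by Lemma~\ref{lem:restrictions-projections}) to ensure $m\ge n/2$ in one of the two cases. This mirrors the case analysis of~\cite{chugh2022decisiontreecomplexityversus}, which we expect to be the most delicate step.
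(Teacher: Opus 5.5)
There is a genuine gap at the step where you need a change point. You assert that along the chain $G_0,\dots,G_n$ (a clique on $\{1,\dots,j\}$ plus isolated vertices) there is some $j$ with $\mathcal{P}(G_{j-1})\ne\mathcal{P}(G_j)$, ``since $\mathcal{P}(\emptyset)\ne\mathcal{P}(K_n)$ or some intermediate change occurs''. Nothing forces this. Non-constancy of $\mathcal{P}$ says nothing about its behaviour on this one-parameter family.

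For example, the property ``$G$ contains an induced path on three vertices'' is non-constant for $n\ge 3$, but it is $0$ on every $G_j$. More generally, the property ``$G$ is not a threshold graph'' (non-constant for $n\ge 4$) is $0$ on every threshold graph. So no chain built from clique-plus-independent-set backgrounds can produce a change point, and complementing does not help, since threshold graphs are closed under complement. For such $\mathcal{P}$ your symmetric $g$ is constant and the reduction yields nothing.

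By contrast, the size issue you flag as the delicate step is easy once a change point exists. Replicate the new star at the isolated vertices $j,\dots,n$, with one variable per center identified across its $j-1$ edges. This gives a symmetric non-constant function on $n-j+1$ bits, and one of $j-1$, $n-j+1$ is at least $n/2$.

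The missing idea is how to find a change point in a family where the relevant vertices are interchangeable, for an \emph{arbitrary} property. The paper does this by minimality. Without loss of generality the empty graph is not in $\mathcal{P}$; let $H\in\mathcal{P}$ have the minimum number $m>0$ of edges.

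\begin{itemize}
\item If $m>n/4$, fix all non-edges of $H$ to $0$. Every proper subgraph of $H$ fails $\mathcal{P}$, so this leaves $\mathrm{AND}_m$, and hence $\mathsf{N}(\mathcal{P})=\Omega(\sqrt{n})$.
\item If $m\le n/4$, let $I$ be the set of isolated vertices of $H$, so $|I|\ge n/2$. Deleting the star at a non-isolated vertex $v$ gives $G'\notin\mathcal{P}$, by minimality. Now let one variable per $u\in\{v\}\cup I$ control a copy of that star centered at $u$, with the same leaf set. The resulting function is symmetric, because the centers (all isolated in $G'$) are interchangeable; the leaves need not be. It is $0$ at weight $0$ and $1$ at weight $1$, and it has $\Omega(n)$ variables. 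Corollary~\ref{symmetric_relation} then finishes exactly as in your plan.
\end{itemize}

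Two minor points. A constant $\mathcal{P}$ makes the statement false rather than vacuous, so non-constancy must be an explicit hypothesis. Also, your embedding ``fix $t$ bits to $1$ and $m-t-1$ to $0$'' leaves a single free bit, so it embeds no OR or AND; rely instead on the fact that non-constant symmetric functions are evasive.
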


We now show that a similar bound can be proven for any Boolean function $\mathcal{P}$ that represents a $k$-uniform hypergraph properties.

\begin{theorem}[Uniform hypergraph properties]\label{thm:hypergraph}
Let $k\ge 1$ be a fixed constant, and let $\mathcal P:\{0,1\}^{\binom n k}\to\{0,1\}$ be a non-constant $k$-uniform hypergraph property on vertex set $[n]$.
Let $M(\mathcal P) = \max\{\apndeg(\mathcal P), \apndeg(\overline{\mathcal P})\}$.
Then
\[ M(\mathcal P) \ge \Omega(n^{1/6}). \]
Thus Conjecture~\ref{approx_ndeg_conjecture} holds for every fixed
$k$-uniform hypergraph property.
\end{theorem}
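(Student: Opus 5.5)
Since $\mathcal P$ is non-constant, I would find an embedded symmetric-type function of linear size in $n$, using restrictions and identifications of variables (Lemma~\ref{lem:restrictions-projections}). The graph-property case (Corollary~\ref{graph_property_result}, following Lemma 1.2 of~\cite{chugh2022decisiontreecomplexityversus}) works this way, and the plan is to lift that construction to $k$-uniform hypergraphs.

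\textbf{Step 1: a monotone chain.} Because $\mathcal P$ is non-constant, there are hypergraphs $H_0\subseteq H_1$ with $\mathcal P(H_0)\ne\mathcal P(H_1)$. Walk a monotone path from $\emptyset$ to the complete hypergraph $K^{(k)}_n$. The goal is to organize this walk vertex by vertex. Fix an ordering $v_1,\dots,v_n$ of $[n]$, and let $G_j$ be the complete $k$-uniform hypergraph on $\{v_1,\dots,v_j\}$. Then $\mathcal P$ is non-constant along $G_0,\dots,G_n$, or along a similar chain built from a complete hypergraph on $j$ vertices together with its complement structure. If $\mathcal P$ is constant on all such "clique plus isolated vertices" hypergraphs, I would instead use the sunflower-like families $\{e\supseteq S\}$ for $|S|=k-1$, as in the graph case, where stars are used.

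\textbf{Step 2: embed a symmetric function on $\Theta(n)$ variables.} I take a base hypergraph $B$ on $n/2$ vertices and keep $m=n/2$ further vertices $w_1,\dots,w_m$. I introduce one new variable $y_i$ per new vertex, and set every hyperedge consisting of $w_i$ together with a fixed $(k-1)$-subset $T$ of $B$-vertices equal to $y_i$. All other hyperedges are fixed to constants. This substitution is of the form allowed in Lemma~\ref{lem:restrictions-projections}. By isomorphism invariance (permuting the $w_i$), the resulting function $g(y)$ depends only on $|y|$, so it is symmetric on $m$ variables. I then choose $B$ and the constants so that $g$ is non-constant. This is done by interpolating between the two endpoints where $\mathcal P$ differs, for example by choosing the index $j$ in the chain of Step 1 where the value flips and letting $B$ encode $G_j$.

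\textbf{Step 3: conclude.} For a non-constant symmetric $g$ on $m$ bits, Corollary~\ref{symmetric_relation} gives $D(g)\le O(M(g)^6)$. A standard fact is that non-constant symmetric functions are evasive up to constants, $D(g)\ge\Omega(m)$, which I would also cite~\cite{survey_complexity_measures}. Hence $M(\mathcal P)\ge M(g)\ge\Omega(m^{1/6})=\Omega(n^{1/6})$. The conjecture then follows: $D(\mathcal P)\le\binom nk=O(n^k)\le O(M(\mathcal P)^{6k})$, as in Lemma~\ref{lem:high-sign-degree}.

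\textbf{Main obstacle.} The hard part is Step 2: guaranteeing that some such "one variable per vertex" embedding is non-constant with $m=\Theta(n)$. A flip of $\mathcal P$ might only occur between hypergraphs that differ in a single vertex's link. I would handle this in two ways. First, I would take the flip point on a chain that adds vertices' links one at a time. Second, rather than tying $w_i$ to a single $(k-1)$-set $T$, I would tie it to the whole $(k-1)$-uniform link pattern realized at the flip, using the same structure for each of $\Theta(n)$ fresh copies. Vertex-disjointness of the copies keeps them from interacting, and symmetry under permuting the copies keeps $g$ symmetric. Within each copy, the variable $y_i$ is identified across all link hyperedges at once, which is still a legal identification of variables.
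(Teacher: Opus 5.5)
There is a genuine gap. Your Step~2 needs a flip of $\mathcal P$ between some $H_0$ and $H_0$ plus the link of a single vertex, at which $\Theta(n)$ vertices are still isolated in $H_0$. Nothing in your plan guarantees such a flip, and Step~1 does not provide it:
\begin{itemize}
\item For $k\ge 2$ the clique chain can be constant. The property ``$H$ has exactly two hyperedges and they are disjoint'' is $0$ on every clique-plus-isolated-vertices hypergraph and on every sunflower with a $(k-1)$-core. Yet it is non-constant for $n\ge 2k$.
\item When the chain does flip, the flip can sit at $j=n-o(n)$; for ``$H$ is complete'' it is at $j=n-1$. Then $G_j$ has only $o(n)$ isolated vertices, and there is no room for $\Theta(n)$ disjoint copies of the link of $v_{j+1}$.
\end{itemize}
``Interpolating between the endpoints'' and the sunflower fallback are not arguments. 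Your ``main obstacle'' paragraph only explains how to build the symmetric embedding \emph{once} a suitable flip is in hand.

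The missing idea is to choose the flip through a minimal witness and split on its size. Assume $\mathcal P(\emptyset)=0$ (otherwise pass to $\overline{\mathcal P}$), and let $H$ satisfy $\mathcal P$ with the minimum number $m$ of hyperedges.
\begin{itemize}
\item If $m>n/(2k)$, fix every hyperedge outside $E(H)$ to $0$. By minimality no proper sub-hypergraph of $H$ satisfies $\mathcal P$, so the restriction is $\mathrm{AND}_m$. This gives $M(\mathcal P)\ge\Omega(\sqrt m)=\Omega(\sqrt n)$ directly, with no vertex-indexed embedding at all.
\item If $m\le n/(2k)$, at most $km\le n/2$ vertices are non-isolated, so the isolated set $I$ has $|I|\ge n/2$. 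Pick $v$ in some hyperedge, let $L_v$ be its link, and let $H'$ be $H$ minus all hyperedges through $v$; then $\mathcal P(H')=0$ by minimality. Fix $E(H')$ to $1$ and all remaining hyperedges to $0$, except that for each $u\in I\cup\{v\}$ you identify all hyperedges $S\cup\{u\}$ with $S\in L_v$ into one variable. This is precisely your ``tie $w_i$ to the whole link pattern'' construction, but now non-constancy is guaranteed: $g(0)=\mathcal P(H')=0\ne 1=\mathcal P(H)=g(1_{\{v\}})$. Also $g$ is symmetric, because neither $H'$ nor any $S\in L_v$ meets $I\cup\{v\}$.
\end{itemize}
Your Step~3 then finishes the proof as you wrote it.
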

\begin{proof}
    When $k=1$ we obtain the symmetric case. So using Corollary~\ref{symmetric_relation} we get the result.

    Now we consider the case $k \geq 2$. Replacing $\mathcal P$ by $\overline{\mathcal P}$ does not change $D(\mathcal P)$ or $M(\mathcal P)$, so we may assume that the empty hypergraph does not satisfy $\mathcal P$.

    Let $H$ be a hypergraph satisfying $\mathcal P$ with the minimum possible number $m>0$ of hyper-edges. We split into two cases.

\begin{enumerate}
    \item Case $m > \frac{n}{2k}$: Restrict $\mathcal P$ by setting every hyper-edge outside $E(H)$ to $0$. On the remaining $m$ variables, the restricted function is $\mathrm{AND}_m$.
Note that any proper sub-hypergraph of $H$ has fewer than $m$ hyper-edges, and therefore cannot satisfy the property $\mathcal P$ by minimality of $H$. As the restrictions do not increase $M$, and since
$\apndeg(\mathrm{AND}_m)=\Omega(\sqrt m),$
we get
\[ M(\mathcal P) \ge M(\mathrm{AND}_m) = \Omega(\sqrt m) = \Omega(\sqrt {n/2k}). \]

Since $k$ is constant, we obtain $\Omega(\sqrt n)$ bound. 

\item Case $m \leq \frac{n}{2k}$:
Let $I$ be the set of isolated vertices (i.e., vertices with zero degree) of $H$. Since each hyperedge contains $k$ vertices, the number of non-isolated vertices is at most $km\le n/2$. Hence $|I|\ge n/2$. Now choose a vertex $v$ that lies in some hyper-edge of $H$. 

Let the link of $v$ consists of all $(k-1)$ sets obtained by removing $v$ from hyper-edges that contain $v$,
\[ L_v = \{S\in \binom{[n]\setminus\{v\}}{k-1}: S\cup\{v\}\in E(H)\}. \]

Since we are working with $k$-uniform hyper-graphs for $k\geq 2$, $L_v$ is not empty. Let $H' = H \setminus \{S\cup\{v\}: S\in L_v\}$.
Since $H'$ has fewer than $m$ hyper-edges, minimality of $H$ implies
$\mathcal P(H')=0$. Now define $I' := I\cup\{v\}$. So, $|I'|\ge n/2+1$. Also, no hyper-edge of $H'$ intersects $I'$, and every
$S\in L_v$ is disjoint from $I'$. Thus for every subset $A\subseteq I'$, define an $k$-uniform hypergraph $H_A$ by $E(H_A) = E(H') \cup \{S\cup\{u\}: S\in L_v,\ u\in A\}$.

Now we can embed a function on $I'$ variables such that the function is 1 iff we have the property $\mathcal P(H_A)$, i.e.,
\[ g:\{0,1\}^{|I'|}\to\{0,1\} \textit{\quad so that \quad }  g(1_A)=\mathcal P(H_A), \]
where $1_A$ is indicator function.
Note that this function $g$ is obtained from $\mathcal P$ by restricting variables so that the edges of $H'$ are fixed to $1$, and  all irrelevant edges are fixed to $0$, and for each $u\in I'$, all hyper-edges $S\cup\{u\}$ with $S\in L_v$ are identified with the variable of $I'$. By Lemma~\ref{lem:restrictions-projections} we get that $M(g)\le M(\mathcal P)$. Thus we need to lower bound $M(g)$.

Note that $g$ is non-constant: $g(0^{I'})= \mathcal{P}(H')=0$ and $g(1_{\{v\}}) = \mathcal{P}(H)=1$. 
We claim that this function $g$ is symmetric on $|I'|$. This is because every $u \in I'$ is isolate in $H'$ and for each $u \in I'$ the edges added in $H_A$ is exactly $\{S \cup \{u\} : S \in L_v\}$.  Hence all variables play identical roles and $g(1_A)$ is a function of $|A|$. Since $|I'|\ge \frac{n}{2}$+1, the symmetric case lower bound gives $M(g)=\Omega(|I'|^{1/6})=\Omega(n^{1/6})$. By Lemma~\ref{lem:restrictions-projections}, $M(g)\le M(\mathcal P)$.
Therefore $M(\mathcal P)\ge\Omega(n^{1/6})$.
\end{enumerate}
\end{proof}

\newpage

\bibliographystyle{alpha}
\bibliography{references}
\appendix

\section{Functions with bounded alternation number proof}
\label{alternating_number_appendix}

For convenience, we restate the theorem.
\begin{theorem*}
    [Restatement of Theorem~\ref{alternating_number_proof_2}] 
    Let $f:\{0,1\}^n \rightarrow \{0,1\}$ be alternation number $k$. Then, \[bs(f) \le O\left(k \max\{\mathsf{N}(f)^2,\mathsf{N}(\overline{f})^2\}\right).\]
\end{theorem*}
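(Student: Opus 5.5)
The plan is to fix an input $x$ with $bs(f,x)=bs(f)=b$, together with pairwise disjoint sensitive blocks $B_1,\dots,B_b$. We may shrink each block to a \emph{minimal} sensitive block. By symmetry (replacing $f$ by $\overline f$, which preserves $\operatorname{alt}$ and $\max\{\mathsf N(f),\mathsf N(\overline f)\}$) we may assume $f(x)=0$. Split each block as $B_i=B_i^0\cup B_i^1$, where $B_i^0=\{j\in B_i: x_j=0\}$ and $B_i^1=\{j\in B_i:x_j=1\}$. The goal is to show that only $O(k\max\{\mathsf N(f)^2,\mathsf N(\overline f)^2\})$ blocks can exist, by packaging groups of blocks into restrictions that compute $\mathrm{OR}$ or $\mathrm{AND}$ (up to negations). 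We then invoke the $\Omega(\sqrt m)$ lower bounds $\mathsf N(\overline{\mathrm{OR}_m})\ge\sqrt{m/8}$ (Lemma~\ref{nor_lower_bound}) and $\mathsf N(\mathrm{AND}_m)=\Omega(\sqrt m)$, and use monotonicity under restrictions and identifications (Lemma~\ref{lem:restrictions-projections}).

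First step: identify variables blockwise. For each $i$ substitute $x_j\mapsto y_i$ for $j\in B_i^0$ and $x_j\mapsto 1-y_i$ for $j\in B_i^1$, fixing all coordinates outside $\bigcup_i B_i$ to their values in $x$. This gives $g:\{0,1\}^b\to\{0,1\}$ with $M(g)\le M(f)$, $g(0^b)=0$ and $g(e_i)=1$ for all $i$. This alone is not enough, since $g$ can be arbitrary on heavier inputs. Alternation is what controls those inputs. To use it, I will walk along a monotone path in the original cube, because the substitution above does not preserve monotone paths when some $B_i^1\neq\emptyset$.

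Concretely, following~\cite{lin2016sensitivityconjecturelogrankconjecture}, I first move from $x$ down to $x^{\downarrow}:=x^{\bigcup_i B_i^1}$. The point $x^{\downarrow}$ is below $x$, and from it the flips $B_1,\dots,B_b$ become upward flips of $B_i^0\cup$ (restoring $B_i^1$). Then I consider the monotone chain
\[
x^{\downarrow}=w_0\preceq w_1\preceq\cdots\preceq w_b,\qquad w_t=w_{t-1}\text{ with } B_t \text{ set as in } x^{B_t}.
\]
The points $w_t$ can be ordered into one monotone path from $0^n$ to $1^n$. Hence $f$ changes value along $w_0,\dots,w_b$ at most $k$ times, and the index set $\{1,\dots,b\}$ splits into at most $k+1$ consecutive segments on which $f(w_t)$ is constant.

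The key claim is that every segment has length $O(\max\{\mathsf N(f)^2,\mathsf N(\overline f)^2\})$. On a segment $[a,c]$, fix all blocks outside it as at $w_a$ and identify each block of the segment into one variable as above. This gives a function of $c-a$ variables whose value on the "all-off" point is the constant segment value. I want to show that it equals $\mathrm{OR}$ or $\mathrm{AND}$ up to negations. The singleton flips switch the value because the blocks are minimal sensitive blocks of $x$ and remain sensitive at the segment points. Heavier flips stay on the switched side because any two comparable points of the monotone path lying inside one constant segment cannot sandwich a change without creating extra alternation. Then the $\sqrt{m}$ lower bounds give $c-a=O(M(f)^2)$. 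Summing over the at most $k+1$ segments gives $b\le O(k\,M(f)^2)$, which is the statement.

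The main obstacle is exactly this segment claim. Sensitivity of $B_t$ is guaranteed at $x$, not at the shifted points $w_{t-1}$, and the blocks with mixed parts ($B_i^0,B_i^1$ both nonempty) break monotonicity. My first attempt would be to handle three classes separately: pure-up blocks ($B_i^1=\emptyset$), pure-down blocks ($B_i^0=\emptyset$), and mixed blocks. Pure-up blocks work directly from $x$, and pure-down blocks work symmetrically by complementing inputs. For mixed blocks, minimality gives $f(x^{B_i^0})=f(x^{B_i^1})=f(x)$, which lets me route through the intermediate point $x^{B_i^1}$ and charge each mixed block to either its zero part or its one part. If needed, I would bound block sizes separately via Lemma~\ref{minimal_sensitive_lower_bound} ($|B_i|=O(\mathsf N^2)$) to absorb any overhead, at the cost of only constant factors.
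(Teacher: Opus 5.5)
There is a genuine gap, and it is the step you flagged as the main obstacle: the segment claim is not just unproven but false as formulated. On a segment $[a,c]$ where $f(w_t)$ is constant, your restriction $h$ is based at $w_a$. Switching on the first block of the segment lands exactly on $w_{a+1}$, so $h(e_{a+1})=f(w_{a+1})=f(w_a)=h(0)$. In fact $h$ equals $h(0)$ at every prefix of the chain. Constant segments are therefore exactly where the chain exhibits \emph{no} sensitivity, and $h$ cannot be $\mathrm{OR}$ or $\mathrm{AND}$.

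This already fails for monotone increasing $f$ with all blocks pure-up. There the chain is $x\preceq x^{B_1}\preceq x^{B_1\cup B_2}\preceq\cdots$, with one change followed by a single long constant segment, and your restriction based at $w_1=x^{B_1}$ is the constant-$1$ function. The $\mathrm{OR}$ lives at $w_0=x$, across the change.

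The root problem is that the $w_t$ are in general neither $x$ nor any $x^{B_i}$. (Your two descriptions of $w_t$ also disagree on whether $B_t^1$ is restored.) So the alternation count along one chain carries no information about the blocks of $x$. Likewise, ``heavier flips stay on the switched side'' does not follow, since most inputs of $h$ lie off the chain.

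Your fallback does not repair this. Even for pure-up blocks, $S\mapsto f(x^{\bigcup_{i\in S}B_i})$ need not be $\mathrm{OR}$: take $f$ equal to $1$ exactly on Hamming weight $1$, at $x=0^n$, where $\operatorname{alt}(f)=2$. ``Charging'' mixed blocks to one of their parts is the whole difficulty, not a detail. And Lemma~\ref{minimal_sensitive_lower_bound} bounds the size of the blocks, not their number.

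The missing idea is a mechanism that converts bounded alternation into a bound at a specific input. The paper gets this from max and min terms together with induction on $k$, as in Theorem~\ref{zebra}. If $x\preceq u$ for a max term $u$, every zero of $u$ is sensitive, and the subcube above $u$ is $\mathrm{OR}$ or its negation on $S_0(u)$, so $|S_0(u)|=O(M(f)^2)$. Meanwhile the subcube below $u$ has alternation at most $k-1$, so induction bounds $x$; the case $x\succeq d$ for a min term $d$ is symmetric.

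Any other $x$ has $f$ constant on $\{z\preceq x\}\cup\{z\succeq x\}$. This forces every minimal block to be mixed, with $f(x^{D_i})=f(x^{U_i})=f(x)$. It also places every $x^{U_i}$ below a max term and every $x^{D_i}$ above a min term.

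The paper then runs the Lin--Zhang counting argument. For fixed $j$, the bounds on the subcubes above and below $x^{B_j}$, whose alternations sum to at most $k$, show that only $O(kM(f)^2)$ of the sets $D_i,U_i$ change the value at $x^{B_j}$. Averaging over the $2\ell\times\ell$ incidence matrix gives some $T_i\in\{U_i,D_i\}$ with $bs(f,x^{T_i})\ge \ell-O(kM(f)^2)$. The first case gives $bs(f,x^{T_i})=O(kM(f)^2)$, hence $\ell=O(kM(f)^2)$.
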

\begin{proof}
    Unlike zebra functions, here, there could be a monotone path $P$ with alternation number less than $k$. Any input $x \in \{0,1\}^n$ will satisfy at least one of the following three conditions:
    \begin{enumerate}
        \item $x \preceq u$ for some max term $u$.
        \item $x \succeq d$ for some min term $d$.
        \item There is no max term $u$ and min term $d$ such that $x \preceq u$ and $x \succeq d$.
    \end{enumerate}
         
     The third condition would be when there is a monotone path with no alternation. Also note that when the alternation number is odd, $f(0^n) \ne f(1^n)$, and hence every monotone path will have alternating number at least 1. This would mean that every $x \in \{0,1\}^n$ will follow either of the first two conditions and hence $bs(f,x) \le C(f,x) \le O(k \max\{\mathsf{N}(f)^2,\mathsf{N}(\overline{f})^2\})$ following same proof idea from Theorem~\ref{zebra}.

    Now, we only need to look at the case when $k$ is even, that is $k=2t$ for some $t$, and $x \in \{0,1\}^n$ is on a monotone path on which $f$ is constant which means $f$ is constant on $\{z \preceq x\}$ and $\{z \succeq x\}$. Therefore, $f(x)=f(0^n)=f(1^n)$. Let $bs(f,x)=\ell$ and $B_i$ with $i \in [\ell]$ be the minimal sensitive blocks of $x$. Therefore, $f(x^{B_i}) \ne f(x)$ and hence $f(x^{B_i}) \ne f(1^n)$ and $f(x^{B_i}) \ne f(0^n)$ which means $x^{B_i}$ follows both of the first two conditions and therefore, there exists $u$ such that $x^{B_i} \preceq u$ and there exists $d$ such that $d \preceq x^{B_i}$. Hence $bs(f,x^{B_i}) \le O(k  \max\{\mathsf{N}(f)^2,\mathsf{N}(\overline{f})^2\})$.
    
    Let $D_i =\{ j \in B_i \mid x_j=0 \}$ and $U_i = \{ j \in B_i \mid x_j=1\}$. $x^{U_i} \preceq x^{B_i} \preceq u$ and $d \preceq x^{B_i} \preceq x^{D_i}$. Since $B_i$ is the minimal sensitive block for $x$, $f(x)=f(x^{D_i})=f(x^{U_i})$. $U_i$ and $D_i$ cannot be empty. Lets say $U_i$ is empty then $D_i=B_i$ and hence $x \preceq x^{D_i}$ and $x^{D_i}=x^{B_i}$ and so $x \preceq x^{B_i}$ but this contradicts the assumption for third condition. Similarly, lets say $D_i$ is empty, then $U_i=B_i$ and hence $x^{B_i}=x^{U_i} \preceq x$ which would again contradict the assumption.  
    
    Let us now consider a fixed sensitive block $B_j$. The goal is to find the number of $i$'s such that $i\ne j$ and $f(x^{B_j})=f(x^{D_i \cup B_j})$ along with $f(x^{B_j})=f(x^{U_i \cup B_j})$ and then use it to bound the block sensitivity of $x^{U_i}$ or $x^{B_i}$ in terms of block sensitivity of $x$. Let $f'$ be the restriction of $f$ such that the zeros of $x^{B_j}$ are fixed and $f''$ is the restriction of $f$ such that $1's$ in $x^{B_j}$ are fixed and hence $f'$ is defined on the sub-cube $\{z: z \preceq x^{B_j}\}$ and $f''$ is defined on the sub-cube $\{z: z \succeq x^{B_j}\}$. For $i \ne j$, $x^{D_i \cup B_j}$ is obtained by taking $x^{B_j}$ and flipping its zeros at indices in $D_i$ to 1 and hence $x^{B_j} \preceq x^{D_i \cup B_j}$. This implies that $x^{D_i \cup B_j} \in \{z: z \succeq x^{B_j}\}$.
    For the function $f''$, since $C(f'',x^{B_j})$ is the certificate complexity on input $x^{B_j}$, there is an assignment $C': S \rightarrow \{0,1\}$ of size $C(f'',x^{B_j})$ that makes the function $f''$ on the sub-cube constant. We have,  
    \[ C(f'',x^{B_i}) \le O(\operatorname{alt}(f'')M(f'')^2) \le O(\operatorname{alt}(f'')M(f)^2). \]
    For every $D_i$ that does not intersect with $S$, $f(x^{D_i \cup B_j})=f(x^{B_j})$ because none of the variables corresponding to $S$ are flipped and therefore, $C'$ on $S$ still continues to certify the value of $f(x^{D_i \cup B_j})$ to be equal to $f(x^{B_j})$. The maximum number of $D_i$, with $i \ne j$ that can intersect with $S$ is $C(f'',x^{B_j})$ because all $D_i$ are disjoint. Let $\mathcal{D}_i = \left\{D_i \mid i \ne j \text{ and } f(x^{D_i})=f(x^{D_i \cup B_j})\right\}$. Therefore, \[
        \left|\mathcal{D}_i\right|\ge  \ell -1 - C(f'',x^{B_j}) \ge \ell -1 - O(\operatorname{alt}(f'')M(f)^2). 
    \] 

    Similarly, we can have same analysis for $f'$ and in this case we will look at $U_i$, with $i \ne j$ such that $f(x^{B_j})= f(x^{U_i \cup B_j})$. Let $\mathcal{U}_i = \{U_i \mid i \ne j \text{ and } f(x^{U_i})=f(x^{U_i \cup B_j})\}$
    Therefore, \[
        \left|\mathcal{U}_i\right|\ge  \ell -1 - C(f',x^{B_j}) \ge  \ell -1 - O(\operatorname{alt}(f')M(f)^2). 
    \] 
    But since $\operatorname{alt}(f')+\operatorname{alt}(f'')=k=2t$, we get,
    \begin{equation}\label{ui_di_bound}
        \left|\mathcal{U}_i\right|+\left|\mathcal{D}_i\right| \ge (2\ell -2 - O(tM(f)^2)).
    \end{equation}

    Now, this can also be expressed through a $2 \ell \times \ell$ matrix. Each row of this matrix corresponds to either $U_i$ or $D_i$ and each column corresponds to $B_j$. Let the entries of the matrix be as follows:
    \begin{itemize}
        \item The entries corresponding to row $U_i$ or $D_i$ and column $B_i$ are assigned to 1.
        \item The entries corresponding to row $U_i$ or $D_i$ and column $B_j$ with $i \ne j$ are assigned to 1 if $f(x^{T_i \cup B_j}) \ne f(x^{B_j})$ where $T_i$ is either $U_i$ or $D_i$.
        \item Rest of the entries are 0.
    \end{itemize}
    Therefore, total number of zeros in this matrix are at least $\ell (2\ell -2-O(tM(f)^2))$. We get this by multiplying the bound in equation~\ref{ui_di_bound} by $\ell$. Therefore, the number of ones are at most $ 2\ell +  \ell O(t  M(f)^2)$. On average, number of ones in each row is at most $1 + O(t M(f)^2)$. Lets take a row $T_i$ which has at most $1+O(t M(f)^2)$ ones. For this row, since the ones correspond to $f(x^{B_j})\ne f(x^{T_i \cup B_j})$, the number of entries with $f(x^{B_j})= f(x^{T_i \cup B_j})$ will be at least $\ell - 1 -O(tM(f)^2)$. From earlier relations, we also know that $f(x^{T_i}) = f(x)$ and $f(x) \ne f(x^{B_j})$. The number of disjoint sensitive blocks for $x^{T_i}$ will be at least $\ell - 1 -O(tM(f)^2)$. Additionally, since these entries did not include $B_i$, and $x^{T_i}$ is also sensitive to $B_i \backslash T_i$, we need to add 1 to earlier number. Therefore, we have a lower bound on block sensitivity of $bs(f,x^{T_i})$.
    \[ bs(f,x^{T_i}) \ge \ell -1 -O(tM(f)^2) +1 \ge \ell - O(t M(f)^2).  \]
    But since $x^{T_i}$ satisfies one of the first two conditions, we have an upper bound on $bs(f,x^{T_i})$.
   \[ bs(f,x^{T_i}) \le O(k M(f)^2). \] 

   Putting it all together, since $bs(f,x)=\ell$, we get,
   \[ bs(f,x) - O\left(\frac{k}{2} M(f)^2\right) \le O(k M(f)^2). \]
   Therefore, $bs(f,x) \le O(k M(f)^2)$ for all $x \in \{0,1\}^n$.
\end{proof}

\section{Graph Property Proof}
\label{Graph_property_appendix}

\begin{corollary*}
    [Restatement of Corollary~\ref{graph_property_result}] Let $\mathcal{G}_n$ be the set of undirected graphs on $n$ vertices.
    For any non-constant graph property $\mathcal{P} :\mathcal{G}_n \rightarrow \{0,1\}$, \[\max\{\mathsf{N}(\mathcal{P})^6,\mathsf{N}(\overline{\mathcal{P}})^6\} \ge \Omega(n).\]
   
\end{corollary*}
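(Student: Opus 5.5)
The plan is to specialize the argument of Theorem~\ref{thm:hypergraph} to $k=2$; this is essentially the construction of Lemma 1.2 of~\cite{chugh2022decisiontreecomplexityversus}. Complementing $\mathcal P$ changes neither side of the inequality, so we assume the empty graph does not satisfy $\mathcal P$. Let $H$ be a graph with $\mathcal P(H)=1$ that has the minimum possible number $m\ge 1$ of edges. We split into two cases according to whether $m>n/4$ or $m\le n/4$.

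\emph{Case $m>n/4$.} Fix every edge outside $E(H)$ to $0$. By minimality of $H$, every proper subgraph of $H$ fails $\mathcal P$, so the restricted function on the $m$ edge variables of $H$ is $\mathrm{AND}_m$. Lemma~\ref{lem:restrictions-projections} together with $\mathsf N(\mathrm{AND}_m)=\Omega(\sqrt m)$ then gives $\mathsf N(\mathcal P)=\Omega(\sqrt n)$. This already exceeds the required $\Omega(n^{1/6})$.

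\emph{Case $m\le n/4$.} At most $2m\le n/2$ vertices are non-isolated, so the set $I$ of isolated vertices of $H$ has $|I|\ge n/2$. Pick a non-isolated vertex $v$ with neighbourhood $L_v\neq\emptyset$, and let $H'$ be $H$ with all edges at $v$ removed. By minimality, $\mathcal P(H')=0$. Set $I'=I\cup\{v\}$. For $A\subseteq I'$, let $H_A$ be $H'$ together with the edges $\{u,w\}$ for $u\in A$, $w\in L_v$. The map $A\mapsto H_A$ is realized by a projection: fix the edges of $H'$ to $1$, fix all other edges not of the form $\{u,w\}$ with $u\in I'$, $w\in L_v$ to $0$, and identify all edges $\{u,w\}$ with $w\in L_v$ with a single variable $y_u$. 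Lemma~\ref{lem:restrictions-projections} then gives $M(g)\le M(\mathcal P)$ for $g(1_A)=\mathcal P(H_A)$. Moreover, $g$ is non-constant, since $g(0)=\mathcal P(H')=0$ and $g(1_{\{v\}})=\mathcal P(H)=1$.

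\emph{Key step: $g$ is symmetric.} For $A,A'\subseteq I'$ with $|A|=|A'|$, any permutation of $I'$ mapping $A$ to $A'$, extended by the identity elsewhere, is a graph isomorphism from $H_A$ to $H_{A'}$. This holds because every vertex of $I'$ is isolated in $H'$ and $L_v\cap I'=\emptyset$ (the neighbours of $v$ are non-isolated and differ from $v$). Invariance of $\mathcal P$ under isomorphism then gives $g(1_A)=g(1_{A'})$.

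With symmetry established, Corollary~\ref{symmetric_relation} gives $D(g)\le O(M(g)^6)$. Since $g$ is a non-constant symmetric function on $|I'|\ge n/2$ variables, $D(g)=\Omega(|I'|)$ by the standard lower bound for non-constant symmetric functions already used in Lemma~\ref{read_once}. Combining, $M(\mathcal P)^6\ge M(g)^6=\Omega(n)$.

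The main obstacle is this symmetry step: one must check that the identification of variables really produces the graphs $H_A$ and that vertex permutations preserve $H'$. The subtle point is ensuring that $v$ itself behaves like the other vertices of $I'$. This works because $v$ is isolated in $H'$ and is not in $L_v$, since the graph has no self-loops.
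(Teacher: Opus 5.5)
Your proof is correct and follows essentially the same route as the paper's: the same case split on the minimum edge count $m$ at $n/4$, the $\mathrm{AND}_m$ embedding in the first case, and, in the second case, moving the star at a non-isolated vertex onto the isolated vertices (together with that vertex itself) to obtain a non-constant symmetric function on at least $n/2$ variables, to which Corollary~\ref{symmetric_relation} is applied. The differences are cosmetic. You lower-bound $D(g)$ via evasiveness of non-constant symmetric functions, where the paper uses the von zur Gathen--Roche degree bound. You also spell out the projection and the isomorphism check more carefully than the paper does.
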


 \begin{proof}
        Assume that empty graph is not in $\mathcal{P}$. Now, the proof will be done case by case on the least number of edges $m>0$ in any graph in $\mathcal{P}$. Taking an arbitrary graph $G=(V,E) \in \mathcal{P}$ with $m$ edges, and representing it as a string $x_G \in \{0,1\}^{\binom{n}{2}}$ such that $(x_G)_{\{u,v\}}=1$ if and only if there is an edge between vertex $u$ and $v$. 
        \begin{enumerate}
            \item Case $m > \frac{n}{4}$:
            By taking the restriction of $\mathcal{P}$ where $x_{\{u,v\}}=0$ if $\{u,v\} \notin E$ and noting that if any of the other variables (these would be $m$ variables) are made 0, then that graph can never be in this restriction of $\mathcal{P}$ since then it would violate the minimality condition. So the restriction now becomes an AND function on the $m$ unset variables. 
            We know that approximate non-deterministic degree of AND function on $m$ variables is $\Omega(\sqrt{m})$ and also that approximate non-deterministic degree does not increase under restriction. Therefore, $\mathsf{N}(\mathcal{P})> \Omega(\sqrt{n/4})$.
            \item Case $m \le \frac{n}{4}$:
            Since the number of edges of $G$ is less than $n/4$, the number of vertices with zero degree is at least $n/2$. Let there be $k$ vertices $v_1,v_2,\dots,v_k$ with non-zero degrees and $n-k$ vertices $v_{k+1},\dots,v_n$ with degree zero. Now, lets say the vertex $v_k$ is connected to $v_1,\dots,v_d$. If all these edges are removed, then the graph that is obtained, let that graph be $G'$. $G'$ clearly has less than minimum edges $m$ to be in $\mathcal{P}$ and therefore, 
            \[G' \notin \mathcal{P}.\]
            Now, let $G_j$ be the graph obtained by adding the edges $\{v_j,v_1\},\{v_j,v_2\},\dots,\{v_j,v_n\}$ to graph $G'$ where $j \in \{k,k+1,\dots,n\}$. All these graphs are isomorphic to $G$ and so $G_j \in \mathcal{P}$. Also, $G_k$ and $G$ are the same.
            Now, let $f: \{0,1\}^{n-k+1} \rightarrow \{0,1\}$ and $x=(x_k,x_{k+1},\dots,x_n) \in \{0,1\}^{n-k+1}$. Define graph $G_x$ on $n$ vertices such that $f(x)=\mathcal{P}(G_x)$. $G_x$ contains edges of $G'$ and $\{v_i,v_1\},\{v_i,v_2\},\dots,\{v_i,v_d\}$ for each $i \in \{k,\dots\,n\}$ such that $x_i=1$.
            Let us first see that $f$ is a non-constant function.
            Let us look at the value of $f(0^{n-k+1})$ and since in this case $x=(0,\dots,0)$, the only edges in $G_{0^{n-k+1}}$ is the edges of $G'$ and therefore, $\mathcal{P}(G_{0^{n-k+1}})=0$ and $f(0^{n-k+1})=0$. For a string $|x|=1$, there exists some $j \in \{k,k+1,\dots,n\}$ such that $x_j=1$ and then $G_x = G_j$ and therefore for such $x$, $f(x)=1$ since $G_j \in \mathcal{P}$ as seen earlier. And thus, $f$ is a non-constant function.
            Now, lets see that $f$ is a symmetric function. For any $x,x'$, such that $|x|=|x'|$, $G_x$ and $G_{x'}$ are isomorphic and since $\mathcal{P}$ is graph property, $f(x)=\mathcal{P}(G_x)=\mathcal{P}(G_{x'})=f(x')$.
            We already know from Corollary~\ref{symmetric_relation} that $\deg(f) \le \max\{\mathsf{N}(f)^6,\mathsf{N}(\overline{f})^6\}$
            and from~\cite{von1997polynomials} that degree for a non-constant symmetric Boolean function is linear in $n-k+1$ and since $k \le n/2$, $\deg(f)=\Omega(n-k+1)=\Omega(n)$.

            $f$ is obtained from $\mathcal{P}$ by restriction and identification of variables: Every variable corresponding to the pair $\{v_i,v_j\}$ such that $1 \le i<j \le k-1$ and $\{i,j\}$ is not an edge in $G'$ is set to 0. For each $i \in \{k,k+1,\dots,n\}$, the variables corresponding to the pairs $\{v_i,v_1\},\dots,\{v_i,v_d\}$ are identified.
            Since approximate non-deterministic degree does not increase under restriction and identification of variables, therefore, $\max\{\mathsf{N}(\mathcal{P})^6,\mathsf{N}(\overline{\mathcal{P}})^6\} \ge \Omega(n)$.
        \end{enumerate}
        Taking into account, both the above case, we have 
        \[\max\{\mathsf{N}(\mathcal{P})^6,\mathsf{N}(\overline{\mathcal{P}})^6\} \ge \Omega(n).\]
    \end{proof}

\end{document}